\journal{.}
 \newcommand{\0}{\bm{0}}
 \newcommand{\bA}{{\bf{A}}}
 \newcommand{\bbeta}{{\bm{\beta}}}
 \newcommand{\bb}{{\bf{b}}}
 \newcommand{\bD}{{\bm{D}}}
 \newcommand{\bDel}{{\bm{\Delta}}}
 \newcommand{\beps}{{\bm{\varepsilon}}}
 \newcommand{\beeta}{\bm{\eta}}
 \newcommand{\bgam}{\bm{\gamma}}
 \newcommand{\bGam}{\bm{\Gamma}}
 \newcommand{\bI}{{\bf{I}}}
 \newcommand{\bOm}{\bm{\Omega}}
 \newcommand{\bom}{\bm{\omega}}
 \newcommand{\bSig}{\bm{\Sigma}}
 \newcommand{\btau}{\bm{\tau}}
 \newcommand{\bx}{{\bm{x}}}
 \newcommand{\bX}{{\bf{X}}}
 \newcommand{\bxi}{{\bm{\xi}}}
 \newcommand{\by}{{\bf{y}}}
 \newcommand{\bz}{{\bf{z}}}
 \newcommand{\diag}{\text{diag}}
 \newcommand{\E}{\mathbb{E}}
 \newcommand{\EC}{\mbox{{EC}}}
 \renewcommand{\Pr}{\mathbb{P}}
 \newcommand{\R}{\mathbb{R}}
 \newcommand{\SN}{{\mbox{SN}}}
 \newcommand{\SUE}{{\mbox{SUE}}}
 \newcommand{\SUN}{{\mbox{SUN}}}
 \newcommand{\SUT}{{\mbox{SUT}}}
\let\oldproofname=\proofname
\renewcommand{\proofname}{\rm\bf{\oldproofname}}
\theoremstyle{plain}
\newtheorem{exa}{Example}
\newtheorem{rmk}{Remark}
\newtheorem{prop}{Proposition}
\newtheorem{lemma}{Lemma}
\newtheorem{corollary}{Corollary}
\theoremstyle{definition}
\begin{document}
\begin{frontmatter}

\title{Conjugacy properties of multivariate unified skew-elliptical distributions}

\author[1]{Maicon J. Karling}
\author[2]{Daniele Durante}
\author[1]{Marc G. Genton}

\address[1]{Statistics Program, King Abdullah University of Science and Technology, Thuwal, Saudi Arabia}
\address[2]{Department of Decision Sciences and Institute for Data Science and Analytics, Bocconi University, Milan, Italy
}

\cortext[mycorrespondingauthor]{Corresponding author. Email address: maicon.karling@kaust.edu.sa}

\begin{abstract}
The broad class of multivariate unified skew-normal (SUN) distributions has been recently shown to possess important conjugacy properties. When used as priors for the coefficients vector in probit, tobit, and multinomial probit models, these distributions yield posteriors that still belong to the  SUN family. Although this result has led to advancements in Bayesian inference and computation, its applicability beyond likelihoods associated with fully-observed, discretized, or censored realizations from multivariate Gaussian models remains yet unexplored. This article covers such a gap by proving that the wider family of multivariate unified skew-elliptical  (SUE) distributions, which extends SUNs to more general perturbations of elliptical densities, guarantees conjugacy for broader classes of models, beyond those relying on fully-observed, discretized or censored Gaussians. Such a result leverages the closure under linear combinations, conditioning and marginalization of SUE to prove that this family is conjugate to the likelihood induced by multivariate regression models for fully-observed, censored or dichotomized realizations from skew-elliptical distributions. This advancement enlarges the set of models that enable conjugate Bayesian inference to general formulations arising from elliptical and skew-elliptical families, including the multivariate Student's $t$ and skew-$t$, among others.
\end{abstract}

\begin{keyword} 
Bayesian statistics \sep
multivariate unified skew-elliptical distribution \sep
unified skew-normal distribution.
\MSC[2020] Primary 62F15 \sep
Secondary 62H10
\end{keyword}

\end{frontmatter}


\section{Introduction\label{sec:1}}

Conjugacy is a central property within Bayesian statistics. A prior $\mathcal{p}(\bbeta)$ for the vector of parameters $\bbeta \in \mathcal{B} \subset \mathbb{R}^p$ is conjugate to the likelihood $\mathcal{p}(\by \mid \bbeta)$ of the observed data $\by  \in \mathcal{Y} \subset \mathbb{R}^n$, if the induced posterior $\mathcal{p}(\bbeta \mid \by)$ still belongs to the same class of distributions of the assumed prior. This important property implies that when $\mathcal{p}(\bbeta)$ belongs to a known and tractable family, Bayesian inference under the induced posterior can also leverage the tractability of such a family, thereby circumventing the  challenges that arise in Bayesian computation and inference under intractable posterior distributions. Despite the relevance of this property, identifying known and tractable conjugate priors for the likelihoods induced by commonly-used statistical models is often challenging. Remarkably, until recently, conjugacy in regression settings was mainly established for univariate or multivariate normal responses $\by$ with Gaussian priors for the coefficients $\bbeta$, thus hindering potentials of conjugate Bayesian inference beyond this specific setting.

To address the aforementioned gap,  \citet{durante19} has recently shown that also general probit models admit conjugate priors, with these priors belonging to the known family of unified skew-normal (SUN) distributions  \citep{arellano06a}. Such a class includes multivariate Gaussians as a special case, and extends these symmetric distributions through the perturbation of the corresponding density via a factor that coincides with the cumulative distribution function of a multivariate normal, thereby inducing skewness. Crucially, SUNs  (i) have a known normalizing constant and moment-generating function, (ii) admit a tractable stochastic representation, and (iii) preserve the closure under linear combinations, conditioning and marginalization of the original multivariate Gaussians \citep{arellano06a,azzalini14,arellano22}. These properties facilitate Bayesian inference under the induced SUN posterior and, consequently, have motivated rapid subsequent research to establish SUN conjugacy for broader classes of models beyond classical probit representations. Relevant advancements along these lines include dynamic multivariate probit \citep{fasano2021closed}, multinomial probit \citep{fasano2022class}, probit Gaussian processes \citep{cao2022scalable}, tobit models \citep{anceschi23} and, more generally, any representation inducing likelihoods proportional to the kernel of a SUN  \citep{anceschi23}. Such a latter result crucially includes also important skewed extensions of classical probit, multinomial probit, and tobit \citep[e.g,][]{chen1999new,sahu03,arellano2005skew,bolfarine2007skew,lachos07,bazan10,lachos2010likelihood,hutton11,benavoli2020skew,galarza2022algorithm}, along with earlier conjugacy results for the parameters of skew-normal distributions \citep{arellano2009shape,vieira2015nonparametric,canale2016bayesian}; 
 see also \citet{fasano2022scalable} and \citet{onorati2022extension} for additional results in binary regression settings, and \citet{durante2023skewed} for an extension of the Bernstein--von Mises theorem which clarifies the crucial role played by skewed extensions of multivariate Gaussians in Bayesian approximations and asymptotic theory.

Although the above contributions substantially enlarge the class of routinely-implemented statistical models that admit conjugate priors, all these formulations are based on fully-observed, discretized or censored Gaussian or skew-normal representations. As discussed above, this class includes multivariate linear regression along with probit, multinomial probit, and tobit models, among others, thus covering a core subset of formulations that are often employed within statistics. Nonetheless, in several applications, there is still interest in extensions of these representations which replace the Gaussian or skew-normal assumption for the error terms with alternative distributions. Popular examples in applications are generalizations of linear regression, probit, multinomial probit, and tobit models that rely on Student's $t$ or skew-$t$ error terms to incorporate robustness \citep[e.g.,][]{zellner1976bayesian,lange1989robust,albert1993bayesian,arellano2012student,marchenko12,dagne12,garay2015bayesian,matos2018multivariate,lachos2021heckman,lachos22,valeriano23}. More generally, several important contributions \citep[e.g.,][]{spanos1994modeling,branco02,islam14,barros2018generalized,zhang23} have also focused on multivariate elliptical \citep{fang90} and skew-elliptical \citep{azzalini1999statistical,branco01,azzalini2003distributions,fang03,arellano10b,adcock20} distributions, which include multivariate Gaussians, skew-normals, Student's $t$ and skew-$t$ as special cases, thereby providing a large class of practically-relevant models. However, despite the relevance of such a family, there is a lack of general, unified and tractable solutions for Bayesian inference within these settings. This is arguably due to the fact that, to date, no general conjugacy results have been established for generic models arising from fully-observed, censored, or dichotomized realizations from elliptical and skew-elliptical distributions.

Motivated by the above discussion, we cover such a gap by proving that multivariate unified skew-elliptical (SUE) distributions \citep{arellano06a,arellano10b} are the conjugate priors to the aforementioned class of models. From a technical perspective, the derivation of this result is based on specifying a general joint SUE distribution for the parameters $\bbeta$ and the noise vector $\beps$ of the response  $\by$, and then leveraging the closure under linear combination, marginalization and conditioning of SUE to prove that both $\mathcal{p}(\bbeta)$ and $\mathcal{p}(\bbeta \mid \by)$ are SUE, whenever $\mathcal{p}(\by \mid \bbeta)$ is proportional to a suitable likelihood induced by a fully-observed, censored or dichotomized elliptical or skew-elliptical distribution. This focus on the joint distribution serves only as a technical strategy to identify, under a classical Bayesian setting, which SUE priors are conjugate to specific likelihoods, thereby yielding SUE posterior distributions $\mathcal{p}(\bbeta \mid \by) \propto \mathcal{p}(\bbeta)\mathcal{p}(\by \mid \bbeta)$ via the standard application of the Bayes rule. These novel results are obtained within Section~\ref{sec:3}, leveraging both available and newly-derived SUE properties outlined in Section~\ref{sec:2}. As discussed in Sections~\ref{sec:3.1}--\ref{sec:3.3} (see Examples~\ref{exsun}--\ref{exsut3}), these advancements include, as a special case, the conjugacy properties derived in   \citet{anceschi23} for SUNs, while extending these properties to other models of potential practical interest, such as, for example, generalizations of  linear regression, probit and tobit models to Student's $t$ or skew-$t$ error terms. For these formulations, we show that the corresponding conjugate priors are multivariate unified skew-$t$ (SUT) \citep{arellano10b,wang24}. Concluding remarks can be found in Section~\ref{sec:4}, where we also clarify that besides the practical consequences for some special cases of the general results in Section~\ref{sec:3}, the conjugacy properties we derive are of broader and independent interest in expanding the theoretical analysis of the SUE family.


\section{General overview and properties of multivariate unified skew-elliptical (SUE)  distributions\label{sec:2}}

Sections~\ref{sec:2.1}--\ref{sec:2.2} provide an overview of the SUE family along with its special cases, whereas Section~\ref{sec:2.3} comprises both available closure properties and newly-derived ones that are required to prove  the novel conjugacy results within Section \ref{sec:3}. To ease the presentation, in the following, we adopt a different notation between random variables and the associated realizations only when the distinction between the two is not clear from the context.


\subsection{Multivariate unified skew-elliptical distributions}\label{sec:2.1}
Multivariate unified skew-elliptical (SUE)  distributions   \citep[e.g.,][]{arellano06a,arellano10b} arise from the perturbation of elliptical densities  \citep[e.g.,][]{fang90}, defined as
\begin{eqnarray*}\label{denskew}
    f_m(\bar{\bz} - \bxi; \bOm, g^{(m)}) =  |\bOm|^{-1/2} g^{(m)}[(\bar{\bz} - \bxi)^\top \bOm^{-1} (\bar{\bz} - \bxi)], \qquad \bar{\bz} \in \R^m, 
\end{eqnarray*}
where $\bxi \in \R^m$ denotes a location parameter, $\bOm \in \R^{m \times m}$ corresponds to a symmetric positive-definite dispersion matrix, and $g^{(m)}(\cdot): \R^+ \to \R^+$ characterizes the so-called density generator. Recalling  \citet[][]{fang90}, different choices of such a density generator $g^{(m)}$ yield a broad class of routinely-implemented elliptical densities, covering multivariate Gaussians, Student's $t$, Cauchy, logistic and Laplace,  among others. As such, for a generic vector $\bar{\bz}$ from an elliptical distribution it is customary to adopt the general notation $\bar{\bz} \sim \EC_{m}(\bxi, \bOm, g^{(m)})$, with $\bxi$, $\bOm$ and $g^{(m)}$ parameterizing such a distribution. Refer to Chapters 1--3 of  \citet[][]{fang90} for an in-depth treatment of the multivariate elliptical family, including details on the definition and properties of the density generator  $g^{(m)}$.

Due to its generality, the multivariate elliptical family has been subject of a substantial interest that has led to the development of broader classes of distributions introducing skewness in the above representation. An important and comprehensive example in this direction is provided by the SUE family \citep[][]{arellano06a,arellano10b}. Leveraging a parameterization that agrees with the unified skew-normal (SUN) sub-family introduced by \citet{arellano06a},  and with the general selection representation  in \citet{arellano06b} (see also equation 19 in \citet{arellano10b} and Section 7.1.3 of \citet{azzalini14}) a random vector $\bz \in \R^m$ has a multivariate unified skew-elliptical (SUE) distribution, i.e., $\bz \sim \SUE_{m,q}(\bxi, \bOm, \bDel, \btau, \bar\bGam, g^{(m+q)})$, if its density $\mathcal{p}(\bz)$ can be expressed as
\begin{eqnarray}\label{pdfSUE}
   \mathcal{p}(\bz) = f_{m}(\bz - \bxi; \bOm, g^{(m)}) \frac{F_q[\btau + \bDel^\top \bar\bOm^{-1}\bom^{-1} (\bz - \bxi); \bar\bGam - \bDel^\top \bar\bOm^{-1} \bDel, g_{Q(\bz)}^{(q)}]}{F_q(\btau; \bar\bGam, g^{(q)})}, \qquad {\bz} \in \R^m, 
\end{eqnarray}
where $f_{m}(\bz - \bxi; \bOm, g^{(m)})$ corresponds to the previously-defined elliptical density --- evaluated at $\bz$ --- with density generator $ g^{(m)}$, location $\bxi \in \R^m$, and positive-definite dispersion matrix $\bOm \in \R^{m \times m}$ with associated scales and correlations in  $\bom =\diag(\bOm)^{1/2} \in \R^{m \times m}$ and $\bar{\bOm} = \bom^{-1} \bOm \bom^{-1} \in \R^{m \times m}$, respectively.  In addition, $\btau \in \R^q$ is a truncation parameter, $\bDel \in \R^{m \times q}$ denotes a shape matrix, whereas $\bar\bGam \in \R^{q \times q}$ corresponds to a positive-definite dispersion matrix. Finally, $Q(\bz)$ is a quadratic form defined as $Q(\bz)= (\bz - \bxi)^\top \bOm^{-1} (\bz - \bxi) \in \R^+$, whereas {$g_{Q(\bz)}^{(q)}(u) = g^{(m+q)}[u + Q(\bz)]/g^{(m)}[Q(\bz)]$} denotes the elliptical conditional density generator. In \eqref{pdfSUE}, the quantity responsible for inducing skewness is   the $q$-dimensional centered elliptical cumulative distribution function  $F_q(\, \cdot \, ; \bar\bGam - \bDel^\top \bar\bOm^{-1} \bDel, g_{Q(\bz)}^{(q)})$ with density generator $g_{Q(\bz)}^{(q)}$, and dispersion matrix defined in the second argument. As clarified in  Lemma~\ref{lemRedundant}, the density in~ \eqref{pdfSUE} includes as a special case the one of classical elliptical distributions, which can be obtained by setting $\btau =\0$ and $\bDel =\0$. Such a result highlights that  $\btau$ and $\bDel $ play a crucial role in inducing skewness. Let us also emphasize that, in this article, the notation $\bar\bGam$  is used to denote the Pearson-correlation matrix defined as $\bar\bGam = \bgam^{-1} \bGam \bgam^{-1}$, where $\bgam = \diag(\bGam)^{1/2}$. 

To further clarify the  SUE construction, it shall be emphasized that the density expressed in \eqref{pdfSUE} can be directly obtained from the selection representation
\begin{eqnarray}\label{distY}
    \bz \stackrel{d}{=} (\bar{\bz} \mid \bar{\bz}_0 > \0), \qquad  \qquad   \begin{bmatrix}
      \bar{\bz}\\
        \bar{\bz}_0
    \end{bmatrix}
    \sim
    \EC_{m+q} \left(
        \begin{bmatrix}
            \bxi\\
            \btau
        \end{bmatrix},
        \begin{bmatrix}
            \bOm & \bom \bDel\\
            \bDel^\top \bom & \bar\bGam
        \end{bmatrix},
        g^{(m+q)}
    \right),
\end{eqnarray}
where $\bar{\bz}_0 > \0$ indicates the event ``each component of $\bar{\bz}_0$ is positive''. More specifically, under the above representation and leveraging the closure under marginalization of elliptical distributions, a direct application of the Bayes rule yields
\begin{eqnarray}\label{pdfbY}
     \mathcal{p}(\bz) = \mathcal{p}(\bar{\bz}=\bz)  \frac{\Pr(- \bar{\bz}_0 \leq \0 \mid \bar{\bz} = \bz)}{\Pr(- \bar{\bz}_0 \leq \0)}=  f_{m}(\bz - \bxi; \bOm, g^{(m)})  \frac{\Pr(- \bar{\bz}_0 \leq \0 \mid \bar{\bz} = \bz)}{\Pr(- \bar{\bz}_0 \leq \0)},  \qquad {\bz} \in \R^m.
\end{eqnarray}
Letting {$\bom \bOm^{-1}=\bar\bOm^{-1}\bom^{-1} $} and {$\bom \bOm^{-1}\bom=\bar\bOm^{-1}$}, and recalling again the closure properties of elliptical distributions~\citep[][]{fang90,azzalini14}, it follows that $- \bar{\bz}_0    \sim    \EC_{q}(  -\btau, \bar\bGam,g^{(q)})$ and {$(- \bar{\bz}_0 \mid \bar{\bz} = \bz)  \sim    \EC_{q}(-\btau-  \bDel^\top \bom \bOm^{-1}({\bz}-\bxi),\bar\bGam- \bDel^\top \bom \bOm^{-1} \bom \bDel,g_{Q(\bz)}^{(q)})$}; see also \citet{arellano06b}. As a consequence, the two probabilities at the denominator and numerator of  \eqref{pdfbY} coincide with the cumulative distribution functions, evaluated at $\btau$ and $\btau+  \bDel^\top  \bar\bOm^{-1}\bom^{-1}({\bz}-\bxi)$, of the centered elliptical distributions    $\EC_{q}(  \0, \bar\bGam,g^{(q)})$ and   $\EC_{q}(\0,\bar\bGam- \bDel^\top \bar\bOm^{-1}  \bDel,{g_{Q(\bz)}^{(q)}})$, respectively, thereby allowing to recover \eqref{pdfSUE}.

Besides providing additional insights on the quantities defining the joint density $\mathcal{p}(\bz)$ in  \eqref{pdfSUE}, the selection representation in \eqref{distY}--\eqref{pdfbY}  is also useful to derive the cumulative distribution function $\mathcal{P}(\bz)$ of the SUE distribution having density as in   \eqref{pdfSUE}. More specifically, since the SUE random vector  $ \bz \sim \SUE_{m,q}(\bxi, \bOm, \bDel, \btau, \bar\bGam, g^{(m+q)})$ admits the equivalent selection representation in \eqref{distY}, it follows that 
\begin{eqnarray*}
\mathcal{P}(\bz)=\Pr(\bar{\bz} \leq \bz, -\bar{\bz}_0 \leq \0)/\Pr(- \bar{\bz}_0 \leq \0)=\Pr(\bar{\bz} -\bxi \leq \bz -\bxi, -\bar{\bz}_0 +\btau \leq \btau)/F_q(\btau; \bar\bGam, g^{(q)}).
\end{eqnarray*}
 Therefore, leveraging again the  closure under linear combinations of elliptical distributions  \citep[][]{fang90,azzalini14}, it is possible to derive the following closed-form expression for the cumulative distribution function
\begin{equation}\label{cdfSUE}
\mathcal{P}(\bz) = 
 \frac{F_{m+q}\left(
 \begin{bmatrix}
     \bz - \bxi\\
     \btau
 \end{bmatrix};
 \begin{bmatrix}
     \bOm & - \bom \bDel\\
     -\bDel^\top \bom & \bar\bGam
 \end{bmatrix}
 , g^{(m+q)}\right)}{F_q\left(\btau; \bar\bGam, g^{(q)}\right)}.
\end{equation}
Notice that the vector $\bar\bz_0$ is often called the latent part of the distribution, whereas $q$ is the latent dimension.

Before discussing important SUE examples, we shall emphasize that, as a result of the closure under linear combinations of elliptical and SUE distributions \citep{fang90,arellano10b}, an alternative to representation  \eqref{distY} is
  \begin{eqnarray}\label{distZ}
        \bz \stackrel{d}{=} \bxi + \bom \bz^{\star}, \qquad \qquad \bz^{\star} \stackrel{d}{=} (\tilde\bz \mid \tilde\bz_0 + \btau > \0), \qquad
    \begin{bmatrix}
        \tilde\bz\\
        \tilde\bz_0
    \end{bmatrix}
    \sim \EC_{m+q} \left(
        \begin{bmatrix}
            \0\\
            \0
        \end{bmatrix},
        \begin{bmatrix}
            \bar{\bOm} &  \bDel\\
            \bDel^\top & \bar\bGam
        \end{bmatrix},
        g^{(m+q)}
    \right).
    \end{eqnarray}
Such a representation is the one adopted by \cite{arellano06a} for the sub-family of SUN distributions and is particularly convenient for deriving the mean vector and covariance matrix of $\bz$.  More specifically, leveraging \eqref{distZ}, the law of total expectation, and the previously-discussed closure properties of elliptical distributions, we have that
  \begin{eqnarray}\label{meanSUE}
\E({\bz}) =\bxi + \bom \E (\tilde\bz \mid \tilde\bz_0 + \btau > \0)=\bxi + \bom \bDel  \bar\bGam^{-1}\E (\tilde\bz_0 \mid \tilde\bz_0 + \btau > \0).
    \end{eqnarray}
Recalling \citet{arellano10b}, a related reasoning yields the following covariance matrix   \begin{eqnarray}\label{varSUE}
\mbox{var}({\bz}) =\psi\bOm+ \bom \bDel\left[ \bar\bGam^{-1} \mbox{var}(\tilde\bz_0 \mid \tilde\bz_0 + \btau > \0)\bar\bGam^{-1} -\psi \bar{\bGam}^{-1}\right]\bDel^\top \bom,
    \end{eqnarray}
where $\psi$ is a scalar whose specific form can be obtained from the derivations in     \citet{arellano10b}. The above expressions clarify that moments of SUE random vectors can be directly obtained from those of multivariate truncated elliptical distributions \citep[e.g.,][]{arismendi2017multivariate,moran2021new,galarza22,morales2022moments,valeriano2023moments}. In addition, \eqref{varSUE} shows that to enforce a lack of correlation among the entries in $\bz$, it is not sufficient to impose suitable diagonal or block-diagonal structures within $\bOm$. Rather, these constraints should be combined with additional ones, e.g., on the shape matrix $\bDel$. Such a result is useful in Section~\ref{sec:3} to derive examples of practically-meaningful priors and likelihoods inducing SUE posterior distributions. To this end, Sections~\ref{sec:3.1}, \ref{sec:3.2} and \ref{sec:3.3} state general conjugacy properties and then specialize these results to the two most widely-implemented examples of  SUE distributions that are presented in detail in Section~\ref{sec:2.2} below, namely, multivariate unified skew-normals (SUN) \citep{arellano06a}  and multivariate unified skew-$t$ (SUT) \citep{wang24}. The results for  SUN  clarify that the conjugacy properties derived by \citet{durante19,fasano2022class} and \citet{anceschi23} can be obtained as a special case, and under a different proof technique, of the more general  SUE framework introduced in the present article. Conversely, the conjugacy results stated for SUT are a novel contribution that extends to a broader class of models of potential practical interest the findings in \citet{song2016bayesian} and \citet{zhang23} on specific Student's $t$ linear regressions and multivariate probit formulations based on skew-elliptical link functions, respectively.


\subsection{Relevant sub-classes of multivariate unified skew-elliptical distributions}\label{sec:2.2}

The SUN and SUT families arise from  skewed perturbations of multivariate Gaussians and Student's $t$ densities, respectively. As such, these formulations are arguably the most relevant and practically-impactful sub-classes in the SUE family. In addition, recalling  \citet{arellano06a},  \citet{arellano10b}, and \citet{wang24}, both  SUN and SUT  admit additive stochastic representations which allow for i.i.d.\ sampling under posterior distributions belonging to these two sub-classes, thus facilitating Bayesian inference; see also \citet{yin2024stochastic} for a recent extension of these stochastic representations to more general multivariate skew-elliptical distributions, beyond  SUN and SUT. Sections~\ref{sec:2.21}--\ref{sec:2.22} provide a concise overview of  SUN and SUT sub-classes, respectively. A more extensive treatment can be found in, e.g., \citep{arellano06a,wang24}.


\subsubsection{Multivariate unified skew-normal distributions}\label{sec:2.21}

The SUN family has been introduced by  \citet{arellano06a} to provide a single class of distributions capable of unifying several extensions of the original multivariate skew-normal \citep{azzalini1996multivariate}. Relevant examples of representations that belong to such a wide class are extended multivariate skew-normals \citep{arnold2002skewed,arnold2000hidden} and closed skew-normals \citep{gonzalez2004additive,gupta2004multivariate}, among others. Recalling   \citet{arellano06a}, a random vector $\bz \in \R^m$ has multivariate unified skew-normal (SUN) distribution, i.e., $\bz \sim \SUN_{m,q}(\bxi, \bOm, \bDel, \btau, \bar\bGam)$, if its density $\mathcal{p}(\bz)$ can be expressed as
\begin{eqnarray}\label{pdfSUN}
\mathcal{p}(\bz)=    \phi_{m}(\bz - \bxi; \bOm) \frac{\Phi_q[\btau + \bDel^\top \bar\bOm^{-1}\bom^{-1} (\bz - \bxi); \bar\bGam - \bDel^\top \bar\bOm^{-1} \bDel]}{\Phi_q(\btau; \bar\bGam)}, \qquad \bz \in \R^m,
\end{eqnarray}
where $\bxi$, $ \bOm$, $\btau$,  $\bDel$ and $\bar\bGam$ have the same role and interpretation of the corresponding quantities within the general SUE representation in~\eqref{pdfSUE}, whereas   $\phi_{m}(\bz - \bxi; \bOm)$,  $\Phi_q[\btau + \bDel^\top \bar\bOm^{-1}\bom^{-1} (\bz - \bxi); \bar\bGam - \bDel^\top \bar\bOm^{-1} \bDel]$ and $\Phi_q(\btau; \bar\bGam)$ denote the density and cumulative distribution functions, evaluated at $\bz - \bxi$, $\btau + \bDel^\top \bar\bOm^{-1}\bom^{-1} (\bz - \bxi)$ and $\btau$, respectively, of the centered multivariate Gaussians with covariance matrices $ \bOm \in \R^{m \times m}$, $ \bar\bGam - \bDel^\top \bar\bOm^{-1} \bDel  \in \R^{q \times q}$ and  $\bar\bGam  \in \R^{q \times q}$. Notice that, when $\bDel=\0$, the above density coincides with that of the multivariate Gaussian $\mbox{N}_m(\bxi,\bOm)$, which can be therefore recovered as a special case, irrespectively of the value of $\btau$ and $\bar{\bGam}$.

Although the above representation is originally derived in \citet{arellano06a} under a selection representation similar to \eqref{distZ} applied to an underlying Gaussian  vector, the density in \eqref{pdfSUN} can also be derived directly from~\eqref{pdfSUE} under a suitable choice of the density generators. In particular, for $u \geq 0$, define  $g^{(m)} = \phi^{(m)}(u)=(2\pi)^{-m/2} \exp(-u/2) $, $g^{(q)} = \phi^{(q)}(u)=(2\pi)^{-q/2} \exp(-u/2) $ and recall also that, in the particular Gaussian setting, the conditional generator {$\phi^{(q)}_{Q(\by)}$} coincides with the unconditional one {$\phi^{(q)}$}. Then, recalling related derivations in \citet{arellano10b}, and replacing these density generators in  \eqref{pdfSUE}, directly yields expression \eqref{pdfSUN}, thus clarifying that SUNs are special cases of SUE distributions.  

Recalling the discussion in Section~\ref{sec:1}, the   SUN family has been at the basis of important recent advancements in conjugate Bayesian inference for a broad class of routinely-implemented representations relying on fully-observed, discretized or partially-discretized Gaussian and multivariate skew-normal models  \citep{durante19,fasano2022class,anceschi23}. As discussed in Section~\ref{sec:3}, these conjugacy properties can be recovered as a special case of those we derive for the general  SUE family, which, in turn, allow us to extend such results to larger classes, including the SUT one introduced in Section~\ref{sec:2.22}.


\subsubsection{Multivariate unified skew-$t$ distributions}\label{sec:2.22}
The success of the SUN family has motivated several extensions aimed at deriving similar skewed representations for other sub-classes of the elliptical family. A noticeable and natural generalization is provided by the class of SUT distributions \citep[e.g.,][]{wang24} which can be obtained by replacing Gaussians with suitably-defined multivariate Student's $t$ within the original selection representation of SUNs. Recalling, e.g., \citet{wang24}, this yields a density for the multivariate unified skew-$t$ (SUT) random vector $\bz \sim \SUT_{m,q}(\bxi, \bOm, \bDel, \btau, \bar\bGam, \nu)$, defined as
\begin{eqnarray}\label{pdfSUT}
\mathcal{p}(\bz)=    t_m(\bz - \bxi; \bOm, \nu) \frac{T_q [\alpha_{\nu, Q(\bz)}^{-1/2} \{\btau + \bDel^\top \bar\bOm^{-1}\bom^{-1} (\bz - \bxi)\}; \bar\bGam - \bDel^\top \bar\bOm^{-1} \bDel, \nu+m]}{T_q(\btau; \bar\bGam, \nu)}, \qquad \bz \in \R^m,
\end{eqnarray}
where $\alpha_{\nu, Q(\bz)} = [\nu + Q(\bz)]/{(\nu + m)}$, $Q(\bz) = (\bz - \bxi)^\top \bOm^{-1} (\bz - \bxi)$, and $\nu >0$ are the degrees of freedom. The remaining parameters have, likewise, similar roles and related interpretations to those in \eqref{pdfSUE}. Analogously, $ t_m(\bz - \bxi; \bOm, \nu) $, $T_q [{\alpha_{\nu, Q(\bz)}^{-1/2}} \{\btau + \bDel^\top \bar\bOm^{-1}\bom^{-1} (\bz - \bxi)\}; \bar\bGam - \bDel^\top \bar\bOm^{-1} \bDel, \nu+m]$ and $T_q(\btau; \bar\bGam, \nu)$ denote to the density and cumulative distribution functions, evaluated at the corresponding first argument, respectively, of the centered multivariate Student's $t$ distributions with scale matrices $ \bOm \in \R^{m \times m}$, $ \bar\bGam - \bDel^\top \bar\bOm^{-1} \bDel  \in \R^{q \times q}$ and  $\bar\bGam  \in \R^{q \times q}$, and degrees of freedom $\nu$, $\nu+m$ and $\nu$, respectively. Notice that, for $q = 1$, one retrieves the multivariate extended skew-$t$  in \citep{arellano10a}. Moreover, when $\btau = \0$ and $\bDel = \0$, the numerator and denominator in \eqref{pdfSUT} coincide, and, therefore, the density reduces to that of a multivariate  Student's $t$ distribution $\mathcal{T}_m(\bxi,\bOm, \nu)$ with location $\bxi$, scale $\bOm$, and degrees of freedom $\nu$. This implies that the multivariate Student's $t$ is obtained as a special case of SUT.

As for the SUN, also the SUT density within \eqref{pdfSUT} can be derived from~\eqref{pdfSUE} under a suitable choice for the density generators. In particular, let $g^{(m)} = t_v^{(m)}(u)=c(\nu,m)[1+u/\nu]^{-(\nu+m)/2}$, and $g^{(q)} = t_v^{(q)}(u)=c(\nu,q)[1+u/\nu]^{-(\nu+q)/2}$, where the generic quantity $c(a,b)$ is defined as $c(a,b)=\Gamma[(a+b)/2]/[\Gamma(a/2)(\pi a)^{b/2}]$, while $\Gamma(\cdot)$ denotes the usual gamma function. Then, recalling, for example,  \citet{arellano10b}, by replacing the generic density generators in~\eqref{pdfSUE} with those defined above and with the induced conditional density generator, yields, as a result of straightforward calculations, the SUT density in \eqref{pdfSUT}. Interestingly, as clarified in, e.g., \citet{wang24}, when $\nu \rightarrow \infty$, this density reduces to \eqref{pdfSUN}, thereby establishing a direct connection between SUT and SUN distributions. This  suggests that the conjugacy properties derived in  \citet{durante19,fasano2022class} and \citet{anceschi23} for SUN might extend to SUT, and, more generally, to SUE distributions. Two promising results in this direction have been derived in  \citet{song2016bayesian} and  in \citet{zhang23}, but only with a focus on Student's $t$ linear regression and on specific multivariate binary data settings. Leveraging the SUE properties in Section~\ref{sec:2.3}, we prove in Section~\ref{sec:3} that  SUE conjugacy holds for a substantially larger class of models which further embraces formulations of potential interest in practice.


\subsection{Properties of unified skew-elliptical distributions}\label{sec:2.3}

Lemmas \ref{linSUE}--\ref{lemRedundant} below state several central properties of SUE distributions that are at the core of the novel conjugacy results derived in Section~\ref{sec:3}. More specifically, Lemmas \ref{linSUE}--\ref{condSUE} establish closure under linear combinations, marginalization and conditioning of SUE distributions. All these properties have appeared also in  \citet{arellano10b},  but under a different parameterization. Conversely, Lemmas \ref{suecond}--\ref{lemRedundant} state novel results that are useful to study  SUE conjugacy under broad classes of models and to derive special cases of potential interest in practical contexts.

\begin{lemma}\label{linSUE}
Let $\bz \sim {\normalfont \SUE}_{m,q}(\bxi, \bOm, \bDel, \btau, \bar \bGam, g^{(m+q)})$.  In addition, denote with $\bA \in \R^{r \times m}$ a matrix with rank $r \leq m$, and let  $\bm{b} \in \R^r$ be a vector of constants. Then 
\begin{eqnarray*}
\bA \bz + \bm{b} \sim {\normalfont \SUE}_{r,q}(\bA \bxi + \bm{b}, \bA \bOm \bA^\top,  \bDel_{\bA}, \btau, \bar\bGam, g^{(r+q)}),
\end{eqnarray*}
where $ \bDel_\bA=\bom_\bA^{-1} \bA\bom \bDel$ with $\bom_\bA=\mbox{\normalfont diag}(\bA \bOm \bA^\top)^{1/2}$. Moreover, let $\bz_{C}\in \R^{|C|}$ be a generic sub-vector comprising the entries in $\bz$ with indexes in $C \subset \{1, \ldots, m\}$, and denote with $\bxi_C \in \R^{|C|}$, $\bOm_{CC} \in \R^{|C|\times |C|}$ and $\bDel_{C\cdot} \in \R^{|C| \times q}$ the associated location sub-vector, dispersion sub-matrix and shape sub-matrix, respectively. Then
\begin{eqnarray*}
 \bz_C  \sim {\normalfont \SUE}_{|C|,q}(\bxi_C ,  \bOm_{CC}, \bDel_{C\cdot}, \btau, \bar\bGam, g^{(|C|+q)}),
 \end{eqnarray*}
 for any $C \subset \{1, \ldots, m\}$.
\end{lemma}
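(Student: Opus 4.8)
The plan is to leverage the selection representation in \eqref{distY}, which characterizes $\bz$ as $(\bar\bz \mid \bar\bz_0 > \0)$ for a jointly elliptical pair $(\bar\bz,\bar\bz_0)$, together with the closure of the elliptical family under affine transformations. First I would observe that, because $\bA$ and $\bb$ are deterministic, the affine map commutes with the conditioning event $\{\bar\bz_0 > \0\}$, so that $\bA\bz + \bb \stackrel{d}{=} (\bA\bar\bz + \bb \mid \bar\bz_0 > \0)$. The strategy is then to show that the augmented vector $(\bA\bar\bz + \bb, \bar\bz_0)$ is again elliptical and to match its parameters against the SUE selection template in \eqref{distY}, thereby reading off the parameters of the transformed distribution and recovering the density in \eqref{pdfSUE}.

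Concretely, I would express $(\bA\bar\bz+\bb,\bar\bz_0)^\top$ as the image of $(\bar\bz,\bar\bz_0)^\top$ under the block-diagonal affine map with matrix $\begin{bmatrix} \bA & \0 \\ \0 & \bI \end{bmatrix}$ and translation $(\bb^\top,\0^\top)^\top$. Since $\bA$ has rank $r$, this matrix has full row rank $r+q \leq m+q$, so by closure under affine transformations the image is $\EC_{r+q}$ with location $(\bA\bxi+\bb,\btau)$, dispersion $\begin{bmatrix} \bA\bOm\bA^\top & \bA\bom\bDel \\ \bDel^\top\bom\bA^\top & \bar\bGam \end{bmatrix}$, and generator $g^{(r+q)}$; the generator drops from $g^{(m+q)}$ to $g^{(r+q)}$ precisely because the map is surjective onto $\R^{r+q}$.

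Matching this against \eqref{distY}, the first block contributes the location $\bA\bxi+\bb$ and dispersion $\bA\bOm\bA^\top$, the latent location and dispersion remain $\btau$ and $\bar\bGam$, and the cross-covariance block must be put in the standardized form $\bom_\bA\bDel_\bA$ demanded by the template. Since $\bom_\bA = \diag(\bA\bOm\bA^\top)^{1/2}$, solving $\bom_\bA\bDel_\bA = \bA\bom\bDel$ yields exactly $\bDel_\bA = \bom_\bA^{-1}\bA\bom\bDel$, completing the identification of $\bA\bz+\bb$ as ${\SUE}_{r,q}(\bA\bxi+\bb,\bA\bOm\bA^\top,\bDel_\bA,\btau,\bar\bGam,g^{(r+q)})$. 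The marginalization statement then follows as a special case by taking $\bb=\0$ and $\bA$ equal to the $|C|\times m$ selection matrix whose rows are the standard basis vectors $\be_i^\top$, $i\in C$: here $\bA\bOm\bA^\top=\bOm_{CC}$, and since $\bom$ is diagonal one has $\bA\bom=\bom_C\bA$ with $\bom_C=\diag(\bOm_{CC})^{1/2}=\bom_\bA$, so that $\bDel_\bA=\bom_\bA^{-1}\bA\bom\bDel=\bom_C^{-1}\bom_C\bDel_{C\cdot}=\bDel_{C\cdot}$, recovering the claimed form.

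I expect the main obstacle to lie in two bookkeeping points rather than in any deep difficulty. The first is verifying that the rank-$r$ assumption guarantees $\bA\bOm\bA^\top$ is positive-definite, so that the image is a genuine nondegenerate SUE, and confirming that the density generator indeed reduces to $g^{(r+q)}$ under the surjective map. The second, and more delicate, is correctly rewriting the raw cross-covariance $\bA\bom\bDel$ in the standardized form $\bom_\bA\bDel_\bA$ required by the parameterization in \eqref{distY}, which is exactly what forces the specific expression $\bDel_\bA=\bom_\bA^{-1}\bA\bom\bDel$ and, upon specialization, its reduction to $\bDel_{C\cdot}$.
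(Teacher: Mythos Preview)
Your proposal is correct and follows essentially the same route as the paper: both arguments invoke the selection representation \eqref{distY}, apply closure of the elliptical family under the block affine map $\begin{bmatrix}\bA & \0\\ \0 & \bI_q\end{bmatrix}$, and then read off the SUE parameters of $\bA\bz+\bb$, with the marginalization claim recovered by specializing $\bA$ to a binary selection matrix. The only cosmetic difference is that the paper matches via the cumulative distribution function \eqref{cdfSUE} rather than directly against the selection template \eqref{distY}, but this amounts to the same identification.
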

\begin{proof}
The proof adapts related derivations in  \citet{arellano10b} to the parameterization considered in the present article. More specifically, by applying to \eqref{distY} the linearity properties of elliptical distributions, it follows that
\begin{eqnarray*}
    \begin{bmatrix}
        \bA \bar{\bz}\\
        -\bar{\bz}_0
    \end{bmatrix}
    \sim 
    \EC_{r+q}\left(
    \begin{bmatrix}
        \bA \bxi\\
        -\btau
    \end{bmatrix},
    \begin{bmatrix}
        \bA \bOm \bA^\top & - \bom_\bA \bDel_\bA\\
        -\bDel_\bA^\top \bom_\bA & \bar\bGam
    \end{bmatrix},
    g^{(r+q)}
    \right),
\end{eqnarray*}
where $\bom_\bA=\mbox{\normalfont diag}(\bA \bOm \bA^\top)^{1/2}$ and $ \bDel_\bA=\bom_\bA^{-1} \bA\bom \bDel$. Now, as a direct consequence of the selection representation in~\eqref{distY}, we have that $\bA \bz + \bm{b}$ is distributed as $(\bA \bar{\bz} + \bm{b}\, |\, \bar{\bz}_0 > \0)$. Therefore, leveraging the results and discussions in Section~\ref{sec:2.1}, the cumulative distribution function of $\bA {\bz} + \bm{b}$ can be expressed as
\begin{eqnarray*}
    \Pr(\bA \bz + \bm{b} \leq {\bz}^*) = \frac{\Pr(\bA \bar{\bz} \leq  {\bz}^* - \bm{b}, -\bar{\bz}_0 \leq \0)}{\Pr(-\bar{\bz}_0 \leq \0)} = \frac{F_{r+q}\left(\begin{bmatrix}
        \bz^* - \bA \bxi - \bm{b}\\
        \btau
    \end{bmatrix};
    \begin{bmatrix}
        \bA \bOm \bA^\top & - \bom_\bA \bDel_\bA\\
        -\bDel_\bA^\top \bom_\bA & \bar\bGam
    \end{bmatrix}, 
    g^{(r+q)}
    \right)}{F_q(\btau; \bar\bGam, g^{(q)})},
\end{eqnarray*}
for any ${\bz}^*\in \R^r$, thus obtaining the cumulative distribution function of the ${\normalfont \SUE}_{r,q}(\bA \bxi + \bm{b}, \bA \bOm \bA^\top, \bDel_\bA, \btau, \bar\bGam, g^{(r+q)})$. The result for the marginals follows as a direct consequence by letting $\bm{b}=\0$ and setting  $\bA$ equal to a suitably-defined binary selection matrix such that $ \bA \bz=\bz_C$.
\end{proof}

Lemma \ref{linSUE} ensures that linear combinations and marginals of multivariate SUE distributions are still within the same class, and the associated parameters can be derived in closed form via tractable analytical calculations. Lemma \ref{condSUE} below clarifies that a related result holds for the conditional distributions.

\begin{lemma}\label{condSUE}
Let $\bz = (\bz_1^\top, \bz_2^\top)^\top \sim{\normalfont \SUE}_{m,q}(\bxi, \bOm, \bDel, \btau, \bar\bGam, g^{(m+q)})$ with $\bz_1 \in \R^{m_1}$, $\bz_2 \in \R^{m_2}$, and parameters partitioned as
 \begin{eqnarray}\label{partSUE}
  \bxi = 
  \begin{bmatrix}
      \bxi_1\\
      \bxi_2
  \end{bmatrix},\quad
  \bOm = 
  \begin{bmatrix}
      \bOm_{11} & \bOm_{12}\\
      \bOm_{21} & \bOm_{22}
  \end{bmatrix},\quad
  \bom = 
  \begin{bmatrix}
      \bom_1 & \0\\
      \0 & \bom_2
  \end{bmatrix},\quad
  \bar\bOm = 
  \begin{bmatrix}
      \bar\bOm_{11} & \bar\bOm_{12}\\
      \bar\bOm_{21} & \bar\bOm_{22}
  \end{bmatrix},\quad
  \bDel = 
  \begin{bmatrix}
      \bDel_1 \\
      \bDel_2
  \end{bmatrix},
\end{eqnarray}
with $m_1+m_2=m$, $ \bar\bOm_{21} =\bom^{-1}_{2}\bOm_{21}\bom^{-1}_{1}$ and $ \bar\bOm_{12} =\bom^{-1}_{1}\bOm_{12}\bom^{-1}_{2}$.

Then, for $i , j \in \{1, 2\}$ and $j \neq i$, we have 
    \begin{eqnarray}\label{YiYjcond}
    (\bz_i \mid \bz_j) \sim {\normalfont \SUE}_{m_i, q}(\bxi_{i \mid j}, \bOm_{i \mid j}, \bDel_{i \mid j}, \btau_{i \mid j}, \bar\bGam_{i \mid j}, g_{Q_j(\bz_j)}^{(m_i+q)}), \qquad \bz_j \in \R^{m_j},
\end{eqnarray}
with parameters defined as
 \begin{equation}\label{partcondSUE}
 \begin{split}
&\bxi_{i \mid j} = \bxi_i + \bOm_{ij} \bOm_{jj}^{-1} (\bz_j - \bxi_j), \quad \bOm_{i \mid j} = \bOm_{ii} - \bOm_{ij} \bOm_{jj}^{-1} \bOm_{ji}, \quad \bom_{i \mid j} = \mbox{\normalfont diag}(\bOm_{i \mid j})^{1/2}, \quad \bgam_{i \mid j} = \mbox{\normalfont diag}(\bar\bGam - \bDel_j^\top \bar\bOm_{jj}^{-1} \bDel_j)^{1/2}, \\
&\bDel_{i\mid j} = \bom_{i \mid j}^{-1} (\bom_i \bDel_i - \bOm_{ij} \bOm_{jj}^{-1} \bom_j \bDel_j) \bgam_{i \mid j}^{-1}, \quad \btau_{i \mid j} = \bgam_{i \mid j}^{-1}[\btau +\bDel_j^\top \bar\bOm_{jj}^{-1} \bom_j^{-1} (\bz_j - \bxi_j)], \quad \bar{\bGam}_{i \mid j} =  \bgam_{i \mid j}^{-1}(\bar\bGam - \bDel_j^\top \bar\bOm_{jj}^{-1} \bDel_j) \bgam_{i \mid j}^{-1}, 
\end{split}
\end{equation}
and conditional density generator ${g^{(m_i + q)}_{Q_j(\bz_j)}(u)} = g^{(m + q)}[Q_j(\bz_j)+u]/g^{(m_j)}[Q_j(\bz_j)]$, with $Q_j(\bz_j)=(\bz_j - \bxi_j)^\top \bOm_{jj}^{-1} (\bz_j - \bxi_j)$.
\end{lemma}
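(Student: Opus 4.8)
The plan is to derive the conditional law directly from the selection representation \eqref{distY}, mirroring the way the proof of Lemma~\ref{linSUE} exploited the closure of elliptical distributions, here under conditioning rather than under linear maps. Fix $i,j\in\{1,2\}$ with $j\neq i$ and partition the underlying elliptical vector in \eqref{distY} as $\bar{\bz}=(\bar{\bz}_i^\top,\bar{\bz}_j^\top)^\top$, so that
\[
\begin{bmatrix}\bar{\bz}_i\\ \bar{\bz}_j\\ \bar{\bz}_0\end{bmatrix}\sim\EC_{m+q}\left(\begin{bmatrix}\bxi_i\\ \bxi_j\\ \btau\end{bmatrix},\begin{bmatrix}\bOm_{ii} & \bOm_{ij} & \bom_i\bDel_i\\ \bOm_{ji} & \bOm_{jj} & \bom_j\bDel_j\\ \bDel_i^\top\bom_i & \bDel_j^\top\bom_j & \bar\bGam\end{bmatrix},g^{(m+q)}\right).
\]
Because $\bz\stackrel{d}{=}(\bar{\bz}\mid\bar{\bz}_0>\0)$, dividing the joint density of $(\bz_i,\bz_j)$ by the marginal of $\bz_j$ yields the distributional identity $(\bz_i\mid\bz_j)\stackrel{d}{=}(\bar{\bz}_i\mid\bar{\bz}_j=\bz_j,\ \bar{\bz}_0>\0)$; the two conditioning operations may be performed in stages because fixing $\bar{\bz}_j=\bz_j$ and imposing the inequality $\bar{\bz}_0>\0$ act on disjoint coordinates.

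First I would condition the sub-vector $(\bar{\bz}_i^\top,\bar{\bz}_0^\top)^\top$ on $\bar{\bz}_j=\bz_j$. By closure under marginalization and conditioning of elliptical distributions \citep{fang90,azzalini14}, this law is again elliptical, of dimension $m_i+q$, with conditional density generator $g^{(m_i+q)}_{Q_j(\bz_j)}(u)=g^{(m+q)}[Q_j(\bz_j)+u]/g^{(m_j)}[Q_j(\bz_j)]$ and $Q_j(\bz_j)=(\bz_j-\bxi_j)^\top\bOm_{jj}^{-1}(\bz_j-\bxi_j)$, precisely the generator asserted in \eqref{YiYjcond}. Its location and dispersion follow from the Schur-complement formulas: $\bar{\bz}_i$ acquires mean $\bxi_{i\mid j}=\bxi_i+\bOm_{ij}\bOm_{jj}^{-1}(\bz_j-\bxi_j)$ and dispersion $\bOm_{i\mid j}=\bOm_{ii}-\bOm_{ij}\bOm_{jj}^{-1}\bOm_{ji}$; the latent $\bar{\bz}_0$ acquires mean $\btau+\bDel_j^\top\bom_j\bOm_{jj}^{-1}(\bz_j-\bxi_j)$ and dispersion $\bar\bGam-\bDel_j^\top\bom_j\bOm_{jj}^{-1}\bom_j\bDel_j$; and the cross-dispersion between $\bar{\bz}_i$ and $\bar{\bz}_0$ becomes $\bom_i\bDel_i-\bOm_{ij}\bOm_{jj}^{-1}\bom_j\bDel_j$.

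The only delicate step is to recast this conditional elliptical vector in the canonical SUE form of \eqref{distY}, in which the latent block must carry a Pearson-correlation dispersion and the cross-block must factor through the scale of the non-latent block. Invoking the diagonal identities $\bom_j\bOm_{jj}^{-1}=\bar\bOm_{jj}^{-1}\bom_j^{-1}$ and $\bom_j\bOm_{jj}^{-1}\bom_j=\bar\bOm_{jj}^{-1}$, the block analogues of those used in Section~\ref{sec:2.1}, the latent mean and dispersion simplify to $\btau+\bDel_j^\top\bar\bOm_{jj}^{-1}\bom_j^{-1}(\bz_j-\bxi_j)$ and $\bar\bGam-\bDel_j^\top\bar\bOm_{jj}^{-1}\bDel_j$. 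I would then rescale the latent vector by the positive diagonal matrix $\bgam_{i\mid j}^{-1}$, with $\bgam_{i\mid j}=\mbox{diag}(\bar\bGam-\bDel_j^\top\bar\bOm_{jj}^{-1}\bDel_j)^{1/2}$: this leaves the selection event $\{\bar{\bz}_0>\0\}$ unchanged, normalizes the latent dispersion to the correlation matrix $\bar\bGam_{i\mid j}$, maps the latent mean to $\btau_{i\mid j}$, and sends the cross-dispersion to $(\bom_i\bDel_i-\bOm_{ij}\bOm_{jj}^{-1}\bom_j\bDel_j)\bgam_{i\mid j}^{-1}=\bom_{i\mid j}\bDel_{i\mid j}$, where $\bDel_{i\mid j}$ is read off by factoring out the scale $\bom_{i\mid j}=\mbox{diag}(\bOm_{i\mid j})^{1/2}$ of the non-latent block. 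The conditioned and rescaled pair then has exactly the joint elliptical structure of \eqref{distY} with parameters $(\bxi_{i\mid j},\bOm_{i\mid j},\bDel_{i\mid j},\btau_{i\mid j},\bar\bGam_{i\mid j})$ and generator $g^{(m_i+q)}_{Q_j(\bz_j)}$, so the selection representation identifies $(\bz_i\mid\bz_j)$ with the SUE in \eqref{YiYjcond}–\eqref{partcondSUE}. I expect the main obstacle to be purely this bookkeeping, namely confirming that the Schur complements together with the two normalizations reproduce exactly the parameters in \eqref{partcondSUE} and not some unnormalized variant of them.
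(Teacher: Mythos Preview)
Your proposal is correct and follows essentially the same route as the paper's proof: both start from the selection representation \eqref{distY}, condition the underlying elliptical vector $(\bar{\bz}_i^\top,\bar{\bz}_0^\top)^\top$ on $\bar{\bz}_j=\bz_j$, and then rescale the latent block by the diagonal matrix $\bgam_{i\mid j}^{-1}$ to restore a Pearson-correlation dispersion, yielding exactly the parameters in \eqref{partcondSUE}. The only cosmetic difference is that the paper verifies the result by matching the conditional cumulative distribution function against the SUE expression \eqref{cdfSUE}, whereas you identify the selection representation directly at the level of the conditioned elliptical pair; both arguments rest on the same Schur-complement computation and the same observation that the diagonal rescaling leaves the orthant event $\{\bar{\bz}_0>\0\}$ invariant.
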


\begin{proof}
To prove Lemma~\ref{condSUE}, let us leverage again \eqref{distY}. To this end, consider the following elliptical distribution 
\begin{eqnarray}\label{partition}
    \begin{bmatrix}
        	\bar{\bz}_1\\
       	\bar{\bz}_2\\
       	\bar{\bz}_0
    \end{bmatrix}
    \sim \EC_{m_1 + m_2 + q} \left(
    \begin{bmatrix}
        \bxi_1\\
        \bxi_2\\
        \btau
    \end{bmatrix},
    \begin{bmatrix}
        \bOm_{11} & \bOm_{12} & \bom_1 \bDel_1\\
        \bOm_{21} & \bOm_{22} & \bom_2 \bDel_2\\
        \bDel_1^\top \bom_1 & \bDel_2^\top \bom_2 & \bar \bGam
    \end{bmatrix},
    g^{(m_1 + m_2 + q)}
    \right).
\end{eqnarray}
Then, by the closure under linear combinations and conditioning of elliptical distributions  \citep[e.g.,][]{fang90}, we have that
\begin{eqnarray*}
([\bar{\bz}_i^\top,(-\bgam_{i \mid j}^{-1} \bar{\bz}_0)^\top]^\top \mid \bar{\bz}_j)
  \sim \EC_{m_i + q}\left(
    \begin{bmatrix}
        \bxi_{i \mid j}\\
        -\btau_{i \mid j} 
    \end{bmatrix}, 
    \begin{bmatrix}
        \bOm_{i \mid j} & -\bom_{i \mid j} \bDel_{i \mid j}\\
        -\bDel_{i \mid j}^\top \bom_{i \mid j} & \bar\bGam_{i \mid j}
    \end{bmatrix},
    g^{(m_i + q)}_{Q_j(\bar{\bz}_j)}
    \right), \qquad   Q_j(\bar{\bz}_j) = (\bar{\bz}_j - \bxi_j)^\top \bOm_{jj}^{-1} (\bar{\bz}_j - \bxi_j),
\end{eqnarray*}
which also implies ${(-\bgam_{i \mid j}^{-1} \bar{\bz}_0 \mid \bar{\bz}_j) \sim \EC_q(-\btau_{i \mid j}, \bar\bGam_{i \mid j}, g^{(q)}_{Q_j(\bar{\bz}_j)})}$, as a direct consequence of the closure under marginalization. Combining these results with the selection representation in \eqref{distY}, and noticing that the event $-\bar{\bz}_0 \leq \0$  is equivalent to {$-\bgam_{i \mid j}^{-1}\bar{\bz}_0  \leq \0$ (since $\bgam_{i \mid j}^{-1}$} is diagonal with positive entries), it follows that
\begin{eqnarray}
  	\mathcal{P}(\bz_i \mid \bz_j) =  \frac{\Pr(\bar{\bz}_i \leq \bz_i , -\bgam_{i \mid j}^{-1}\bar{\bz}_0  \leq \0 \, \mid \bar{\bz}_j = \bz_j)}{\Pr( -\bgam_{i \mid j}^{-1}\bar{\bz}_0  \leq \0 \, \mid \bar{\bz}_j = \bz_j)} =\frac{F_{m_i + q}\left(
    \begin{bmatrix}
        \bz_i - \bxi_{i \mid j}\\
        \btau_{i \mid j}
    \end{bmatrix};
    \begin{bmatrix}
        \bOm_{i \mid j} & - \bom_{i \mid j} \bDel_{i \mid j}\\
        -\bDel_{i \mid j}^\top \bom_{i \mid j} & \bar\bGam_{i \mid j}
    \end{bmatrix},
    g^{(m_i + q)}_{Q_j(\bz_j)}
    \right)}{F_q(\btau_{i \mid j}; \bar\bGam_{i \mid j}, g^{(q)}_{Q_j(\bz_j)})}, \label{YiYjcdf}
\end{eqnarray}
which coincides with the cumulative distribution function of the SUE in \eqref{YiYjcond} having parameters as in \eqref{partcondSUE}; see \eqref{cdfSUE} for the expression of the cumulative distribution function of a generic $\SUE_{m,q}(\bxi, \bOm, \bDel, \btau, \bar\bGam, g^{(m+q)})$. Notice that the first equality in~\eqref{YiYjcdf} follows from the fact that $\mathcal{p}(\bz_i \mid \bz_j)=\mathcal{p}(\bz_i,\bz_j)/\mathcal{p}(\bz_j)$, where $(\bz^\top_i,\bz^\top_j)^\top=\bz$ is distributed as a SUE and, by Lemma~\ref{linSUE}, the same holds for $\bz_j$. Hence, from the selection representation in \eqref{distY}, it follows that
\begin{eqnarray}
  	\mathcal{p}(\bz_i \mid \bz_j) =  \frac{\mathcal{p}(\bar{\bz}_i=\bz_i,\bar{\bz}_j=\bz_j)}{\mathcal{p}(\bar{\bz}_j=\bz_j)}\frac{\Pr(\bar{\bz}_0 \geq \0 \mid \bar{\bz}_i=\bz_i,\bar{\bz}_j=\bz_j)}{\Pr(\bar{\bz}_0 \geq \0 \mid \bar{\bz}_j=\bz_j)}=\mathcal{p}(\bar{\bz}_i=\bz_i\mid \bar{\bz}_j=\bz_j)\frac{\Pr(\bar{\bz}_0 \geq \0 \mid \bar{\bz}_i=\bz_i,\bar{\bz}_j=\bz_j)}{\Pr(\bar{\bz}_0 \geq \0 \mid \bar{\bz}_j=\bz_j)}, \label{margin_cond}
\end{eqnarray}
and, hence, $\mathcal{P}(\bz_i \mid \bz_j)=\Pr(\bar{\bz}_i \leq \bz_i , -\bgam_{i \mid j}^{-1}\bar{\bz}_0  \leq \0 \, \mid \bar{\bz}_j = \bz_j)/\Pr( -\bgam_{i \mid j}^{-1}\bar{\bz}_0  \leq \0 \, \mid \bar{\bz}_j = \bz_j)$.
\end{proof}

Lemma \ref{condSUE} guarantees that when  $\bz=(\bz^\top_i,\bz^\top_j)^\top$ is distributed as a SUE, then also the conditional distribution for a generic sub-vector $\bz_i$ belongs to the same family. Such a result conditions on a given realization $\bz_j$ for the remaining entries in $\bz$. In this respect, Lemma~\ref{suecond} states a novel finding which clarifies that the closure properties established in Lemma \ref{condSUE} can  be preserved also when conditioning on a truncation event. Similar results can be found in \cite{arellano22} and  \cite{wang24}, but with a focus on SUN and SUT sub-classes, respectively. Here, this property is proved for the whole SUE family.

\begin{lemma}\label{suecond}
 Let $\bz = (\bz_1^\top, \bz_2^\top)^\top \sim {\normalfont \SUE}_{m,q}(\bxi, \bOm, \bDel, \btau, \bar\bGam, g^{(m+q)})$ with parameters partitioned as in \eqref{partSUE}. Then 
\begin{eqnarray}\label{condtrunc1}
    (\bz_i\, |\, \bz_j > \0) \sim  {\normalfont \SUE}_{m_i, m_j + q}(\bxi_i, \bOm_{ii}, \tilde{\bDel}_{i \mid j}, \tilde{\btau}_{i \mid j}, \bar{\tilde{\bGam}}_{i \mid j}, g^{(m + q)}),    
\end{eqnarray}
where the quantities $ \tilde{\bDel}_{i \mid j}$, $\tilde{\btau}_{i \mid j},$ and $\bar{\tilde{\bGam}}_{i \mid j}$ are defined as
\begin{eqnarray}
    \tilde{\bDel}_{i\mid j} = 
    \begin{bmatrix}
        \bar\bOm_{ij} & \bDel_i
    \end{bmatrix}, \qquad   \tilde{\btau}_{i \mid j} = 
    \begin{bmatrix}
        \bom_j^{-1} \bxi_j\\
        \btau
    \end{bmatrix}, \qquad
    \bar{\tilde{\bGam}}_{i \mid j} = 
    \begin{bmatrix}
        \bar\bOm_{jj} & \bDel_j \\
        \bDel_j^\top & \bar\bGam
    \end{bmatrix},
\end{eqnarray}
for  every $i , j \in \{1, 2\}$, with $j \neq i$ and $   \bar\bOm_{ij}=\bom^{-1}_{i}\bOm_{ij}\bom^{-1}_{j}$.
\end{lemma}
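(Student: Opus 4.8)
The plan is to re-use the selection representation~\eqref{distY}, but now \emph{augmenting} the latent vector instead of conditioning it on a fixed value as in Lemma~\ref{condSUE}. Recall that $\bz \stackrel{d}{=} (\bar{\bz} \mid \bar{\bz}_0 > \0)$, with the joint elliptical law of $(\bar{\bz}_1^\top, \bar{\bz}_2^\top, \bar{\bz}_0^\top)^\top$ as displayed in~\eqref{partition}. The pivotal observation is an event equivalence: since the SUE vector $\bz$ already arises by conditioning $\bar{\bz}$ on $\{\bar{\bz}_0 > \0\}$, imposing the additional truncation $\{\bz_j > \0\}$ is the same as conditioning $\bar{\bz}$ on the \emph{intersection} $\{\bar{\bz}_j > \0\} \cap \{\bar{\bz}_0 > \0\}$. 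A short Bayes-rule argument of the type used in~\eqref{pdfbY} gives, for every measurable set $B$,
\begin{eqnarray*}
\Pr(\bz_i \in B \mid \bz_j > \0) = \frac{\Pr(\bar{\bz}_i \in B, \bar{\bz}_j > \0, \bar{\bz}_0 > \0)}{\Pr(\bar{\bz}_j > \0, \bar{\bz}_0 > \0)} = \Pr(\bar{\bz}_i \in B \mid \bar{\bz}_j > \0, \bar{\bz}_0 > \0),
\end{eqnarray*}
so that $(\bz_i \mid \bz_j > \0) \stackrel{d}{=} (\bar{\bz}_i \mid \bar{\bz}_j > \0, \bar{\bz}_0 > \0)$; this is precisely a selection representation of the form~\eqref{distY} with main block $\bar{\bz}_i$ and an enlarged latent vector $(\bar{\bz}_j^\top, \bar{\bz}_0^\top)^\top$ of dimension $m_j + q$.

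The second step is to read off the parameters by matching against the template in~\eqref{distY}. Because that parameterization requires the latent dispersion block to be in correlation form, I would first standardize $\bar{\bz}_j$ to $\bom_j^{-1} \bar{\bz}_j$; this is harmless, as $\bom_j^{-1}$ is diagonal with positive entries and hence leaves the event $\{\bar{\bz}_j > \0\}$ unchanged. From~\eqref{partition} and the linearity of elliptical distributions, the triple $(\bar{\bz}_i^\top, (\bom_j^{-1} \bar{\bz}_j)^\top, \bar{\bz}_0^\top)^\top$ is elliptical with location $(\bxi_i^\top, (\bom_j^{-1} \bxi_j)^\top, \btau^\top)^\top$ and dispersion whose diagonal blocks are $\bOm_{ii}$, $\bar\bOm_{jj} = \bom_j^{-1} \bOm_{jj} \bom_j^{-1}$ and $\bar\bGam$, and whose cross-blocks are $\bOm_{ij} \bom_j^{-1} = \bom_i \bar\bOm_{ij}$ (between $\bar{\bz}_i$ and $\bom_j^{-1}\bar{\bz}_j$), $\bom_i \bDel_i$ (between $\bar{\bz}_i$ and $\bar{\bz}_0$), and $\bom_j^{-1} \bom_j \bDel_j = \bDel_j$ (between $\bom_j^{-1}\bar{\bz}_j$ and $\bar{\bz}_0$).

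Comparing these blocks with the canonical form~\eqref{distY} identifies the law as $\SUE_{m_i, m_j+q}$ with location $\bxi_i$, dispersion $\bOm_{ii}$, truncation $\tilde{\btau}_{i\mid j} = ((\bom_j^{-1}\bxi_j)^\top, \btau^\top)^\top$, and latent dispersion $\bar{\tilde{\bGam}}_{i\mid j}$ having blocks $\bar\bOm_{jj}$, $\bDel_j$, $\bar\bGam$; since the cross-block between $\bar{\bz}_i$ and the augmented latent vector equals $\bom_i [\bar\bOm_{ij}\ \bDel_i]$, the shape matrix is $\tilde{\bDel}_{i\mid j} = [\bar\bOm_{ij}\ \bDel_i]$, exactly as claimed. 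Crucially, the density generator stays $g^{(m+q)}$, because the joint remains $(m_i + m_j + q) = (m+q)$-dimensional and we condition on an inequality rather than on a realization. I expect the main difficulty to be purely in the bookkeeping: one must propagate the standardization by $\bom_j^{-1}$ consistently through every block so that the latent dispersion emerges in correlation form with the correct cross-terms, and one must check that $\tilde{\bDel}_{i\mid j}$ correctly couples $\bar{\bz}_i$ to \emph{both} halves of the augmented latent vector. The conceptual point worth highlighting --- and the one genuine departure from Lemma~\ref{condSUE} --- is that no conditional density generator arises, precisely because we truncate rather than fix the conditioning variable.
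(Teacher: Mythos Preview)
Your proposal is correct and follows essentially the same approach as the paper: both arguments recognize that $(\bz_i \mid \bz_j > \0)$ inherits a selection representation with the augmented latent vector $(\bom_j^{-1}\bar{\bz}_j^\top, \bar{\bz}_0^\top)^\top$, standardize $\bar{\bz}_j$ by $\bom_j^{-1}$ to put the latent dispersion in correlation form, and then match blocks against the template~\eqref{distY}. The only cosmetic difference is that the paper reaches the key identity~\eqref{15} by first expanding $\mathcal{p}(\bz_i)\Pr(\bz_j>\0\mid\bz_i)/\Pr(\bz_j>\0)$ at the density level and cancelling, whereas you establish the distributional equivalence $(\bz_i\mid\bz_j>\0)\stackrel{d}{=}(\bar{\bz}_i\mid\bar{\bz}_j>\0,\bar{\bz}_0>\0)$ in one step at the level of probabilities; the substance is the same.
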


\begin{proof}
Consider again the selection representation in  \eqref{distY} based on the underlying elliptical distribution in \eqref{partition}. Leveraging derivations and arguments similar to those considered in the proof of Lemma~\ref{condSUE}, we have that
\begin{eqnarray}
    \Pr(\bz_j> \0 \mid \bz_i) =  \frac{\Pr(\bar{\bz}_j > \0 , \bar{\bz}_0 > \0 \mid \bar{\bz}_i = \bz_i)}{\Pr(\bar{\bz}_0 > \0  \mid \bar{\bz}_i = \bz_i)}, \qquad  \Pr(\bz_j > \0) =  \frac{\Pr(\bar{\bz}_j > \0, \bar{\bz}_0 > \0)}{\Pr(\bar{\bz}_0 > \0)} = \frac{\Pr(\bom_j^{-1}  \bar{\bz}_j > \0,  \bar{\bz}_0 > \0)}{\Pr( \bar{\bz}_0 > \0)}. \label{14}
\end{eqnarray}
Moreover, recall  that by representation  \eqref{distY}  and the closure  properties of SUE, the marginal density for $\bz_i$ is defined as $\mathcal{p}({\bz}_i)=\mathcal{p}(\bar{\bz}_i=\bz_i)\Pr(\bar{\bz}_0 > \0 \mid \bar{\bz}_i = \bz_i)/\Pr(\bar{\bz}_0 > \0)$. Combining such an expression with those in \eqref{14} leads to
\begin{align}
   \mathcal{p}(\bz_i \mid \bz_j>0) &=\mathcal{p}(\bz_i) \frac{ \Pr(\bz_j> \0 \mid \bz_i) }{ \Pr(\bz_j> \0) }=\mathcal{p}(\bar{\bz}_i=\bz_i)\frac{\Pr(\bar{\bz}_0 > \0 \mid \bar{\bz}_i = \bz_i)}{\Pr(\bar{\bz}_0 > \0)}  \frac{\Pr(\bar{\bz}_j > \0 , \bar{\bz}_0 > \0 \mid \bar{\bz}_i = \bz_i)}{\Pr(\bar{\bz}_0 > \0  \mid \bar{\bz}_i = \bz_i)}\frac{\Pr( \bar{\bz}_0 > \0)}{\Pr(\bom_j^{-1}  \bar{\bz}_j > \0,  \bar{\bz}_0 > \0)} \nonumber\\
&=\mathcal{p}(\bar{\bz}_i=\bz_i)\frac{\Pr(\bar{\bz}_j > \0 , \bar{\bz}_0 > \0 \mid \bar{\bz}_i = \bz_i)}{\Pr(\bom_j^{-1}  \bar{\bz}_j > \0,  \bar{\bz}_0 > \0)}=\mathcal{p}(\bar{\bz}_i=\bz_i)\frac{\Pr(-\bom_j^{-1}  \bar{\bz}_j \leq \0 , -\bar{\bz}_0 \leq \0 \mid \bar{\bz}_i = \bz_i)}{\Pr(-\bom_j^{-1}  \bar{\bz}_j \leq \0,  -\bar{\bz}_0 \leq \0)}. \label{15}
\end{align}
Leveraging again the closure under linear combinations, marginalization and conditioning of elliptical distributions  \citep[e.g.,][]{fang90}, we have that
\begin{eqnarray*}
   &&     [(-\bom_j^{-1} \bar{\bz}_j)^\top,-\bar{\bz}^\top_0]^{\top}    \sim {\normalfont \EC}_{m_j + q}(
    -\tilde{\btau}_{i \mid j},
    \bar{\tilde{\bGam}}_{i \mid j},
    g^{(m_j + q)}
  ),\\
&&     ([(-\bom_j^{-1} \bar{\bz}_j)^\top,-\bar{\bz}^\top_0]^{\top} \mid {\bar{\bz}}_{i}=\bz_i)    \sim \EC_{m_j + q}(
    -\tilde{\btau}_{i \mid j} - \tilde{\bDel}_{i \mid j}^\top \bar\bOm_{ii}^{-1} \bom_i^{-1} (\bz_i - \bxi_i),
    \bar{\tilde{\bGam}}_{i \mid j} - \tilde{\bDel}_{i \mid j}^\top \bar\bOm_{ii}^{-1} \tilde{\bDel}_{i\mid j}, g^{(m_j + q)}_{Q_i(\bz_i)}),
\end{eqnarray*}
where $Q_i(\bz_i) = (\bz_i - \bxi_i)^\top \bOm_{ii}^{-1} (\bz_i - \bxi_i)$. Hence, combining the above results with expression~\eqref{15}, and recalling that $\mathcal{p}(\bar{\bz}_i=\bz_i)$ coincides with the density of the elliptical distribution $\mbox{EC}_{m_i}(\bxi_i, \bOm_{ii}, g^{(m_i)})$, we obtain
\begin{eqnarray*}
     \mathcal{p}(\bz_i \mid \bz_j>\0) = f_{m_i}(\bz_i - \bxi_i; \bOm_{ii}, g^{(m_i)}) \, \frac{F_{m_j+q}(\tilde{\btau}_{i \mid j} + \tilde{\bDel}_{i \mid j}^\top \bar\bOm_{ii}^{-1} \bom_i^{-1} (\bz_i - \bxi_i); \bar{\tilde{\bGam}}_{i \mid  j} - \tilde{\bDel}_{i\mid  j}^\top \bar\bOm_{ii}^{-1} \tilde{\bDel}_{i \mid  j}, g^{(m_j + q)}_{Q_i(\bz_i)})}{F_{m_j + q}(\tilde{\btau}_{i \mid  j}; \bar{\tilde{\bGam}}_{i\mid  j}, g^{(m_j + q)})},
\end{eqnarray*}
which coincides with the density of the SUE in Lemma~\ref{suecond}. 
\end{proof}

\begin{rmk}
\normalfont Under a similar argument and derivations, it easily follows that also  $(\bz_i \mid \bz_j < \0)$ is a SUE distribution.
\end{rmk}

Lemma~\ref{stand} below is useful for converting a SUE distribution parameterized by a matrix $\bGam$ that is not in the form of a Pearson correlation matrix to a standard SUE meeting such a constraint.

\begin{lemma}\label{stand}
    Let $\bGam$ be a positive-definite matrix, then ${\normalfont \SUE}_{m,q}(\bxi, \bOm, \bDel, \btau, \bGam, g^{(m+q)}) \stackrel{d}{=} {\normalfont \SUE}_{m,q}(\bxi, \bOm, \bDel \bgam^{-1}, \bgam^{-1} \btau, \bar\bGam, g^{(m+q)})$, where $\bar\bGam$ is a Pearson correlation matrix defined as  $\bar\bGam = \bgam^{-1} \bGam \bgam^{-1}$, with $\bgam = {\normalfont \mbox{diag}}(\bGam)^{1/2}$.
    \end{lemma}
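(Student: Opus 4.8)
The plan is to work directly with the selection representation in \eqref{distY}, which characterizes a SUE distribution as the conditioning of an underlying elliptical vector on a sign constraint imposed on its latent block. The key observation is that the standardization in the statement affects only this latent block, and that rescaling the latent block by a diagonal matrix with strictly positive entries leaves the associated truncation event unchanged. Since $\bz \sim {\normalfont \SUE}_{m,q}(\bxi, \bOm, \bDel, \btau, \bGam, g^{(m+q)})$ is defined through such a representation, it suffices to exhibit the scaled representation and read off the resulting standard-form parameters.

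First I would write down the representation $\bz \stackrel{d}{=} (\bar{\bz} \mid \bar{\bz}_0 > \0)$ of ${\normalfont \SUE}_{m,q}(\bxi, \bOm, \bDel, \btau, \bGam, g^{(m+q)})$ as in \eqref{distY}, where the underlying vector $(\bar{\bz}^\top, \bar{\bz}_0^\top)^\top$ is ${\normalfont \EC}_{m+q}$ with location $(\bxi^\top, \btau^\top)^\top$ and dispersion having diagonal blocks $\bOm$ and $\bGam$ and off-diagonal block $\bom \bDel$. I would then apply the linear map that fixes $\bar{\bz}$ and sends $\bar{\bz}_0$ to $\bgam^{-1} \bar{\bz}_0$, with $\bgam = {\normalfont \mbox{diag}}(\bGam)^{1/2}$. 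Because $\bgam^{-1}$ is diagonal with strictly positive diagonal entries, the events $\{\bar{\bz}_0 > \0\}$ and $\{\bgam^{-1} \bar{\bz}_0 > \0\}$ coincide, so $\bz \stackrel{d}{=} (\bar{\bz} \mid \bgam^{-1}\bar{\bz}_0 > \0)$.

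Next, by closure under linear combinations of elliptical distributions (invoked repeatedly in Section~\ref{sec:2.1} and in the proof of Lemma~\ref{linSUE}), the transformed vector $(\bar{\bz}^\top, (\bgam^{-1}\bar{\bz}_0)^\top)^\top$ is again elliptical with the same generator $g^{(m+q)}$, location $(\bxi^\top, (\bgam^{-1}\btau)^\top)^\top$, and dispersion obtained by conjugating the original block matrix with $\diag(\bI, \bgam^{-1})$. A direct block computation yields the new off-diagonal block $\bom \bDel \bgam^{-1} = \bom (\bDel \bgam^{-1})$ and the new latent block $\bgam^{-1} \bGam \bgam^{-1} = \bar\bGam$, while the $\bar{\bz}$-block $\bOm$ is unchanged. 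This is precisely the representation \eqref{distY} of a ${\normalfont \SUE}_{m,q}(\bxi, \bOm, \bDel \bgam^{-1}, \bgam^{-1}\btau, \bar\bGam, g^{(m+q)})$, now with a Pearson-correlation latent block, which establishes the claimed equality in distribution.

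I do not anticipate a serious obstacle: the argument is essentially a scale change exploiting the equivariance built into the selection representation. The only points requiring care are verifying that the shape matrix acquires exactly the factor $\bgam^{-1}$ on the right, so that $\bom$ factors out to match the standard off-diagonal form $\bom \bDel \bgam^{-1}$, and that the truncation vector rescales to $\bgam^{-1}\btau$; both follow immediately from the explicit block structure. An alternative, purely density-based verification, substituting the standardized parameters into \eqref{pdfSUE} and checking invariance of the ratio of elliptical cumulative distribution functions, is also available, but the selection-representation route is cleaner and avoids recomputing the conditional density generator.
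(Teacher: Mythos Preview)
Your proposal is correct and follows essentially the same approach as the paper: both use the selection representation \eqref{distY}, observe that the truncation event $\{\bar{\bz}_0 > \0\}$ is invariant under the diagonal positive rescaling $\bgam^{-1}$, and then invoke closure of elliptical distributions under linear maps to read off the standardized parameters. The paper phrases this through the density form \eqref{pdfbY} while you stay with the selection representation itself, but the argument is the same.
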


\begin{proof}
    Let $\bz \sim \SUE_{m,q}(\bxi, \bOm, \bDel, \btau, \bGam, g^{(m+q)})$. Then, according to the selection representation  in \eqref{distY},  $\bz \stackrel{d}{=} (\bar{\bz} \mid \bar{\bz}_0 > \0)$, and its density function is given by \eqref{pdfbY}. Notice that, since $\bgam$ is a diagonal matrix with non-negative entries, then the numerator and denominator in \eqref{pdfbY} can be alternatively re-written as $\Pr(-\bar{\bz}_0 \leq \0 \mid \bar{\bz} = \bz) =\Pr(-\bgam^{-1}\bar{\bz}_0 \leq \0 \mid \bar{\bz} = \bz) $ and $\Pr(-\bar{\bz}_0 \leq \0) =\Pr(-\bgam^{-1}\bar{\bz}_0 \leq \0) $. Therefore, the proof follows directly from  \eqref{distY}--\eqref{pdfbY} and by the closure under linear combinations and conditioning of elliptical distributions.
\end{proof}

Lemma~\ref{lemRedundant} concludes this section by presenting particular cases of SUE distributions obtained under specific constraints on the associated parameters. These results are useful for detecting redundant latent dimensions and identifying interesting examples of constrained representations yielding specific models of interest under the conjugacy results derived in Section~\ref{sec:3}.

\begin{lemma}\label{lemRedundant}
    Let $\bz \sim {\normalfont \SUE}_{m,q} (\bxi, \bOm, \bDel, \btau ,\bar\bGam, g^{(m+q)})$ with parameters $\bDel$, $\btau$, and $\bar\bGam$ partitioned as
    \begin{eqnarray*}
        \bDel = \begin{bmatrix}
            \bDel_1 & \bDel_2
        \end{bmatrix}, \quad
        \btau = \begin{bmatrix}
            \btau_1\\
            \btau_2
        \end{bmatrix}, \quad
        \bar\bGam = \begin{bmatrix}
            \bar\bGam_{11} & \bar\bGam_{12} \\
            \bar\bGam_{21} & \bar\bGam_{22}
        \end{bmatrix},
    \end{eqnarray*}
    with $\bDel_i \in \R^{m \times q_i}$, $\btau_i \in \R^{q_i}$, and $\bar\bGam_{i j} \in \R^{q_i \times q_j}$, for every $i,j \in \{1, 2\}$, such that $q_1 + q_2 = q$. Then, {\normalfont (i)} if $\bDel = \0$ and $\btau = \0$, it follows that $\bz \sim {\normalfont \EC}_m(\bxi, \bOm, g^{(m)})$.   Additionally, {\normalfont  (ii)} if $\bDel_i = \0$, $\btau_i = \0$, $\bar\bGam_{i j} = \0$, for $i$ and $j$ fixed, with $j \neq i$, then $\bz \sim {\normalfont \SUE}_{m, q_j}(\bxi, \bOm, \bDel_j, \btau_j, \bar\bGam_{jj}, g^{(m+q_j)})$. Finally, {\normalfont (iii)} $F_{m+q}\{ [(\bz-\bxi)^\top,\0^\top]^\top; {\normalfont{\mbox{diag}}}(\bOm, \bar\bGam),g^{(m+q)}\}= F_m(\bz - \bxi; \bOm, g^{(m)}) \cdot F_q(\0; \bar\bGam, g^{(q)})$, where $ {\normalfont{\mbox{diag}}}(\bOm, \bar\bGam)$ denotes a block-diagonal matrix with blocks $\bOm$ and $ \bar\bGam$, respectively. 
      \end{lemma}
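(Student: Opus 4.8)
The plan is to establish part (iii) first, since it is the analytic engine behind the other two claims, and then to deduce (i) and (ii) from it together with an elementary orthant-invariance observation. That observation is the following: for a centered elliptical vector $\bm{W} \sim \EC_q(\0, \bar\bGam, g^{(q)})$ with stochastic representation $\bm{W} = R\,\bA\,\bm{U}$, where $\bm{U}$ is uniform on the unit sphere, $R \geq 0$ is the independent radial variable, and $\bA\bA^\top = \bar\bGam$, the event $\{\bm{W} \leq \0\}$ coincides with $\{\bA \bm{U} \leq \0\}$ because $R > 0$ almost surely. Hence $F_q(\0; \bar\bGam, g^{(q)}) = \Pr(\bA\bm{U} \leq \0)$ depends only on $\bar\bGam$ and not on the density generator $g^{(q)}$. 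This single fact is what ultimately makes the ``redundant'' latent coordinates disappear.

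For part (iii) I would write $\bm{W} = (\bm{W}_1^\top, \bm{W}_2^\top)^\top \sim \EC_{m+q}(\0, \mbox{diag}(\bOm, \bar\bGam), g^{(m+q)})$, so that the left-hand side equals $\Pr(\bm{W}_1 \leq \bz - \bxi,\, \bm{W}_2 \leq \0)$, and then condition on $\bm{W}_2 = \bm{w}_2$. Because the cross-dispersion block vanishes, the conditional law of $\bm{W}_1$ has location $\0$ and is $\EC_m(\0, \bOm, g^{(m)}_{Q_2(\bm{w}_2)})$ with $Q_2(\bm{w}_2) = \bm{w}_2^\top \bar\bGam^{-1} \bm{w}_2$, so its conditional CDF at $\bz - \bxi$ depends on $\bm{w}_2$ only through $Q_2(\bm{w}_2)$. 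Writing $\bm{W}_2 = R_2\,\bA_2\,\bm{U}_2$ with $\bA_2\bA_2^\top = \bar\bGam$ yields the two crucial identities $Q_2(\bm{W}_2) = R_2^2$, a function of the radius alone, and $\{\bm{W}_2 \leq \0\} = \{\bA_2 \bm{U}_2 \leq \0\}$, a function of the direction alone. Since $R_2 \perp \bm{U}_2$, the quadratic form $Q_2(\bm{W}_2)$ is independent of the indicator of the orthant event. Integrating the conditional CDF over $\{\bm{w}_2 \leq \0\}$ and using this independence, the conditional average of $F_m(\bz - \bxi; \bOm, g^{(m)}_{Q_2(\bm{W}_2)})$ over the orthant equals its unconditional average, which by the law of total probability is exactly $F_m(\bz - \bxi; \bOm, g^{(m)})$; the surviving factor is $\Pr(\bm{W}_2 \leq \0) = F_q(\0; \bar\bGam, g^{(q)})$, giving the claimed factorization.

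Part (i) is then immediate: setting $\bDel = \0$ and $\btau = \0$ in \eqref{pdfSUE} leaves numerator $F_q(\0; \bar\bGam, g^{(q)}_{Q(\bz)})$ and denominator $F_q(\0; \bar\bGam, g^{(q)})$, which are equal by orthant invariance (both are orthant probabilities at $\0$ for the same dispersion $\bar\bGam$, and these do not depend on the generator), so the ratio is $1$ and $\mathcal{p}(\bz)$ collapses to the $\EC_m(\bxi, \bOm, g^{(m)})$ density. For part (ii) I would substitute $\bDel_i = \0$, $\btau_i = \0$, $\bar\bGam_{ij} = \0$ into \eqref{pdfSUE}: the vanishing of $\bDel_i$ forces the $i$-block of the numerator's argument to equal $\btau_i = \0$ and makes both $\bar\bGam - \bDel^\top\bar\bOm^{-1}\bDel$ and $\bar\bGam$ block-diagonal. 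After reordering so that the zero-argument block sits second, part (iii) factors the numerator into the $\SUE$-type $q_j$-factor $F_{q_j}(\btau_j + \bDel_j^\top\bar\bOm^{-1}\bom^{-1}(\bz-\bxi);\, \bar\bGam_{jj} - \bDel_j^\top\bar\bOm^{-1}\bDel_j,\, g^{(q_j)}_{Q(\bz)})$ times $F_{q_i}(\0; \bar\bGam_{ii}, g^{(q_i)}_{Q(\bz)})$, and likewise factors the denominator into $F_{q_j}(\btau_j; \bar\bGam_{jj}, g^{(q_j)})$ times $F_{q_i}(\0; \bar\bGam_{ii}, g^{(q_i)})$. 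The two orthant-at-$\0$ factors over the $i$-block share dispersion $\bar\bGam_{ii}$ and differ only in generator, so they cancel by orthant invariance, leaving precisely the density of $\SUE_{m, q_j}(\bxi, \bOm, \bDel_j, \btau_j, \bar\bGam_{jj}, g^{(m+q_j)})$.

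The main obstacle is part (iii), and specifically the fact that for non-Gaussian elliptical laws a block-diagonal dispersion does \emph{not} imply independence of $\bm{W}_1$ and $\bm{W}_2$, so a naive factorization fails for a general argument. The factorization is rescued exactly when the second argument is $\0$, because the orthant event then depends only on the direction $\bm{U}_2$ while the conditional law of $\bm{W}_1$ depends on $\bm{W}_2$ only through the radius-determined quantity $Q_2(\bm{W}_2) = R_2^2$, and these are independent. A secondary bookkeeping point, needed to identify the surviving factor in (ii) with a genuine $\SUE_{m,q_j}$ density, is the consistency of the elliptical family under simultaneous marginalization and conditioning, namely that marginalizing the conditional generator $g^{(q)}_{Q(\bz)}$ to $q_j$ coordinates returns $g^{(q_j)}_{Q(\bz)}$; this follows from the commutativity of conditioning and marginalization for elliptical distributions.
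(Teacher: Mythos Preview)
Your argument is correct, but the route differs from the paper's in two places. First, the paper reverses the logical order: it proves (i) via orthant invariance (as you do), then obtains (iii) \emph{from} (i) by equating two expressions for the CDF of a SUE with $\bDel=\0$, $\btau=\0$: one is $F_m(\bz-\bxi;\bOm,g^{(m)})$ by (i), the other is the general SUE CDF formula \eqref{cdfSUE} specialized to these parameters, which is exactly the ratio $F_{m+q}\{[(\bz-\bxi)^\top,\0^\top]^\top;\mbox{diag}(\bOm,\bar\bGam),g^{(m+q)}\}/F_q(\0;\bar\bGam,g^{(q)})$. Your direct proof of (iii) via the radius--direction decomposition is more self-contained and explains \emph{why} the factorization holds precisely when the second argument is $\0$; the paper's proof is shorter but leans on the SUE CDF machinery already in place. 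Second, for (ii) the paper substitutes the constraints into the CDF formula \eqref{cdfSUE} rather than the density \eqref{pdfSUE}; after applying (iii), the orthant factors $F_{q_i}(\0;\bar\bGam_{ii},g^{(q_i)})$ in numerator and denominator are then \emph{identical} and cancel algebraically, whereas in your density-based route the two orthant factors carry different generators ($g^{(q_i)}_{Q(\bz)}$ versus $g^{(q_i)}$) and you need orthant invariance once more, plus the generator-consistency remark, to cancel them. Both work; the CDF route is marginally cleaner bookkeeping, while yours keeps everything at the density level.
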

      
\begin{proof}
        To prove (i) note that, due to the invariance of orthant probabilities under centered elliptical distributions  \citep[e.g.,][]{fang90}, we have $F_q(\0; \bar\bGam, g^{(q)}_{Q(\bz)}) = F_q(\0; \bar\bGam, g^{(q)})$. Hence, including the constraints $\bDel = \0$ and $\btau = \0$ in \eqref{pdfSUE}, yields
        \begin{eqnarray*}
            \mathcal{p}(\bz) = f_m(\bz - \bxi; \bOm, g^{(m)}) \frac{F_q(\0; \bar\bGam, g^{(q)}_{Q(\bz)})}{F_q(\0; \bar\bGam, g^{(q)})} = f_m(\bz - \bxi; \bOm, g^{(m)}),
        \end{eqnarray*}
        which coincides with the density of the elliptical distribution $\EC_m(\bxi, \bOm, g^{(m)})$. This result allows us to prove also (iii). In particular, since $\bz \sim {\normalfont \EC}_m(\bxi, \bOm, g^{(m)})$ when both $\bDel = \0$ and $\btau = \0$, then $\mathcal{P}(\bz)=F_{m}(\bz-\bxi; \bOm, g^{(m)})$. Conversely, by  including the constraints $\bDel = \0$ and $\btau = \0$ within the general expression for the SUE cumulative distribution function in \eqref{cdfSUE} yields $\mathcal{P}(\bz)=F_{m+q}\{ [(\bz-\bxi)^\top,\0^\top]^\top; {\normalfont{\mbox{diag}}}(\bOm, \bar\bGam),g^{(m+q)}\}/F_q(\0; \bar{\bGam},g^{(q)})$. Therefore $\mathcal{P}(\bz)=F_{m}(\bz-\bxi; \bOm, g^{(m)})=F_{m+q}\{ [(\bz-\bxi)^\top,\0^\top]^\top; {\normalfont{\mbox{diag}}}(\bOm, \bar\bGam),g^{(m+q)}\}/F_q(\0; \bar{\bGam},g^{(q)})$ which implies the result stated in point (iii).
         
        Finally, to prove (ii), assume for the sake of simplicity that $\bDel_2 = \0$, $\btau_2 = \0$, and $\bar\bGam_{21} = \bar\bGam^\top_{12} = \0$, i.e., $i = 2$ and $j = 1$ (the proof for $i=1$ and $j=2$ is analogous). Then, applying (iii) to both the numerator and the denominator of the expression for $\mathcal{P}(\bz)$ in  \eqref{cdfSUE}, evaluated under the constrained parameters, leads to  
        \begin{align*}
        \mathcal{P}(\bz) &= \frac{F_{m+q_1+q_2} \left(\begin{bmatrix}
                \bz - \bxi\\
                \btau_1\\
                \0
            \end{bmatrix};
            \begin{bmatrix}
                \bOm & -\bom \bDel_1 & \0\\
                -\bDel_1^\top \bom & \bar\bGam_{11} & \0\\
                \0 & \0 & \bar\bGam_{22}
            \end{bmatrix}, g^{(m + q_1 + q_2)} \right)}{F_{q_1+q_2} \left(\begin{bmatrix}
                \btau_1\\
                \0
            \end{bmatrix};
            \begin{bmatrix}
                \bar\bGam_{11} & \0\\
                \0 & \bar\bGam_{22}
            \end{bmatrix}, g^{(q_1 + q_2)} \right)}= \frac{F_{m+q_1} \left(\begin{bmatrix}
                \bz - \bxi\\
                \btau_1
            \end{bmatrix};
            \begin{bmatrix}
                \bOm & -\bom \bDel_1\\
                -\bDel_1^\top \bom & \bar\bGam_{11}
            \end{bmatrix}, g^{(m + q_1)}  \right)}{F_{q_1}(\btau_1; \bar\bGam_{11}, g^{(q_1)})},
        \end{align*}
    where the last equality follows from the fact that $F_{q_2}(\0; \bar\bGam_{22}, g^{(q_2)})$ at the numerator and the denominator simplifies. To conclude the proof, it suffices to notice that the above cumulative distribution function coincides with the one of a $\SUE_{m,q_1}(\bxi, \bOm, \bDel_1, \btau_1, \bar\bGam_{11},g^{(m+q_1)})$.       
\end{proof}

\begin{rmk}\label{rem_sun_ii}
  \normalfont  In the particular case of a SUN distribution,  it can be shown that the restrictions $\btau = \0$ or $\btau_i = \0$ are unnecessary in Lemma \ref{lemRedundant} since there are results analogous to (i)--(iii) that follow directly by the specific properties of  Gaussian cumulative distribution functions. In particular, adapting the above proof to the SUN sub-family, it can be easily shown, for example, that if $\bDel_i = \0$ and $\bar\bGam_{i j} =\bar\bGam^\top_{j i} = \0$, $j \neq i$, then $\bz \sim {\normalfont \SUN}_{m, q_j}(\bxi, \bOm, \bDel_j, \btau_j, \bar\bGam_{jj})$, even if $\btau_i \neq \0$.
\end{rmk}

Leveraging Lemmas \ref{linSUE}--\ref{lemRedundant}, Section~\ref{sec:3} derives the novel conjugacy properties of  SUE distributions.

\section{Conjugacy properties of multivariate unified skew-elliptical (SUE)  distributions\label{sec:3}}

Sections~\ref{sec:3.1}--\ref{sec:3.3} present the new results on the SUE conjugacy properties  under a broad class of regression models for fully-observed, censored or dichotomized realizations from elliptical or skew-elliptical variables. As anticipated in Section~\ref{sec:1}, the technical derivation of these results is based on specifying a general joint SUE distribution for the parameters $\bbeta$ and the noise vector $\beps$ underlying the response $\by$. This allows to  leverage the closure properties in Section~\ref{sec:2.3}, to obtain closed-form SUE priors $\mathcal{p}(\bbeta)$  and meaningful likelihoods $\mathcal{p}(\by \mid \bbeta)$ whose combination, under the standard Bayes rule, yields posterior distributions $\mathcal{p}(\bbeta \mid \by) \propto \mathcal{p}(\bbeta)\mathcal{p}(\by \mid \bbeta)$ that still belong to the SUE class.

The above technical focus on the joint distribution $\mathcal{p}(\bbeta, \by)$ for the data $\by$ and the parameters $\bbeta$ is motivated by the fact that the results in Section~\ref{sec:3.1}--\ref{sec:3.3} crucially clarify that not all the models arising from elliptical or skew-elliptical noise vectors admit conjugate SUE priors. For this property to hold generally within the SUE family, it is necessary to consider a form of dependence between $\bbeta$ and $\beps$ due to the specific properties of the density generator. Notice that such a dependence is often weak. In particular, it allows to account for meaningful priors and models having $\bbeta$ and $\beps$ uncorrelated, while reducing to full independence under the density generators of the multivariate Gaussians and unified skew-normals. Nonetheless, such a weak dependence combined with a technical focus on  $\mathcal{p}(\bbeta, \by)$ allows for a more comprehensive investigation of  SUE conjugacy properties that would not be as immediate to prove theoretically under a direct specification of the prior $\mathcal{p}(\bbeta)$ and the likelihood  $\mathcal{p}(\by \mid \bbeta)$.

As a simple illustrative example that clarifies the above arguments, consider a univariate setting with $\mbox{Cauchy}(0,1)$ prior for $\beta$ and a $\mbox{Cauchy}(\beta,1)$ likelihood for $(y \mid \beta)$. By application of the Bayes rule, we obtain $\mathcal{p}(\beta \mid y) \propto \mathcal{p}(\beta)\mathcal{p}(y \mid \beta)$ where $\mathcal{p}(\beta)\mathcal{p}(y \mid \beta)=1/[\pi^2(1+\beta^2)(1+(y-\beta)^2)]$ is not proportional to the kernel of a Cauchy density.  Clearly, in this simple example and in general situations where conjugacy lacks, Bayesian inference can still proceed via routinely-implemented MCMC methods or deterministic approximations of the target posterior. Nonetheless, as clarified in Sections~\ref{sec:3.1}--\ref{sec:3.3},  SUE conjugacy can be still achieved under certain likelihoods induced by elliptical or skew-elliptical error terms (including instances of potential practical interest), thereby facilitating posterior inference.


\subsection{Conjugacy properties of SUE distributions in multivariate linear models}\label{sec:3.1}
Let us first study the SUE conjugacy properties under general multivariate linear models of the form
\begin{eqnarray}\label{linreg}
    \by = \bX \bbeta + \beps,
\end{eqnarray}
where $\by = (y_1, \ldots, y_n)^\top \in \R^n$ is the response vector, $\bX \in \R^{n \times p}$ corresponds to a known design matrix, $\bbeta \in \R^p$ denotes a vector of unknown parameters, often referred to as the regression coefficients, and $\beps \in \R^n$ is the error vector. Current results in Bayesian inference under the above model have established conjugacy of Gaussian or SUN priors for $\bbeta$ when combined with Gaussian or SUN noise vectors $\beps$ \citep{anceschi23}. Although these advancements cover a broad range of models, in practice, it is  of interest to consider alternative representations within the wider elliptical or skew-elliptical family, which account for heavier tails and ensure increased robustness. However, conjugacy remains unexplored in these larger classes, undermining advancements in tractable Bayesian inference. Proposition~\ref{Prop1} below covers such a gap. 
 \begin{prop}\label{Prop1}
  Assume that $(\bbeta^\top, \beps^\top)^\top \sim  {\normalfont \SUE}_{p+n,q} (\bxi, \bOm, \bDel, \btau, \bar\bGam, g^{(p+n+q)})$ with parameters partitioned as
  \begin{eqnarray}\label{beSUEpart}
    \bxi =
    \begin{bmatrix}
        \bxi_\bbeta\\ 
        \bxi_\beps
    \end{bmatrix},\qquad
    \bOm = \begin{bmatrix}
            \bOm_\bbeta & \bOm_{\bbeta \beps} \\ 
            \bOm_{\beps \bbeta} & \bOm_{\beps}
            \end{bmatrix},\qquad
    \bDel = \begin{bmatrix}
            \bDel_\bbeta \\ 
            \bDel_\beps
            \end{bmatrix}.
  \end{eqnarray}
  Then, when $ \by$ is defined as in \eqref{linreg}, it follows that $(\bbeta^\top, \by^\top)^\top \sim {\normalfont  \SUE}_{p+n,q}(\bxi^\dag, \bOm^\dag, \bDel^\dag, \btau, \bar\bGam, g^{(p+n+q)}),$ with
\begin{equation}\label{ybeta}
\begin{split}
    &\bxi^\dag = \begin{bmatrix}
        \bxi_\bbeta\\
        \bX \bxi_\bbeta + \bxi_\beps    
    \end{bmatrix} =:
    \begin{bmatrix}
        \bxi_\bbeta\\
        \bxi_\by   
    \end{bmatrix}
    , \quad \qquad
    \bDel^\dag= \begin{bmatrix}
        \bDel_\bbeta\\
        \bom_{\by}^{-1} (\bX \bom_\bbeta \bDel_\bbeta + \bom_\beps \bDel_\beps)
    \end{bmatrix} = :
    \begin{bmatrix}
        \bDel_\bbeta\\
        \bDel_\by    
    \end{bmatrix},\\
    & \bOm^\dag = \begin{bmatrix}
        \bOm_\bbeta & \bOm_\bbeta \bX^\top + \bOm_{\bbeta \beps}\\
         \bX \bOm_\bbeta + \bOm_{\beps \bbeta} & \bX \bOm_\bbeta \bX^\top + \bOm_{\beps \bbeta} \bX^\top + \bX \bOm_{\bbeta \beps} + \bOm_\beps
    \end{bmatrix} =:
    \begin{bmatrix}
        \bOm_\bbeta & \bOm_{\bbeta \by}\\
        \bOm_{\by \bbeta} & \bOm_{\by}
    \end{bmatrix},
    \end{split}
\end{equation}   
where $\bom_\beps = {\normalfont \mbox{diag}}(\bOm_\beps)^{1/2}$, $\bom_\bbeta = {\normalfont \mbox{diag}}(\bOm_\bbeta)^{1/2}$, and $\bom_\by = {\normalfont \mbox{diag}}(\bOm_\by)^{1/2}$. Moreover    
\begin{enumerate}[label=(\alph*)]
    \item{Prior distribution}. $\bbeta \sim {\normalfont \SUE}_{p,q}(\bxi_\bbeta, \bOm_\bbeta, \bDel_\bbeta, \btau, \bar\bGam, g^{(p+q)})$.
    \item{Likelihood}. $(\by \mid \bbeta) \sim  {\normalfont \SUE}_{n,q}(\bxi_{\by \mid \bbeta}, \bOm_{\by  \mid \bbeta}, \bDel_{\by  \mid \bbeta}, \btau_{\by  \mid \bbeta}, \bar\bGam_{\by \mid \bbeta}, g^{(n+q)}_{Q_\bbeta(\bm{\bbeta})}),$
    with parameters
 \begin{eqnarray*}
 \begin{split}
&\bxi_{\by \mid \bbeta} = \bxi_\by + \bOm_{\by\bbeta} \bOm_{\bbeta}^{-1} (\bbeta - \bxi_\bbeta), \quad \bOm_{\by \mid \bbeta} = \bOm_{\by} - \bOm_{\by \bbeta} \bOm_{\bbeta}^{-1} \bOm_{\bbeta \by}, \quad  \bDel_{\by\mid \bbeta} = \bom_{\by \mid \bbeta}^{-1} (\bom_\by \bDel_\by - \bOm_{\by \bbeta} \bOm_{\bbeta}^{-1} \bom_\bbeta \bDel_\bbeta) \bgam_{\by \mid \bbeta}^{-1}, \\
& \btau_{\by \mid \bbeta} = \bgam_{\by \mid \bbeta}^{-1}[\btau +\bDel_\bbeta^\top \bar\bOm_{\bbeta}^{-1} \bom_\bbeta^{-1} (\bbeta- \bxi_\bbeta)], \quad \bar{\bGam}_{\by \mid \bbeta} =  \bgam_{\by \mid \bbeta}^{-1}(\bar\bGam - \bDel_\bbeta^\top \bar\bOm_{\bbeta}^{-1} \bDel_\bbeta) \bgam_{\by \mid \bbeta}^{-1}, \quad Q_\bbeta(\bbeta) = (\bbeta - \bxi_\bbeta)^\top \bOm_\bbeta^{-1} (\bbeta - \bxi_\bbeta),  
\end{split}
\end{eqnarray*}         
     where $\bom_{\by \mid \bbeta} = {\normalfont \mbox{diag}}(\bOm_{\by \mid \bbeta})^{1/2}$ and $\bgam_{\by \mid \bbeta} = {\normalfont \mbox{diag}}(\bar\bGam - \bDel_\bbeta^\top \bar\bOm_{\bbeta}^{-1} \bDel_\bbeta)^{1/2}$.
     \item{Posterior distribution}.   $(\bbeta \mid \by) \sim {\normalfont \SUE}_{p,q}(\bxi_{\bbeta\mid \by}, \bOm_{\bbeta \mid \by}, \bDel_{\bbeta \mid \by}, \btau_{\bbeta \mid \by}, \bar\bGam_{\bbeta \mid \by}, g^{(p+q)}_{Q_\by(\by)}),$
     with parameters 
 \begin{eqnarray*}
 \begin{split}
&\bxi_{\bbeta \mid \by} = \bxi_\bbeta + \bOm_{\bbeta\by} \bOm_{\by}^{-1} (\by - \bxi_\by), \quad \bOm_{\bbeta \mid \by} = \bOm_{\bbeta} - \bOm_{ \bbeta \by} \bOm_{\by}^{-1} \bOm_{\by \bbeta}, \quad  \bDel_{\bbeta \mid \by} = \bom_{\bbeta \mid \by}^{-1} (\bom_\bbeta \bDel_\bbeta - \bOm_{\bbeta \by} \bOm_{\by}^{-1} \bom_\by \bDel_\by) \bgam_{\bbeta \mid \by}^{-1}, \\
& \btau_{\bbeta \mid \by} = \bgam_{\bbeta \mid \by}^{-1}[\btau +\bDel_\by^\top \bar\bOm_{\by}^{-1} \bom_\by^{-1} (\by- \bxi_\by)], \quad \bar{\bGam}_{\bbeta \mid \by} =  \bgam_{\bbeta \mid \by}^{-1}(\bar\bGam - \bDel_\by^\top \bar\bOm_{\by}^{-1} \bDel_\by) \bgam_{\bbeta \mid \by}^{-1}, \quad Q_\by(\by) = (\by - \bxi_\by)^\top \bOm_\by^{-1} (\by - \bxi_\by),
\end{split}
\end{eqnarray*}    
where $\bom_{\bbeta \mid \by} = {\normalfont \mbox{diag}}(\bOm_{\bbeta \mid \by})^{1/2}$ and $\bgam_{\bbeta \mid \by} = {\normalfont \mbox{diag}}(\bar\bGam - \bDel_\by^\top \bar\bOm_{\by}^{-1} \bDel_\by)^{1/2}$.
  \end{enumerate} 
\end{prop}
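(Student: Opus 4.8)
The plan is to observe that the pair $(\bbeta^\top, \by^\top)^\top$ is obtained from $(\bbeta^\top, \beps^\top)^\top$ through a single full-rank linear map, and then to read off all four conclusions from the closure properties in Lemmas~\ref{linSUE} and~\ref{condSUE}. Since $\by = \bX\bbeta + \beps$, one has
\begin{eqnarray*}
\begin{bmatrix} \bbeta \\ \by \end{bmatrix} = \bA \begin{bmatrix} \bbeta \\ \beps \end{bmatrix}, \qquad \bA = \begin{bmatrix} \bI_p & \0 \\ \bX & \bI_n \end{bmatrix},
\end{eqnarray*}
where $\bA$ is square and invertible (block lower-triangular with identity diagonal blocks), hence of full rank $p+n$. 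First I would apply the linear-combination part of Lemma~\ref{linSUE} with this $\bA$ and $\bm{b} = \0$. This immediately yields that $(\bbeta^\top, \by^\top)^\top$ is $\SUE_{p+n,q}$ with location $\bA\bxi$, dispersion $\bA\bOm\bA^\top$, and shape $\bDel_\bA = \bom_\bA^{-1}\bA\bom\bDel$, while $\btau$, $\bar\bGam$ and the generator are unchanged. A short block computation then shows that these three transformed quantities coincide exactly with $\bxi^\dag$, $\bOm^\dag$ and $\bDel^\dag$ in \eqref{ybeta}; in particular, $\diag(\bA\bOm\bA^\top)^{1/2}$ reduces to the block-diagonal matrix with blocks $\bom_\bbeta$ and $\bom_\by$, so the $\bom_\bbeta^{-1}$ factor cancels in the upper block of $\bDel_\bA$ and leaves $\bDel_\bbeta$ as claimed.

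With the joint law in hand, the remaining three statements follow from the other closure operations. For part (a), the prior $\bbeta$ is the marginal of $(\bbeta^\top, \by^\top)^\top$ over its first $p$ coordinates, so the marginalization part of Lemma~\ref{linSUE} (taking $C = \{1,\dots,p\}$) returns $\SUE_{p,q}(\bxi_\bbeta, \bOm_\bbeta, \bDel_\bbeta, \btau, \bar\bGam, g^{(p+q)})$; equivalently, this is the marginal of the original $(\bbeta^\top,\beps^\top)^\top$, since $\bA$ leaves the $\bbeta$-block untouched. For the likelihood (b) and the posterior (c), I would invoke Lemma~\ref{condSUE} on the joint SUE obtained in the first step. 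Identifying $\bz_i$ with $\by$ and $\bz_j$ with $\bbeta$ gives $(\by \mid \bbeta)$, and substituting the blocks of $\bxi^\dag$, $\bOm^\dag$, $\bDel^\dag$ into \eqref{partcondSUE} produces the stated $\bxi_{\by\mid\bbeta}$, $\bOm_{\by\mid\bbeta}$, $\bDel_{\by\mid\bbeta}$, $\btau_{\by\mid\bbeta}$, $\bar\bGam_{\by\mid\bbeta}$ together with the conditional generator $g^{(n+q)}_{Q_\bbeta(\bbeta)}$; the symmetric identification of $\bz_i$ with $\bbeta$ and $\bz_j$ with $\by$ yields part (c) with generator $g^{(p+q)}_{Q_\by(\by)}$. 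Since (a) is the prior and (b) the likelihood, Bayes' rule $\mathcal{p}(\bbeta\mid\by) \propto \mathcal{p}(\bbeta)\mathcal{p}(\by\mid\bbeta)$ together with (c) delivers the advertised conjugacy.

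I expect the only real work to be bookkeeping rather than a conceptual obstacle: verifying that the generic conditional parameters of Lemma~\ref{condSUE} specialize to the explicit expressions listed in (b) and (c). In particular, one must check that the Schur-complement structure $\bOm_{\by} - \bOm_{\by\bbeta}\bOm_\bbeta^{-1}\bOm_{\bbeta\by}$ and the shape-adjustment term $\bom_\by\bDel_\by - \bOm_{\by\bbeta}\bOm_\bbeta^{-1}\bom_\bbeta\bDel_\bbeta$ emerge correctly after inserting the blocks of $\bOm^\dag$ and $\bDel^\dag$, and that the degrees-of-freedom/generator indices match under marginalization and conditioning. These are direct, if slightly tedious, matrix identities; no new distributional argument is needed beyond the three closure lemmas, so the proposition reduces to a careful application of Lemmas~\ref{linSUE}--\ref{condSUE} to the specific affine map induced by the regression \eqref{linreg}.
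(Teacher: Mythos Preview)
Your proposal is correct and follows essentially the same route as the paper: both express $(\bbeta^\top,\by^\top)^\top$ as $\bA(\bbeta^\top,\beps^\top)^\top$ with the block lower-triangular $\bA$, invoke Lemma~\ref{linSUE} to obtain the joint SUE with parameters $\bxi^\dag,\bOm^\dag,\bDel^\dag$, and then read off the prior from the marginalization part of Lemma~\ref{linSUE} and the likelihood and posterior from Lemma~\ref{condSUE}. Your additional remarks on the full rank of $\bA$ and on the block structure of $\bom_\bA$ are correct and make the bookkeeping slightly more explicit than in the paper, but the underlying argument is identical.
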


\begin{rmk}\label{rem_tot}
\normalfont Before proving Proposition~\ref{Prop1}, it shall be emphasized that the above results, along with those provided in  Propositions~\ref{Prop2} and \ref{Prop3}, are purposely stated in a highly-general form in order to derive a comprehensive conjugacy theory for SUE distributions that is of broader and independent interest in expanding the theoretical analysis of such a family. As clarified in Examples~\ref{exsun}--\ref{exsut3}, priors and likelihoods of potential interest in practice are only a subset of the general results in Propositions~\ref{Prop1}--\ref{Prop3}. More specifically, setting $\bOm_{\beps \bbeta}=\bOm^\top_{\bbeta \beps }=\0$, $\btau=\0$, $\bxi_{\beps}=\0$, and either $\bDel_\bbeta=\0$ or  $\bDel_\beps=\0$, would be sufficient to recover most of the priors and likelihoods of direct interest in applications.
\end{rmk}

\begin{proof}
To prove \eqref{ybeta} in Proposition~\ref{Prop1}, first notice that $(\bbeta^\top, \by^\top)^\top=\bA(\bbeta^\top, \beps^\top)^\top$, where $\bA$ denotes a known  matrix of dimension $(p+n) \times (p+n)$ with blocks $\bA_{11}=  \bI_p$, $\bA_{12}=  \0$, $\bA_{21}=  \bX$ and $\bA_{22}=  \bI_n$. Combining such a representation with the closure under linear combination properties of SUE distributions presented in Lemma \ref{linSUE}, we have that $(\bbeta^\top, \by^\top)^\top \sim {\normalfont  \SUE}_{p+n,q}(\bA \bxi, \bA \bOm \bA^\top, \bDel_\bA, \btau, \bar\bGam, g^{(p+n+q)}),$ where $\bA \bxi=\bxi^\dag$, $\bA \bOm \bA^\top=\bOm^\dag$ and $\bDel_\bA=\bDel^\dag$. As a result, the prior distribution for $\bbeta$ in Proposition~\ref{Prop1} follows directly from the closure under marginalization of the SUE family outlined in Lemma \ref{linSUE}. Similarly, the likelihood $(\by \mid \bbeta)$ and posterior  $(\bbeta \mid \by)$  in Proposition~\ref{Prop1} can be readily derived by applying the closure under conditioning properties in Lemma \ref{condSUE} to the joint SUE distribution for $(\bbeta^\top, \by^\top)^\top$ presented in Proposition~\ref{Prop1}, with parameters $\bxi^\dag$, $\bOm^\dag$ and $\bDel^\dag$ partitioned as in \eqref{ybeta}.
\end{proof}

As anticipated within Section~\ref{sec:3}, the results in Proposition \ref{Prop1} clarify that the joint distribution for  $\bbeta$ and $\beps$ requires some form of dependence to guarantee conjugacy. In this respect, notice that even when $\bxi=\0$, $\bOm_{\beps \bbeta}=\bOm^\top_{\bbeta \beps }=\0$ and $\bDel=\0$, by the closure under conditioning properties of the unified skew-elliptical family, it  follows that {$(\beps \mid \bbeta) \sim {\normalfont \SUE}_{n,q}(\0, \bOm_\beps, \0, \btau, \bar\bGam, g_{Q_\bbeta(\bbeta)}^{(n+q)})$}, which clarifies that a weak form of dependence persists in the conditional density generator. Nonetheless, such a form of higher-level dependence still allows to include within the results in Proposition~\ref{Prop1}  interesting models with uncorrelated $\bbeta$ and $\beps$ vectors. Recalling the expression for the SUE covariance matrix in \eqref{varSUE}, a sufficient condition to retrieve these uncorrelated representations is to assume either $\bDel_\bbeta=\0$ or  $\bDel_\beps=\0$, and set $\bOm_{\beps \bbeta}=\bOm^\top_{\bbeta \beps }=\0$. When both $\bDel_\bbeta$ and  $\bDel_\beps$ are $\0$, and also $\btau=\0$, by point (i) in Lemma~\ref{lemRedundant}, {$(\bbeta^\top, \beps^\top)^\top$} reduces to an elliptical distribution. As such, conjugacy under this latter class can be obtained as a special case of Proposition \ref{Prop1}.

The above discussion clarifies that full independence between  $\bbeta$ and $\beps$ cannot be generally enforced if the objective is to obtain broad conjugacy results as in Proposition \ref{Prop1} that hold for the whole SUE family. Nonetheless, in the specific setting of Gaussian density generators, which leads to the sub-class of SUN distributions, such a full independence can be enforced without undermining conjugacy. As discussed in Section~\ref{sec:2.21}, under this specific choice, the conditional density generator coincides with the unconditional one, thus allowing to enforce independence between  $\bbeta$ and $\beps$ while preserving conjugacy. This is clear from the results in \citet{anceschi23}, that establish SUN conjugacy via a classical Bayes rule perspective, without requiring to specify a joint distribution for $\bbeta$ and $\beps$ or, alternatively, $\bbeta$ and $\by$. As illustrated in Example~\ref{exsun} below, these conjugacy results can be obtained as a particular case of those  in Proposition \ref{Prop1}.

\begin{exa}[\bf Multivariate unified skew-normal (SUN) conjugacy] \label{exsun}
\normalfont The supplementary materials of \citet{anceschi23} present an example based on a classical linear regression with skew-normal errors, i.e.,  $(y_i \mid \bbeta) \sim \SN(\bx_i^\top \bbeta, \sigma^2, \alpha)$, independently for  $i \in \{1, \ldots, n\}$, and, consistent with our notation, $\mbox{SUN}_{p,q}(\bxi_{\bbeta}, \bOm_{\bbeta}, \bDel_{\bbeta}, \btau_{\bbeta}, \bar{\bGam}_{\bbeta})$ prior for $\bbeta$. This model yields a likelihood $\mathcal{p}(\by \mid \bbeta) \propto \phi_n(\by - \bX \bbeta; \sigma^2 \bI_n) \, \Phi_n(\alpha \by - \alpha \bX \bbeta; \sigma^2 \bI_n)$, proportional to a $\mbox{SUN}_{n,n}(\bX \bbeta, \sigma^2 \bI_n, \alpha \sigma \bI_n, \0, (1+\alpha^2) \sigma^2 \bI_n)$ density. Leveraging Lemma \ref{stand}, such a SUN is equivalent to $(\by \mid \bbeta) \sim  \mbox{SUN}_{n,n}(\bX \bbeta, \sigma^2 \bI_n, [\alpha/(1+\alpha^2)^{1/2}]  \bI_n, \0, \bI_n)$. Before showing  that this Bayesian formulation is a special case of the broader family of models and priors in Proposition \ref{Prop1}, it shall be emphasized that this construction also comprises classical multivariate Gaussian priors for $\bbeta$, when $\bDel_{\bbeta}=\0$, and Gaussian linear regression for $\by$ if $\alpha=0$. Replacing $\sigma^2 \bI_n$ with a full covariance matrix also leads to general multivariate versions of such models. This yields an important class of routinely-implemented formulations.

To rephrase the above Bayesian formulation within those covered by Proposition \ref{Prop1}, consider the case $(\bbeta^\top, \beps^\top)^\top \sim  {\normalfont \SUN}_{p+n,q+n} (\bxi, \bOm, \bDel, \btau, \bar\bGam)$ with parameters partitioned as
\begin{eqnarray*}
    \bxi = \begin{bmatrix}
        \bxi_\bbeta\\ 
        \0
        \end{bmatrix},\qquad
    \bOm = \begin{bmatrix}
        \bOm_\bbeta & \0 \\ 
        \0 & \sigma^2 \bI_n
        \end{bmatrix},\qquad
    \bDel = \begin{bmatrix}
        \bDel_{\bbeta} & \0\\
        \0 & \bar{\alpha} \bI_n
        \end{bmatrix},\qquad
    \btau = \begin{bmatrix}
        \btau_\bbeta\\
        \0
    \end{bmatrix},\qquad      
    \bar\bGam = \begin{bmatrix}
        \bar\bGam_{\bbeta} & \0\\
        \0 & \bI_n
    \end{bmatrix},
\end{eqnarray*}
where $\bar{\alpha}=\alpha/(1+\alpha^2)^{1/2}$. Adapting Proposition \ref{Prop1} to this setting, yields $(\bbeta^\top, \by^\top)^\top \sim {{\normalfont  \SUN}_{p+n,q+n}(\bxi^\dag, \bOm^\dag, \bDel^\dag, \btau, \bar\bGam)},$ with
\begin{eqnarray*}
    \bxi^\dag = \begin{bmatrix}
        \bxi_\bbeta\\ 
        \bX \bxi_\bbeta
        \end{bmatrix},\quad
    \bOm^\dag = \begin{bmatrix}
        \bOm_\bbeta &   \bOm_\bbeta \bX^\top \\ 
        \bX   \bOm_\bbeta &   \bX   \bOm_\bbeta  \bX^\top  +\sigma^2 \bI_n
        \end{bmatrix},\quad
    \bDel^\dag = \begin{bmatrix}
        \bDel_{\bbeta} & \0\\
        \bom_\by^{-1} \bX \bom_\bbeta \bDel_\bbeta &   \bom_\by^{-1} \sigma \bar{\alpha} \bI_n
        \end{bmatrix},\quad
    \btau = \begin{bmatrix}
        \btau_\bbeta\\
        \0
    \end{bmatrix},\quad      
    \bar\bGam = \begin{bmatrix}
        \bar\bGam_{\bbeta} & \0\\
        \0 & \bI_n
    \end{bmatrix}.
\end{eqnarray*}
As a direct consequence of the closure under linear combinations of SUE, and hence SUN, distributions, together with point (ii) in Lemma~\ref{lemRedundant} (see also Remark~\ref{rem_sun_ii}), the above formulation implies 
\begin{eqnarray*}
\bbeta \sim \mbox{SUN}_{p, q}(\bxi_{\bbeta}, \bOm_{\bbeta}, \bDel_{\bbeta}, \btau_{\bbeta}, \bar{\bGam}_{\bbeta}), \qquad \beps \sim \mbox{SUN}_{n, n}(\0, \sigma^2 \bI_n, \bar{\alpha}\bI_n, \0, \bI_n),
\end{eqnarray*}
 where the marginal for $\bbeta$ coincides with the prior considered in  \citet{anceschi23}, whereas $\beps$ corresponds to a noise vector comprising $n$ independent skew-normals.  Similarly, by applying to the above parameters the expressions for those of $(\by \mid \bbeta)$ in  Proposition \ref{Prop1}, and recalling Remark~\ref{rem_sun_ii}, yields after standard calculations the following  likelihood  
\begin{eqnarray*}
(\by \mid \bbeta) \sim  \mbox{SUN}_{n,n}(\bX \bbeta, \sigma^2 \bI_n, [\alpha/(1+\alpha^2)^{1/2}]  \bI_n, \0, \bI_n).
\end{eqnarray*}
 Such a likelihood is proportional to the one considered in  \citet{anceschi23} for the general skew-normal regression setting, which includes the Gaussian as a special case and can be readily extended to more general multivariate models. As such, Proposition \ref{Prop1} also covers  SUN conjugacy properties in commonly-implemented linear models.
\end{exa}

\vspace{2pt}
\begin{exa}[\bf Multivariate unified skew-{\boldsymbol{$t$}} (SUT) conjugacy]\label{exsut}
\normalfont As stated in Corollary~\ref{cor1}, by specializing  Proposition \ref{Prop1} to the SUT sub-family presented in Section~\ref{sec:2.22}, it is possible to derive novel conjugacy results not yet explored in the current literature, along with specific examples of potential interest in applications.
\begin{corollary}
  Consider the model in \eqref{linreg}, and assume  $ (\bbeta^\top, \beps^\top)^\top \sim {\normalfont \SUT}_{p+n,q} (\bxi, \bOm, \bDel, \btau, \bar\bGam, \nu),$ with  $\bxi, \bOm, \bDel$ partitioned as in \eqref{beSUEpart}. Then, the prior distribution for $\bbeta$ is $\bbeta \sim {\normalfont \SUT}_{p,q}(\bxi_\bbeta, \bOm_\bbeta, \bDel_\bbeta, \btau, \bar\bGam, \nu)$, whereas the likelihood and posterior are {$(\by \mid \bbeta) \sim {\normalfont \SUT}_{n,q} (\bxi_{\by \mid\bbeta}, \alpha_\bbeta \bOm_{\by \mid\bbeta}, \bDel_{\by \mid\bbeta}, \alpha_\bbeta^{-1/2} \btau_{\by \mid\bbeta}, {\bar \bGam}_{\by \mid\bbeta}, \nu + p)$} and {$(\bbeta \mid \by) \sim {\normalfont \SUT}_{p,q} (\bxi_{\bbeta \mid \by}, \alpha_\by \bOm_{\bbeta \mid \by}, \bDel_{\bbeta \mid \by}, \alpha_\by^{-1/2} \btau_{\bbeta \mid \by}, {\bar \bGam}_{\bbeta \mid \by}, \nu + n)$}, respectively. In these expressions, $\alpha_\bbeta = [\nu + Q_\bbeta(\bbeta)]/(\nu + p)$ and $\alpha_\by = [\nu + Q_\by(\by)]/(\nu + n)$. The remaining parameters, along with $Q_\bbeta(\bbeta)$ and $Q_\by(\by)$, are defined as in Proposition \ref{Prop1}. 
  \label{cor1}
\end{corollary}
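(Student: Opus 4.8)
The plan is to read the SUT family as the special case of SUE obtained by taking the Student's $t$ density generators $g^{(m)} = t_\nu^{(m)}$ of Section~\ref{sec:2.22}, and then to apply Proposition~\ref{Prop1} \emph{verbatim}, re-expressing each resulting SUE law as a SUT by identifying its (conditional) density generator. Setting $g^{(p+n+q)} = t_\nu^{(p+n+q)}$ in Proposition~\ref{Prop1} immediately delivers the joint law of $(\bbeta^\top, \by^\top)^\top$, together with the prior, likelihood and posterior, as SUE distributions whose location, dispersion and shape parameters are exactly those displayed in Proposition~\ref{Prop1}; only the density generators need to be converted. For the prior this is immediate: the $(p+q)$-dimensional marginal generator of $t_\nu^{(p+n+q)}$ is again $t_\nu^{(p+q)}$ (marginalizing a multivariate $t$ leaves $\nu$ fixed and merely lowers the dimension), so $\bbeta \sim \SUE_{p,q}(\bxi_\bbeta, \bOm_\bbeta, \bDel_\bbeta, \btau, \bar\bGam, t_\nu^{(p+q)})$, which is by definition $\SUT_{p,q}(\bxi_\bbeta, \bOm_\bbeta, \bDel_\bbeta, \btau, \bar\bGam, \nu)$.

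For the likelihood and posterior the generators produced by Lemma~\ref{condSUE} are the \emph{conditional} ones, $g^{(n+q)}_{Q_\bbeta(\bbeta)}$ and $g^{(p+q)}_{Q_\by(\by)}$. The key identity I would record --- already implicit in the passage from~\eqref{pdfSUE} to~\eqref{pdfSUT} in Section~\ref{sec:2.22} --- is that conditioning a multivariate $t$ reproduces a $t$ generator with inflated degrees of freedom and a rescaled argument, namely
\[ g^{(n+q)}_{Q_\bbeta(\bbeta)}(u) = \frac{t_\nu^{(p+n+q)}[Q_\bbeta(\bbeta)+u]}{t_\nu^{(p)}[Q_\bbeta(\bbeta)]} = \alpha_\bbeta^{-(n+q)/2}\, t_{\nu+p}^{(n+q)}(u/\alpha_\bbeta), \qquad \alpha_\bbeta = \frac{\nu+Q_\bbeta(\bbeta)}{\nu+p}, \]
and, symmetrically, $g^{(p+q)}_{Q_\by(\by)}(u) = \alpha_\by^{-(p+q)/2}\, t_{\nu+n}^{(p+q)}(u/\alpha_\by)$ with $\alpha_\by = [\nu+Q_\by(\by)]/(\nu+n)$. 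Verifying this reduces to the factorization $1 + (Q_\bbeta+u)/\nu = (1+Q_\bbeta/\nu)\,[1 + u/(\nu+Q_\bbeta)]$ together with the check that the constant $c(\nu,p+n+q)/c(\nu,p)$ collapses, by elementary $\Gamma$-function algebra, onto $\alpha_\bbeta^{-(n+q)/2}(1+Q_\bbeta/\nu)^{-(n+q)/2}c(\nu+p,n+q)$; this is routine bookkeeping that I would not belabour.

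The crux is converting the rescaling of the generator's argument by $\alpha_\bbeta$ into a rescaling of the SUE parameters. For this I would establish a short scaling lemma: if a master generator satisfies $g^{(k+q)}(u) = \alpha^{-(k+q)/2} h^{(k+q)}(u/\alpha)$ for some $\alpha>0$, then the elliptical scaling relation $\EC_{k+q}(\bmu, \bSig, g^{(k+q)}) \stackrel{d}{=} \EC_{k+q}(\bmu, \alpha\bSig, h^{(k+q)})$, inserted into the selection representation~\eqref{distY}, multiplies the whole dispersion block of~\eqref{distY} by $\alpha$; restandardizing the latent block with Lemma~\ref{stand}, and using that $\diag(\alpha\bar\bGam)^{1/2}=\alpha^{1/2}\bI$ because $\bar\bGam$ has unit diagonal, returns the shape and latent-correlation parameters to $\bDel$ and $\bar\bGam$ while loading a factor $\alpha^{-1/2}$ onto the truncation. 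The net statement is
\[ \SUE_{k,q}(\bxi, \bOm, \bDel, \btau, \bar\bGam, g^{(k+q)}) \stackrel{d}{=} \SUE_{k,q}(\bxi, \alpha\bOm, \bDel, \alpha^{-1/2}\btau, \bar\bGam, h^{(k+q)}). \]
Applying this with $(k,h,\alpha)=(n, t_{\nu+p}, \alpha_\bbeta)$ --- for fixed $\bbeta$, so that $\alpha_\bbeta>0$ is a constant --- to the likelihood SUE of Proposition~\ref{Prop1}(b), and with $(k,h,\alpha)=(p, t_{\nu+n}, \alpha_\by)$ to the posterior SUE of Proposition~\ref{Prop1}(c), turns these into SUE laws with generators $t_{\nu+p}^{(n+q)}$ and $t_{\nu+n}^{(p+q)}$, i.e.\ exactly the claimed $\SUT_{n,q}(\bxi_{\by\mid\bbeta}, \alpha_\bbeta\bOm_{\by\mid\bbeta}, \bDel_{\by\mid\bbeta}, \alpha_\bbeta^{-1/2}\btau_{\by\mid\bbeta}, \bar\bGam_{\by\mid\bbeta}, \nu+p)$ and $\SUT_{p,q}(\bxi_{\bbeta\mid\by}, \alpha_\by\bOm_{\bbeta\mid\by}, \bDel_{\bbeta\mid\by}, \alpha_\by^{-1/2}\btau_{\bbeta\mid\by}, \bar\bGam_{\bbeta\mid\by}, \nu+n)$.

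I expect the main obstacle to be this last translation: tracking, through the scaling lemma and the restandardization of Lemma~\ref{stand}, which parameters are multiplied by $\alpha$, which by $\alpha^{-1/2}$, and which are left invariant. In particular one must confirm that scaling the entire dispersion block of~\eqref{distY} by $\alpha$ leaves the \emph{standardized} shape $\bDel$ and correlation $\bar\bGam$ unchanged --- because $\bar\bGam$ has unit diagonal, its inflation is undone exactly upon restandardization --- so that only $\bOm\mapsto\alpha\bOm$ and $\btau\mapsto\alpha^{-1/2}\btau$ survive; the degrees-of-freedom shifts ($\nu\mapsto\nu+p$ for the likelihood, $\nu\mapsto\nu+n$ for the posterior) are then dictated automatically by the conditioned-upon dimension in the generator identity.
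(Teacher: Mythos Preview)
Your proposal is correct and follows the same route as the paper: specialize Proposition~\ref{Prop1} to the Student's $t$ density generators of Section~\ref{sec:2.22} and then recognize the resulting SUE laws as SUTs. The paper's proof is a two-line sketch (``replace the generic density generators within Proposition~\ref{Prop1} with those of the Student's $t$''), whereas you have carefully unpacked the two steps this leaves implicit --- the conditional-generator identity $g^{(n+q)}_{Q_\bbeta(\bbeta)}(u)=\alpha_\bbeta^{-(n+q)/2}t_{\nu+p}^{(n+q)}(u/\alpha_\bbeta)$ and the scaling lemma that converts a rescaled generator into the parameter map $(\bOm,\btau)\mapsto(\alpha\bOm,\alpha^{-1/2}\btau)$ via Lemma~\ref{stand}; both computations check out. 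The paper also notes an alternative via Proposition~11 of \citet{wang24}, which would let you quote the SUT conditioning formula directly rather than rederiving it from the SUE level.
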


\begin{proof}
The proofs follow directly by replacing the generic density generators within Proposition \ref{Prop1} with those of the Student's $t$ distribution presented in Section~\ref{sec:2.22}. Alternatively, it is possible to prove the statement leveraging the specific properties of unified skew-$t$ distributions in Proposition 11 of \citet{wang24}.
\end{proof}

Corollary~\ref{cor1} states a general conjugacy result which further includes those of the SUN as a limiting case, provided that SUT distributions converge to SUNs when $\nu \to \infty$ \citep[e.g.,][]{wang24}. In addition, suitable constraints on the parameters of the joint SUT distribution for $ (\bbeta^\top, \beps^\top)^\top$ in Corollary~\ref{cor1}, yield priors and likelihoods of potential practical interest. In particular, consider $ (\bbeta^\top, \beps^\top)^\top \sim {\normalfont \SUT}_{p+n,q} (\bxi, \bOm, \bDel, \btau, \bar\bGam, \nu),$ with 
\begin{eqnarray*}
    \bxi = \begin{bmatrix}
        \bxi_\bbeta\\ 
        \0
        \end{bmatrix},\qquad
    \bOm = \begin{bmatrix}
        \bOm_\bbeta & \0 \\ 
        \0 & \bOm_\beps 
        \end{bmatrix},\qquad
    \bDel = \begin{bmatrix}
        \0\\
       \bDel_\beps
        \end{bmatrix},
\end{eqnarray*}
and $\btau=\0$. Then, by the closure under marginalization of SUE distributions combined with Lemma~\ref{lemRedundant} and Corollary~\ref{cor1}, we have that $\bbeta$ and $\beps$ are uncorrelated and have marginals
\begin{eqnarray*}
\bbeta \sim \mathcal{T}_{p}(\bxi_\bbeta, \bOm_\bbeta, \nu), \qquad \beps \sim {\normalfont \SUT}_{n,q} (\0,\bOm_{\beps}, \bDel_{\beps}, \0, {\bar \bGam}, \nu),
\end{eqnarray*}
which yield a Bayesian multivariate regression model $\by= \bX \bbeta + \beps$ with Student's $t$ prior on $\bbeta$ and unified skew-$t$ residuals in $\beps$,  uncorrelated with $\bbeta$. In the above expression, $\mathcal{T}_{p}(\bxi_\bbeta, \bOm_\bbeta, \nu)$ denotes the $p$-variate Student's $t$ distribution with location $\bxi_\bbeta$, scale $\bOm_\bbeta$ and degrees of freedom $\nu$. By Corollary~\ref{cor1}, this implies the SUT likelihood  
\begin{eqnarray*}
 (\by \mid \bbeta) \sim {\normalfont \SUT}_{n,q} (\bX\bbeta, \alpha_\bbeta \bOm_{\beps}, \bDel_{\beps}, \0, {\bar \bGam}, \nu + p).
 \end{eqnarray*}
  Including the additional constrain $ \bDel_{\beps}=\0$ within such a formulation, and recalling again Lemma~\ref{lemRedundant}, it is possible to obtain the Bayesian Student's $t$ regression with prior and likelihood given by
 \begin{eqnarray*}
\bbeta \sim \mathcal{T}_{p}(\bxi_\bbeta, \bOm_\bbeta, \nu), \qquad  (\by \mid \bbeta) \sim \mathcal{T}_{n}(\bX\bbeta, \alpha_\bbeta \bOm_{\beps}, \nu + p),
 \end{eqnarray*}
which yields, by Corollary~\ref{cor1},  a $p$-variate Student's $t$ posterior for $ \bbeta$. This result provides an important finding which clarifies that, in specific contexts of potential practical interest, Student's $t$ -- Student's $t$ conjugacy can be attained, thereby expanding some earlier findings in \citet{song2016bayesian} on a simpler formulation. As is clear from the expression of the likelihood, for this property to hold it is necessary to incorporate the classical location dependence on $\bbeta$ via $\bX\bbeta$, together with a weak form of additional dependence induced by the scaling term  $\alpha_\bbeta = [\nu + (\bbeta - \bxi_\bbeta)^\top \bOm_\bbeta^{-1} (\bbeta - \bxi_\bbeta)]/(\nu + p)$. Recalling \citet{zhang23}, under weakly informative Student's $t$ priors employed in practice for $\bbeta$, such an effect tends to be small, and most of the dependence between $\bbeta$ and $\by$ is through the classical linear predictor $\bX\bbeta$. In addition, notice that, when $\bbeta \sim \mathcal{T}_{p}(\bxi_\bbeta, \bOm_\bbeta, \nu)$, then $(\bbeta - \bxi_\bbeta)^\top \bOm_\bbeta^{-1} (\bbeta - \bxi_\bbeta)/p$ has $F$ distribution with degrees of freedom $p$ and $\nu$, which implies that for moderate $p$ and $\nu$, the  term {$(\bbeta - \bxi_\bbeta)^\top \bOm_\bbeta^{-1} (\bbeta - \bxi_\bbeta)/p$} and, hence $\alpha_\bbeta$, shrink around $1$. 
\end{exa}

Section~\ref{sec:3.2}, shows that the conjugacy properties of SUE distributions derived in Proposition \ref{Prop1} extend even beyond multivariate linear models for continuous response vectors, to cover, in particular, also generalizations of multivariate probit and multinomial probit under elliptical or skew-elliptical link functions.


\subsection{Conjugacy properties of SUE distributions in multivariate binary models}\label{sec:3.2}
When the focus is on Bayesian modeling of multivariate binary data $\by \in \{0,1\}^n$, a natural strategy, which extends classical probit, multivariate probit and multinomial probit formulations \citep{albert1993bayesian,chib1998analysis}, is to adapt the class of models studied in Section~\ref{sec:3.1} to such a setting, by assuming
\begin{eqnarray}\label{probreg}
\by=[1(\bar{y}_1 >0), \ldots, 1(\bar{y}_n >0) ]^\top, \qquad  \qquad    \bar{\by} = \bX \bbeta + \beps,
\end{eqnarray}
where $1(\cdot)$ is the indicator function, $ \bar{\by}  = (\bar{y}_1, \ldots, \bar{y}_n)^\top \in \R^n$ and $(\bbeta^\top, \beps^\top)^\top \sim  {\normalfont \SUE}_{p+n,q} (\bxi, \bOm, \bDel, \btau, \bar\bGam, g^{(p+n+q)})$. Proposition~\ref{Prop2} clarifies from a general perspective that  SUE conjugacy can be established also in these contexts. This unifies and extends  contributions by   \citet{durante19}, \citet{fasano2022class}, \citet{anceschi23} and \citet{zhang23} on particular  SUE sub-classes; i.e., SUNs and specific skew-elliptical distributions in the  SUE family; see also Remark~\ref{rem_tot}.

\begin{prop}\label{Prop2}
Consider the binary random vector $\by \in \{0,1\}^n$, defined as in \eqref{probreg} and let $\bD_\by = \mbox{\normalfont diag}(2 y_1 - 1, \ldots, 2 y_n - 1)$. Moreover, assume again $(\bbeta^\top, \beps^\top)^\top \sim  {\normalfont \SUE}_{p+n,q} (\bxi, \bOm, \bDel, \btau, \bar\bGam, g^{(p+n+q)})$ with parameters partitioned as in \eqref{beSUEpart}. Then, $(\bbeta^\top, \bar{\by}^\top)^\top$ is a SUE with dimensions $(p+n, q)$ and parameters defined as in \eqref{ybeta}. In addition
\begin{enumerate}[label=(\alph*)]
\item{Prior distribution}. $\bbeta \sim {\normalfont \SUE}_{p,q}(\bxi_\bbeta, \bOm_\bbeta, \bDel_\bbeta, \btau, \bar\bGam, g^{(p+q)})$.
    \item{Likelihood}. $(\by \mid \bbeta)$ is a multivariate Bernoulli with probability ${\boldsymbol \Pi}_{\by \mid \bbeta}$ for the generic configuration $\by$ defined as
    \begin{eqnarray*}
    {\boldsymbol \Pi}_{\by \mid \bbeta}=\Pr(\by \mid \bbeta) = \frac{\displaystyle F_{n+q}\left(
    \begin{bmatrix}
        \bxi_{\by \mid \bbeta}\\
        \btau_{\by \mid \bbeta}
    \end{bmatrix};
    \begin{bmatrix}
        \bOm_{\by \mid \bbeta} & \bom_{\by \mid \bbeta} \bDel_{\by \mid \bbeta}\\
         \bDel_{\by \mid \bbeta}^\top \bom_{\by \mid \bbeta} & \bar\bGam_{\by \mid \bbeta}
    \end{bmatrix}, g^{(n+q)}_{Q_\bbeta(\bbeta)} \right)}
    {\displaystyle F_q(\btau_{\by \mid \bbeta}; \bar\bGam_{\by \mid \bbeta}, g^{(q)}_{Q_\bbeta(\bbeta)})}, 
    \end{eqnarray*}    
     for all $\by \in \{0,1\}^n$, and parameters available in closed form according to the following equations
    \begin{eqnarray*}
    \begin{split}
       &\bxi_{\by \mid \bbeta} = \bD_\by[\bxi_\by + \bOm_{\by \bbeta} \bOm_\bbeta^{-1}( \bbeta - \bxi_\bbeta)], \ \ \bOm_{\by \mid \bbeta} = \bD_\by(\bOm_\by - \bOm_{\by \bbeta} \bOm_\bbeta^{-1} \bOm_{\bbeta\by}) \bD_\by, \ \ \bDel_{\by \mid \bbeta} =  \bom_{\by \mid \bbeta}^{-1} \bD_\by (\bom_\by \bDel_\by - \bOm_{\by \bbeta} \bOm_{\bbeta}^{-1} \bom_\bbeta \bDel_\bbeta) \bgam_{\by \mid \bbeta}^{-1}, \\
       &
       \btau_{\by \mid \bbeta} = \bgam_{\by \mid \bbeta}^{-1} [\btau + \bDel_\bbeta^\top \bar \bOm_\bbeta^{-1} \bom_\bbeta^{-1} (\bbeta- \bxi_\bbeta)], \quad  \bar{\bGam}_{\by \mid \bbeta} = \bgam_{\by \mid \bbeta}^{-1} (\bar\bGam - \bDel_\bbeta^\top \bar\bOm_\bbeta^{-1} \bDel_\bbeta)\bgam_{\by \mid \bbeta}^{-1} ,  \quad Q_\bbeta(\bbeta) = (\bbeta - \bxi_\bbeta)^\top \bOm_\bbeta^{-1} (\bbeta - \bxi_\bbeta),
       \end{split}
    \end{eqnarray*}
    where  $\bom_{\by \mid \bbeta} = \mbox{\normalfont diag}(\bOm_{\by \mid \bbeta})^{1/2}$ and $\bgam_{\by \mid \bbeta} = \mbox{\normalfont diag}(\bar\bGam - \bDel_\bbeta^\top \bar\bOm_\bbeta^{-1} \bDel_\bbeta)^{1/2}$, while $\bxi_\by$, $\bOm_\by$, $\bom_\by$, $\bOm_{\by \bbeta}$, $\bOm_{\bbeta \by}$, and $\bDel_\by$ are as in~\eqref{ybeta}. 
    \item{Posterior distribution}. $(\bbeta \mid \by) \sim {\normalfont \SUE}_{p, n+q}(\bxi_{\bbeta \mid \by}, \bOm_{\bbeta \mid \by}, \bDel{_{\bbeta \mid \by}}, \btau{_{\bbeta \mid \by}}, \bar{{\bGam}}{_{\bbeta \mid \by}}, g^{(p+n+q)}),$ 
    \begin{eqnarray*}
        &\bxi_{\bbeta \mid \by}=\bxi_\bbeta, \quad \bOm_{\bbeta \mid \by}= \bOm_\bbeta, \quad {\bDel}_{\bbeta \mid \by} = \begin{bmatrix}
            \bar\bOm_{\bbeta \by} \bD_\by & \bDel_\bbeta
        \end{bmatrix}, \quad 
        {\btau}_{\bbeta \mid \by} = \begin{bmatrix}
            \bom_\by^{-1} \bD_\by \bxi_\by\\
            \btau
        \end{bmatrix}, \quad
        \bar{{\bGam}}_{\bbeta \mid \by} =     \begin{bmatrix}
              	\bD_\by  \bar\bOm_\by 	\bD_\by  & \bD_\by \bDel_\by\\
                \bDel_\by^\top \bD_\by & \bar\bGam
            \end{bmatrix},
    \end{eqnarray*}
    with $ \bar\bOm_{\bbeta \by}  = \bom_\bbeta^{-1} \bOm_{\bbeta \by} \bom_\by^{-1}$ and $ \bar\bOm_{\by}  = \bom_\by^{-1} \bOm_{\by} \bom_\by^{-1}$,  while $\bxi_\by$, $\bOm_\by$, $\bom_\by$, $\bOm_{\by \bbeta}$, $\bOm_{\bbeta \by}$, and $\bDel_\by$ are defined as in~\eqref{ybeta}. 
  \end{enumerate}
\end{prop}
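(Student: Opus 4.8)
The plan is to reduce every claim to the continuous-response result of Proposition~\ref{Prop1} together with the closure properties in Lemmas~\ref{linSUE} and~\ref{suecond}, by exploiting the single observation that, since $\bD_\by=\diag(2y_1-1,\ldots,2y_n-1)$ has diagonal entries $\pm 1$, the event $\{\by=[1(\bar{y}_1>0),\ldots,1(\bar{y}_n>0)]^\top\}$ in \eqref{probreg} coincides exactly with $\{\bD_\by\bar{\by}>\0\}$. First I would observe that $\bar{\by}=\bX\bbeta+\beps$ is precisely the linear model of Proposition~\ref{Prop1} with $\by$ replaced by $\bar{\by}$; hence $(\bbeta^\top,\bar{\by}^\top)^\top$ is $\SUE_{p+n,q}$ with the parameters $\bxi^\dag,\bOm^\dag,\bDel^\dag$ of \eqref{ybeta}, and the marginal prior $\bbeta\sim\SUE_{p,q}(\bxi_\bbeta,\bOm_\bbeta,\bDel_\bbeta,\btau,\bar\bGam,g^{(p+q)})$ in part (a) follows immediately from the closure under marginalization in Lemma~\ref{linSUE}. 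The substance of the proof is therefore concentrated in parts (b) and (c).

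For the likelihood in part (b), I would start from Proposition~\ref{Prop1}(b), which gives $(\bar{\by}\mid\bbeta)\sim\SUE_{n,q}$ with conditional generator $g^{(n+q)}_{Q_\bbeta(\bbeta)}$. Applying Lemma~\ref{linSUE} with the full-rank map $\bA=\bD_\by$ (which leaves the latent dimension and the generator unchanged) shows that $(\bD_\by\bar{\by}\mid\bbeta)$ is $\SUE_{n,q}$ with exactly the $\bxi_{\by\mid\bbeta},\bOm_{\by\mid\bbeta},\bDel_{\by\mid\bbeta},\btau_{\by\mid\bbeta},\bar\bGam_{\by\mid\bbeta}$ listed in the statement; here $\bom_{\by\mid\bbeta}$ is invariant under conjugation by $\bD_\by$ because $\bD_\by^2=\bI_n$, and $\bD_\by$ commutes with every diagonal matrix, so the scale rescalings in Lemma~\ref{linSUE} collapse to a single left-multiplication by $\bD_\by$. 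Then $\Pr(\by\mid\bbeta)=\Pr(\bD_\by\bar{\by}>\0\mid\bbeta)=\Pr(-\bD_\by\bar{\by}\le\0\mid\bbeta)$, and since by Lemma~\ref{linSUE} (with $\bA=-\bI_n$) the vector $-\bD_\by\bar{\by}\mid\bbeta$ is again $\SUE$ with sign-flipped location $-\bxi_{\by\mid\bbeta}$ and shape $-\bDel_{\by\mid\bbeta}$, evaluating its cumulative distribution function at $\0$ through the closed form \eqref{cdfSUE} reproduces the stated ratio of $(n+q)$- and $q$-dimensional elliptical distribution functions, with the two sign flips cancelling the off-diagonal minus signs of \eqref{cdfSUE}.

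For the posterior in part (c), the key recognition is that conditioning on the observed binary vector is conditioning on a truncation event, $(\bbeta\mid\by)\stackrel{d}{=}(\bbeta\mid\bD_\by\bar{\by}>\0)$, which is precisely the setting of Lemma~\ref{suecond}. I would first form the $\SUE_{p+n,q}$ vector $(\bbeta^\top,(\bD_\by\bar{\by})^\top)^\top$ by applying Lemma~\ref{linSUE} to $(\bbeta^\top,\bar{\by}^\top)^\top$ with the block-diagonal matrix $\diag(\bI_p,\bD_\by)$; because this matrix commutes with the diagonal scale matrix $\bom^\dag$, its shape parameter is simply $[\bDel_\bbeta^\top,(\bD_\by\bDel_\by)^\top]^\top$ and its dispersion is $\bOm^\dag$ conjugated by $\diag(\bI_p,\bD_\by)$. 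Applying Lemma~\ref{suecond} with $i=1$ (the $\bbeta$ block) and $j=2$ (the $\bD_\by\bar{\by}$ block conditioned positive) then yields a $\SUE_{p,n+q}$ distribution whose location and dispersion stay at $\bxi_\bbeta$ and $\bOm_\bbeta$, and whose enlarged latent-block parameters, after using $\bom_\bbeta^{-1}\bOm_{\bbeta\by}\bom_\by^{-1}\bD_\by=\bar\bOm_{\bbeta\by}\bD_\by$ and $\bom_\by^{-1}\bD_\by\bOm_\by\bD_\by\bom_\by^{-1}=\bD_\by\bar\bOm_\by\bD_\by$, match $\bDel_{\bbeta\mid\by}$, $\btau_{\bbeta\mid\by}$ and $\bar\bGam_{\bbeta\mid\by}$ verbatim.

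The routine parts are the two appeals to Proposition~\ref{Prop1} and the direct substitution into \eqref{cdfSUE}. The main obstacle I anticipate is the parameter bookkeeping showing that the transforms ``$\bD_\by$, then condition'' reproduce the stated formulas exactly: one must track how the $\pm 1$ diagonal $\bD_\by$ threads through the scale matrices $\bom$ and $\bgam$ in Lemmas~\ref{linSUE} and~\ref{suecond}, repeatedly invoking that $\bD_\by$ is involutory and commutes with diagonal matrices, so that conjugation leaves $\bom_{\by\mid\bbeta}$, $\bom_\by$ and the normalized dispersions intact while inserting $\bD_\by$ correctly into the shape and latent-correlation blocks. Verifying that no spurious factors survive in $\bDel_{\by\mid\bbeta}$, $\bDel_{\bbeta\mid\by}$ and $\bar\bGam_{\bbeta\mid\by}$ is the step demanding the most care.
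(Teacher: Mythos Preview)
Your proposal is correct and follows essentially the same route as the paper: both proofs hinge on the identification $\{\by\}=\{\bD_\by\bar{\by}>\0\}$, invoke Proposition~\ref{Prop1} for the joint $\SUE$ law of $(\bbeta^\top,\bar{\by}^\top)^\top$ and the prior, apply Lemma~\ref{linSUE} with the block matrix $\diag(\bI_p,\bD_\by)$, obtain the likelihood by evaluating the $\SUE$ cumulative distribution function \eqref{cdfSUE} at $\0$ for $-\bD_\by\bar{\by}\mid\bbeta$, and derive the posterior via Lemma~\ref{suecond}. The only cosmetic difference is that the paper first transforms to $(\bbeta^\top,(\bD_\by\bar{\by})^\top)^\top$ and then conditions on $\bbeta$, whereas you condition first and then apply $\bD_\by$; since the linear map and the conditioning commute here, the two orderings are equivalent.
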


\begin{proof}
  To prove Proposition~\ref{Prop2}, first notice that under model \eqref{probreg}, the probability of observing a given configuration $\by$ coincides with that of the event $\bD_\by \bar{\by}>\0$. Let $  \bD_\by \bar{\by}=: \bar{\by}_{\bD_\by}$, then 
\begin{eqnarray}\label{addi_bary_proof}
    \begin{bmatrix}
        \bbeta\\
        \bar{\by}_{\bD_\by}
    \end{bmatrix}
    = \bA_\by \begin{bmatrix}
        \bbeta\\
        \bar{\by}
    \end{bmatrix},
    \qquad \qquad
    \bA_\by = \begin{bmatrix}
        \bI_p & \0 \\
        \0  & \bD_\by 
    \end{bmatrix}, 
\end{eqnarray}
where $(\bbeta^\top, \bar{\by}^\top)^\top \sim {\normalfont  \SUE}_{p+n,q}(\bxi^\dag, \bOm^\dag, \bDel^\dag, \btau, \bar\bGam, g^{(p+n+q)}),$  with parameters as in \eqref{ybeta}. Therefore, by Lemma~\ref{linSUE},  we have that {$(\bbeta^\top, \bar{\by}_{\bD_\by}^\top)^\top \sim {\normalfont  \SUE}_{p+n,q}( \bA_\by \bxi^\dag, \bA_\by  \bOm^\dag \bA^\top_\by , \bDel_{\bA_\by} ^\dag, \btau, \bar\bGam, g^{(p+n+q)}),$}  with 
\begin{eqnarray*}
\begin{split}
    &\bA_\by\bxi^\dag = 
    \begin{bmatrix}
        \bxi_\bbeta\\
        {\bD_\by}\bxi_\by   
    \end{bmatrix}
    , \quad \bA_\by\bOm^\dag\bA^\top_\by =
    \begin{bmatrix}
        \bOm_\bbeta & \bOm_{\bbeta \by} {\bD_\by}\\
         {\bD_\by}\bOm_{\by \bbeta} &  {\bD_\by}\bOm_{\by} {\bD_\by}
    \end{bmatrix}, \quad
    \bDel_{\bA_\by}^\dag=     \begin{bmatrix}
        \bDel_\bbeta\\
          {\bD_\by}\bDel_\by    
    \end{bmatrix}. \    \end{split}
\end{eqnarray*}   
Under the above construction, the prior for $\bbeta$ follows directly by the SUE closure under marginalization. 

As for the likelihood of $\by$, recall that $\Pr(\by \mid \bbeta)=\Pr(\bar{\by}_{\bD_\by}>\0 \mid \bbeta)$. Moreover, by applying the results in Proposition~\ref{Prop1} to the random vector  {$(\bbeta^\top, \bar{\by}_{\bD_\by}^\top)^\top$}, we have {$(\bar{\by}_{\bD_\by} \mid \bbeta) \sim  {\normalfont \SUE}_{n,q}(\bxi_{\by \mid \bbeta}, \bOm_{\by  \mid \bbeta}, \bDel_{\by  \mid \bbeta}, \btau_{\by  \mid \bbeta}, \bar\bGam_{\by \mid \bbeta}, g^{(n+q)}_{Q_\bbeta(\bm{\bbeta})}),$} with parameters defined as in Proposition~\ref{Prop2}. Therefore,  {$\Pr(\by \mid \bbeta)$} coincides with the cumulative distribution function, evaluated at $\0$, of the SUE random variable {$(-\bar{\by}_{\bD_\by} \mid \bbeta) \sim  {\normalfont \SUE}_{n,q}(-\bxi_{\by \mid \bbeta}, \bOm_{\by  \mid \bbeta},- \bDel_{\by  \mid \bbeta}, \btau_{\by  \mid \bbeta}, \bar\bGam_{\by \mid \bbeta}, g^{(n+q)}_{Q_\bbeta(\bm{\bbeta})})$}. As a consequence, by applying \eqref{cdfSUE} to such a SUE yields
    \begin{eqnarray*}   
\Pr(\by \mid \bbeta) = \frac{\displaystyle F_{n+q}\left(
    \begin{bmatrix}
        \bxi_{\by \mid \bbeta}\\
        \btau_{\by \mid \bbeta}
    \end{bmatrix};
    \begin{bmatrix}
        \bOm_{\by \mid \bbeta} & \bom_{\by \mid \bbeta} \bDel_{\by \mid \bbeta}\\
         \bDel_{\by \mid \bbeta}^\top \bom_{\by \mid \bbeta} & \bar\bGam_{\by \mid \bbeta}
    \end{bmatrix}, g^{(n+q)}_{Q_\bbeta(\bbeta)} \right)}
    {\displaystyle F_q(\btau_{\by \mid \bbeta}; \bar\bGam_{\by \mid \bbeta}, g^{(q)}_{Q_\bbeta(\bbeta)})},     \end{eqnarray*}   
    for all $\by \in \{0,1\}^n$, as in Proposition~\ref{Prop2}. To conclude the proof, note that $(\bbeta \mid \by)$ is distributed as $(\bbeta \mid \bar{\by}_{\bD_\by}>\0)$. As a result, the posterior distribution follows directly by applying Lemma~\ref{suecond} to the SUE random vector $(\bbeta^\top, \bar{\by}_{\bD_\by}^\top)^\top$.
\end{proof}

Proposition~\ref{Prop2} clarifies that SUE distributions possess fundamental conjugacy properties also when combined with specific models for multivariate binary data. This result extends the one recently derived by \citet{zhang23} under model \eqref{probreg} with a specific focus on a skew-elliptical joint distribution for $(\bbeta^\top, \beps^\top)^\top$ which enforces lack of correlation between $\bbeta$ and $\beps$ while inducing an elliptical prior for $\bbeta$. Such a  construction can be derived, under simple linear algebra operations, as a particular case of the general SUE assumption for $(\bbeta^\top, \beps^\top)^\top$ in Proposition~\ref{Prop2},  which crucially allows to recover more general Bayesian formulations, including priors beyond the symmetric elliptical family. This connection with the contribution by \citet{zhang23} is helpful to showcase the practical impact of extending conjugacy to broader classes of models beyond classical multivariate and multinomial probit. Examples~\ref{exsun3}--\ref{exsut1} further stress this aspect with a focus on SUN and SUT distributions.


\begin{exa}[\bf Multivariate unified skew-normal (SUN) conjugacy] \label{exsun3}
\normalfont A direct and natural strategy to adapt the model studied in Example~\ref{exsun} within the binary data context, is to consider $y_i= 1(\bar{y}_i>0)$ with $(\bar{y}_i \mid \bbeta) \sim \SN(\bx_i^\top \bbeta, \sigma^2, \alpha)$, independently for $i \in \{1, \ldots, n\}$, and $\bbeta \sim \mbox{SUN}_{p, q}(\bxi_{\bbeta}, \bOm_{\bbeta}, \bDel_{\bbeta}, \btau_{\bbeta}, \bar{\bGam}_{\bbeta})$. Such a model is studied in the supplementary materials of \citet{anceschi23} as a broad extension of classical probit models to skewed link functions, further facilitating generalizations to multivariate and multinomial binary responses. Leveraging standard properties of multivariate Gaussian cumulative distribution functions, the resulting likelihood in \citet{anceschi23} can be alternatively re-expressed as proportional to  $\Phi_{2n}([(\bD_\by \bX \bbeta)^\top, \0^\top]^\top; \bSig)$, where $\bSig$ is a block matrix partitioned as $\bSig_{11}= \sigma^2 \bI_n$, $\bSig_{21}=\bSig^\top_{12}=\bD_\by \bar{\alpha} \sigma \bI_n$ and $\bSig_{22}=\bI_n$.

To recast the above Bayesian formulation within those studied in Proposition \ref{Prop2}, consider the setting $(\bbeta^\top, \beps^\top)^\top \sim  {\normalfont \SUN}_{p+n,q+n} (\bxi, \bOm, \bDel, \btau, \bar\bGam)$ with parameters partitioned as in Example~\ref{exsun}. This assumption, combined with model \eqref{probreg} and the proof of  Proposition \ref{Prop2}, implies {$(\bbeta^\top, \bar{\by}_{\bD_\by}^\top)^\top \sim {\normalfont  \SUN}_{p+n,q+n}( \bA_\by \bxi^\dag, \bA_\by  \bOm^\dag \bA^\top_\by , \bDel_{\bA_\by} ^\dag, \btau, \bar\bGam),$} with 
\begin{eqnarray*}
   \bA_\by \bxi^\dag = \begin{bmatrix}
        \bxi_\bbeta\\ 
     \bD_\by   \bX \bxi_\bbeta
        \end{bmatrix},\quad
  \bA_\by  \bOm^\dag\bA^\top_\by = \begin{bmatrix}
        \bOm_\bbeta &   \bOm_\bbeta \bX^\top   \bD_\by \\ 
        \bD_\by   \bX   \bOm_\bbeta &     \bD_\by (\bX   \bOm_\bbeta  \bX^\top  {+}\sigma^2 \bI_n)  \bD_\by 
        \end{bmatrix},\quad
    \bDel_{\bA_\by}^\dag = \begin{bmatrix}
        \bDel_{\bbeta} & \0\\
        \bom_\by^{-1} \bD_\by\bX \bom_\bbeta \bDel_\bbeta &   \bom_\by^{-1}\bD_\by \sigma \bar{\alpha} \bI_n
        \end{bmatrix},
\end{eqnarray*}
where  $  \bar{\by}_{\bD_\by}=\bD_\by \bar{\by}$, $\bar{\alpha}=\alpha/(1+\alpha^2)^{1/2}$, and $\bA_\by $ is defined in \eqref{addi_bary_proof}. The above representations, together with the closure properties of SUNs and point (ii) in Lemma~\ref{lemRedundant}, yield
 \begin{eqnarray*}
\bbeta \sim \mbox{SUN}_{p, q}(\bxi_{\bbeta}, \bOm_{\bbeta}, \bDel_{\bbeta}, \btau_{\bbeta}, \bar{\bGam}_{\bbeta}), \qquad \beps \sim \mbox{SUN}_{n, n}(\0, \sigma^2 \bI_n, \bar{\alpha}\bI_n, \0, \bI_n),
 \end{eqnarray*}
thereby recovering the SUN prior and skew-normal noise vector considered in  \citet{anceschi23}. Moreover, by Proposition~\ref{condSUE} and Remark~\ref{rem_sun_ii}, we have $(\bar{\by}_{\bD_\by} \mid \bbeta) \sim  \mbox{SUN}_{n,n}(\bD_\by\bX \bbeta, \sigma^2 \bI_n, \bD_\by \bar{\alpha} \bI_n, \0, \bI_n)$ which implies
 \begin{eqnarray*}
\Pr(\by \mid \bbeta) \propto \displaystyle \Phi_{2n}\left(
    \begin{bmatrix}
        \bD_\by\bX \bbeta\\
        \0
    \end{bmatrix};
    \begin{bmatrix}
        \sigma^2 \bI_n &  \bD_\by \bar{\alpha}\sigma \bI_n\\
         \bD_\by \bar{\alpha}\sigma \bI_n & \bI_n
    \end{bmatrix}\right), 
     \end{eqnarray*}
for all $\by \in \{0,1\}^n$, which coincides again with the likelihood in  \citet{anceschi23}. As discussed above, such a formulation includes several models of direct interest in practice. For instance, setting $\alpha=0$ yields classical probit regression, whereas replacing $\sigma^2 \bI_n$ with a full covariance matrix allows to recover multivariate probit and, for a suitable specification of $\bX$, multinomial probit, under both skewed and non-skewed link functions.  

As discussed in Example~\ref{exsut1} below, these classes of models can be further extended to specific formulations relying on Student's $t$ and skew-$t$ link functions while preserving conjugacy. As such, the practical impact of Proposition \ref{Prop2} goes beyond the  broad class of models studied in  \citet{anceschi23}.
\end{exa}

\begin{exa}[\bf Multivariate unified skew-{\boldsymbol{$t$}} (SUT) conjugacy] \label{exsut1}
\normalfont Corollary~\ref{cor2}  specializes the conjugacy properties derived in Proposition \ref{Prop2} to the specific context of the SUT sub-family presented in Section~\ref{sec:2.22}. 
\begin{corollary}
  Consider model \eqref{probreg}, with  $ (\bbeta^\top, \beps^\top)^\top \sim {\normalfont \SUT}_{p+n,q} (\bxi, \bOm, \bDel, \btau, \bar\bGam, \nu),$ and  parameters $\bxi, \bOm, \bDel$ partitioned as in \eqref{beSUEpart}. Then, the induced prior distribution is $\bbeta \sim {\normalfont \SUT}_{p,q}(\bxi_\bbeta, \bOm_\bbeta, \bDel_\bbeta, \btau, \bar\bGam, \nu)$, whereas the likelihood is equal to 
 \begin{eqnarray*}
\Pr(\by \mid \bbeta) = \frac{\displaystyle T_{n+q}\left(
    \alpha_\bbeta^{-1/2}  \begin{bmatrix}
       \bxi_{\by \mid \bbeta}\\
        \btau_{\by \mid \bbeta}
    \end{bmatrix};
    \begin{bmatrix}
        \bOm_{\by \mid \bbeta} & \bom_{\by \mid \bbeta} \bDel_{\by \mid \bbeta}\\
         \bDel_{\by \mid \bbeta}^\top \bom_{\by \mid \bbeta} & \bar\bGam_{\by \mid \bbeta}
    \end{bmatrix}, \nu+p \right)}
    {\displaystyle T_q(\alpha_\bbeta^{-1/2}  \btau_{\by \mid \bbeta}; \bar\bGam_{\by \mid \bbeta},  \nu+p )}, 
     \end{eqnarray*} 
for all $\by \in \{0,1\}^n,$ with  $\alpha_\bbeta = [\nu + (\bbeta - \bxi_\bbeta)^\top \bOm_\bbeta^{-1} (\bbeta - \bxi_\bbeta)]/(\nu + p)$, and  $ \bxi_{\by \mid \bbeta}$, $ \btau_{\by \mid \bbeta}$, $ \bOm_{\by \mid \bbeta} $, $ \bom_{\by \mid \bbeta} $, $\bDel_{\by \mid \bbeta}$, $\bar\bGam_{\by \mid \bbeta}$ as  in Proposition~\ref{Prop2}. The resulting posterior  is {$(\bbeta \mid \by) \sim {\normalfont \SUT}_{p,n+q} (\bxi_{\bbeta \mid \by},  \bOm_{\bbeta \mid \by}, \bDel_{\bbeta \mid \by}, \btau_{\bbeta \mid \by}, {\bar \bGam}_{\bbeta \mid \by}{,} \nu)$}, with parameters as in  Proposition~\ref{Prop2}. 
  \label{cor2}
\end{corollary}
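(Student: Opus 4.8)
The plan is to obtain Corollary~\ref{cor2} as the Student's-$t$ specialisation of Proposition~\ref{Prop2}, exactly paralleling the way Corollary~\ref{cor1} specialises Proposition~\ref{Prop1}. Concretely, I would set the joint density generator to the multivariate Student's-$t$ one, $g^{(p+n+q)}=t_\nu^{(p+n+q)}$, with the induced marginal and conditional generators as in Section~\ref{sec:2.22}, and then verify that each of the three SUE objects produced by Proposition~\ref{Prop2} collapses to the claimed SUT form. First I would treat the prior: since Proposition~\ref{Prop2}(a) gives $\bbeta \sim \SUE_{p,q}(\bxi_\bbeta,\bOm_\bbeta,\bDel_\bbeta,\btau,\bar\bGam,g^{(p+q)})$ and the Student's-$t$ family is closed under marginalisation with unchanged degrees of freedom, the marginal generator $g^{(p+q)}$ equals $t_\nu^{(p+q)}$; comparing with \eqref{pdfSUT} yields $\bbeta \sim \SUT_{p,q}(\bxi_\bbeta,\bOm_\bbeta,\bDel_\bbeta,\btau,\bar\bGam,\nu)$ directly.

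Next, the likelihood. The expression in Proposition~\ref{Prop2}(b) is a ratio of centred elliptical cumulative distribution functions carrying the conditional generators $g^{(n+q)}_{Q_\bbeta(\bbeta)}$ and $g^{(q)}_{Q_\bbeta(\bbeta)}$ obtained by conditioning on the $p$ coordinates of $\bbeta$ (via Lemma~\ref{condSUE}). The one substantive step is the identity already used to pass from \eqref{pdfSUE} to \eqref{pdfSUT}: for a Student's-$t$ generator the conditional cumulative distribution function $F_k(\bv;\bM,t^{(k)}_{\nu,Q_\bbeta(\bbeta)})$ equals $T_k(\alpha_\bbeta^{-1/2}\bv;\bM,\nu+p)$, with $\alpha_\bbeta=[\nu+Q_\bbeta(\bbeta)]/(\nu+p)$, where $p$ is the dimension conditioned out. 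Applying this to both the $(n+q)$-dimensional numerator and the $q$-dimensional denominator, and noting that the rescaling acts only on the evaluation arguments while leaving the dispersion blocks $\bOm_{\by\mid\bbeta}$, $\bom_{\by\mid\bbeta}\bDel_{\by\mid\bbeta}$ and $\bar\bGam_{\by\mid\bbeta}$ untouched, reproduces the stated $\Pr(\by\mid\bbeta)$ with degrees of freedom $\nu+p$ and arguments $\alpha_\bbeta^{-1/2}[\bxi_{\by\mid\bbeta}^\top,\btau_{\by\mid\bbeta}^\top]^\top$ and $\alpha_\bbeta^{-1/2}\btau_{\by\mid\bbeta}$.

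Finally, the posterior. In Proposition~\ref{Prop2}(c) the law $(\bbeta\mid\by)$ arises from Lemma~\ref{suecond} by conditioning on the truncation event $\bar{\by}_{\bD_\by}>\0$, and that lemma returns a SUE whose generator is the \emph{full} joint generator $g^{(p+n+q)}$, not a conditional one. Specialising $g^{(p+n+q)}=t_\nu^{(p+n+q)}$ therefore leaves the degrees of freedom equal to $\nu$ while the latent dimension grows from $q$ to $n+q$, giving $(\bbeta\mid\by)\sim\SUT_{p,n+q}(\bxi_{\bbeta\mid\by},\bOm_{\bbeta\mid\by},\bDel_{\bbeta\mid\by},\btau_{\bbeta\mid\by},\bar\bGam_{\bbeta\mid\by},\nu)$ with parameters inherited verbatim from Proposition~\ref{Prop2}(c).

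The main obstacle is bookkeeping rather than conceptual: one must carry the $\alpha_\bbeta$-rescaling and the $\nu\mapsto\nu+p$ shift consistently through numerator and denominator of the likelihood, and keep clear that the continuous conditioning producing the likelihood (Lemma~\ref{condSUE}) shifts the degrees of freedom to $\nu+p$, whereas the truncation conditioning producing the posterior (Lemma~\ref{suecond}) preserves the generator and hence leaves the degrees of freedom at $\nu$. As an alternative route, the statement also follows from the SUT-specific conditioning and truncation identities in Proposition~11 of \citet{wang24}, bypassing the explicit generator substitution.
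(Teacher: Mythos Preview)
Your proposal is correct and follows essentially the same approach as the paper, which simply states that it suffices to replace the generic density generators in Proposition~\ref{Prop2} with the Student's-$t$ ones from Section~\ref{sec:2.22}. You have spelled out in more detail the key bookkeeping points---that continuous conditioning via Lemma~\ref{condSUE} shifts the degrees of freedom to $\nu+p$ and introduces the $\alpha_\bbeta^{-1/2}$ rescaling, while truncation conditioning via Lemma~\ref{suecond} preserves the full generator and hence keeps the degrees of freedom at $\nu$---and your alternative reference to Proposition~11 of \citet{wang24} mirrors the analogous remark made for Corollary~\ref{cor1}.
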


\begin{proof}
To prove Corollary~\ref{cor2}, it suffices to replace the generic density generators in Proposition \ref{Prop2} with those of the Student's $t$ distribution provided in Section~\ref{sec:2.22}.
\end{proof}

As for the continuous setting in Example~\ref{exsut}, let us consider special cases of potential practical interest that arise from Corollary~\ref{cor2} under suitable constraints. In particular, define $ (\bbeta^\top, \beps^\top)^\top \sim {\normalfont \SUT}_{p+n,q} (\bxi, \bOm, \bDel, \btau, \bar\bGam, \nu),$ with 
\begin{eqnarray*}
    \bxi = \begin{bmatrix}
        \bxi_\bbeta\\ 
        \0
        \end{bmatrix},\qquad
    \bOm = \begin{bmatrix}
        \bOm_\bbeta & \0 \\ 
        \0 & \bOm_\beps 
        \end{bmatrix},\qquad
    \bDel = \begin{bmatrix}
        \0\\
       \bDel_\beps
        \end{bmatrix},
\end{eqnarray*}
and $\btau=\0$. Recalling Example~\ref{exsut}, such a formulation implies
\begin{eqnarray*}
\bbeta \sim \mathcal{T}_{p}(\bxi_\bbeta, \bOm_\bbeta, \nu), \qquad \beps \sim {\normalfont \SUT}_{n,q} (\0,\bOm_{\beps}, \bDel_{\beps}, \0, {\bar \bGam}, \nu),
\end{eqnarray*}
and hence, under \eqref{probreg}, the resulting model  for $\by$ coincides with a multivariate binary regression having unified skew-$t$ link function, and Student's $t$ prior for $\bbeta$ uncorrelated with the underlying noise vector $\beps$. As a direct consequence of  Corollary~\ref{cor2}, such a formulation yields the likelihood
\begin{eqnarray*}
\Pr(\by \mid \bbeta) \propto \displaystyle T_{n+q}\left(
    \begin{bmatrix}
      \alpha_\bbeta^{-1/2}   \bD_\by\bX \bbeta\\
        \0
    \end{bmatrix};
    \begin{bmatrix}
          \bD_\by \bOm_\beps  \bD_\by & \bom_\beps \bD_\by   \bDel_\beps  \\
       \bDel^\top_\beps \bD_\by \bom_\beps & \bar{\bGam}
    \end{bmatrix}, \nu+p\right), 
    \end{eqnarray*}
for all $\by \in \{0,1\}^n$, which provides a natural extension to more general settings of classical binary regression with $t$ link function; recall the discussion in  Example~\ref{exsut} on the impact of $\alpha_\bbeta^{-1/2}$, relative to the standard linear dependence through $\bX \bbeta$. By setting $ \bDel_{\beps}=\0$ within the above formulation, and recalling again Example~\ref{exsut}, leads to 
\begin{eqnarray*}
\bbeta \sim \mathcal{T}_{p}(\bxi_\bbeta, \bOm_\bbeta, \nu), \qquad  \Pr(\by \mid \bbeta) \propto T_n( \alpha_\bbeta^{-1/2}   \bD_\by\bX \bbeta;  \bD_\by \bOm_\beps  \bD_\by, \nu+p),
\end{eqnarray*}
which yields a closed-form SUT posterior, while further clarifying the direct link with  models implemented in practice.
 \end{exa}

Section~\ref{sec:3.3} concludes our analysis by studying SUE conjugacy in regression models for random vectors comprising both fully-observed and dichotomized data. Such a class combines results in Sections~\ref{sec:3.1}--\ref{sec:3.2} to explore a general set of formulations that extends classical tobit models in both multivariate and skew-elliptical contexts.


\subsection{Conjugacy properties of SUE distributions in multivariate censored models}\label{sec:3.3}
The classes of models studied in Sections \ref{sec:3.1}--\ref{sec:3.2} are designed for data that are either all continuous or all discretized. However, in practice, it is also possible to observe vectors comprising a combination of these two types of data. This is the case, for example, when a continuous variable is fully observed only if its value exceeds a certain threshold. Such a form of censoring is common in several applications and is typically addressed via tobit models and related extensions \citep{amemiya84,chib1992bayes}. Although common implementations rely on Gaussian noise vectors, such a class can be naturally extended to the broader unified skew-elliptical family via the following formulation
\begin{eqnarray}\label{tobreg}
\by=[\bar{y}_1 1(\bar{y}_1 >0), \ldots, \bar{y}_n 1(\bar{y}_n >0) ]^\top, \qquad  \qquad    \bar{\by} = \bX \bbeta + \beps,
\end{eqnarray}
where $1(\cdot)$ is the indicator function, $\bar{\by} = (\bar{y}_1, \ldots, \bar{y}_n)^\top \in \R^n$ and $(\bbeta^\top, \beps^\top)^\top \sim  {\normalfont \SUE}_{p+n,q} (\bxi, \bOm, \bDel, \btau, \bar\bGam, g^{(p+n+q)})$.  Recent research on such a class of Bayesian models \citep{anceschi23} has shown that when $\bbeta$ and $\beps$ have independent SUN distributions, also the posterior $(\bbeta \mid \by)$ is SUN. Proposition~\ref{Prop3} below clarifies that similar conjugacy results can be obtained when the focus is on the whole SUE family; see also Remark~\ref{rem_tot}.

 \begin{prop}\label{Prop3}
Let $\by=(\by^\top_1,\by^\top_0)^\top$ denote a generic realization from model \eqref{tobreg}, where $\by_1 \in \R_+^{n_1}$ corresponds to the vector of fully-observed data and $\by_0 =\0$ comprises the $n_0$ censored ones, with $n_1+n_0=n$. Moreover, assume again $(\bbeta^\top, \beps^\top)^\top \sim  {\normalfont \SUE}_{p+n,q} (\bxi, \bOm, \bDel, \btau, \bar\bGam, g^{(p+n+q)})$ and consider the following partition of the parameters   
\begin{eqnarray}\label{beSUEpart1}
    \bxi = \begin{bmatrix}
        \bxi_\bbeta\\ 
        \bxi_{\beps}
    \end{bmatrix}=
    \begin{bmatrix}
        \bxi_\bbeta\\ 
        \bxi_{\beps_1}\\
        \bxi_{\beps_0}
    \end{bmatrix},\qquad
    \bOm = \begin{bmatrix}
            \bOm_\bbeta & \bOm_{\bbeta \beps} \\ 
            \bOm_{\beps \bbeta} & \bOm_{\beps} 
            \end{bmatrix}= \begin{bmatrix}
            \bOm_\bbeta & \bOm_{\bbeta \beps_1}& \bOm_{\bbeta \beps_0} \\ 
            \bOm_{\beps_1 \bbeta} & \bOm_{\beps_1}  & \bOm_{\beps_1 \beps_0}\\
                        \bOm_{\beps_0 \bbeta} & \bOm_{\beps_0 \beps_1}  & \bOm_{\beps_0}
            \end{bmatrix},\qquad
    \bDel = \begin{bmatrix}
            \bDel_\bbeta \\ 
            \bDel_{\beps}
            \end{bmatrix}
=\begin{bmatrix}
            \bDel_\bbeta \\ 
            \bDel_{\beps_1}\\
                    \bDel_{\beps_0}
            \end{bmatrix}, 
  \end{eqnarray}
where $\beps_1$ and $\beps_0$ comprise the noise terms associated with the two vectors $\by_1$ and $\by_0$ in which the generic realization $\by$ is partitioned.  Then, when $\bar{\by}$ is defined as in \eqref{tobreg}, we have  $(\bbeta^\top, \bar\by^\top)^\top \sim {\normalfont  \SUE}_{p+n_1+n_0,q}(\bxi^\dag, \bOm^\dag, \bDel^\dag, \btau, \bar\bGam, g^{(p+n_1+n_0+q)})$ with
\begin{eqnarray} \label{eqtobitsplit}
 \begin{split}
 \bxi^\dag &= \begin{bmatrix}
        \bxi_\bbeta\\
        \bX_1 \bxi_\bbeta + \bxi_{\beps_1}    \\
        \bX_0 \bxi_\bbeta + \bxi_{\beps_0}    
    \end{bmatrix} =:
 \begin{bmatrix}
        \bxi_\bbeta\\
        \bxi_{\by_1}\\   
        \bxi_{\by_0}
    \end{bmatrix} = \begin{bmatrix}
               \bxi_{-\by_0}\\   
        \bxi_{\by_0}
    \end{bmatrix} 
    , \qquad \qquad 
    \bDel^\dag = \begin{bmatrix}
        \bDel_\bbeta\\
        \bom_{\by_1}^{-1} (\bX_1 \bom_\bbeta \bDel_\bbeta + \bom_{\beps_1} \bDel_{\beps_1})\\
           \bom_{\by_0}^{-1} (\bX_0 \bom_\bbeta \bDel_\bbeta + \bom_{\beps_0} \bDel_{\beps_0})
    \end{bmatrix} = :
  \begin{bmatrix}
        \bDel_\bbeta\\
        \bDel_{\by_1}\\
          \bDel_{\by_0}   
    \end{bmatrix}= \begin{bmatrix}
        \bDel_{-\by_0}\\
          \bDel_{\by_0}   
    \end{bmatrix},\\
     \bOm^\dag& = \begin{bmatrix}
        \bOm_\bbeta & \bOm_\bbeta \bX_1^\top + \bOm_{\bbeta \beps_1} &  \bOm_\bbeta \bX_0^\top + \bOm_{\bbeta \beps_0}\\
         \bX_1 \bOm_\bbeta + \bOm_{\beps_1 \bbeta} & \bX_1 \bOm_\bbeta \bX_1^\top + \bOm_{\beps_1 \bbeta} \bX_1^\top + \bX_1 \bOm_{\bbeta \beps_1} + \bOm_{\beps_1}& \bX_1 \bOm_\bbeta \bX_0^\top + \bOm_{\beps_1 \bbeta} \bX_0^\top + \bX_1 \bOm_{\bbeta \beps_0} + \bOm_{\beps_1 \beps_0}\\
 \bX_0 \bOm_\bbeta + \bOm_{\beps_0 \bbeta} & \bX_0 \bOm_\bbeta \bX_1^\top + \bOm_{\beps_0 \bbeta} \bX_1^\top + \bX_0 \bOm_{\bbeta \beps_1} + \bOm_{\beps_0 \beps_1}& \bX_0 \bOm_\bbeta \bX_0^\top + \bOm_{\beps_0 \bbeta} \bX_0^\top + \bX_0 \bOm_{\bbeta \beps_0} + \bOm_{\beps_0}      
    \end{bmatrix}    \\   &  =:
 \begin{bmatrix}
        \bOm_\bbeta & \bOm_{\bbeta \by_1} & \bOm_{\bbeta \by_0}\\
        \bOm_{\by_1 \bbeta} & \bOm_{\by_1} &   \bOm_{\by_1 \by_0}\\
         \bOm_{\by_0 \bbeta} & \bOm_{\by_0 \by_1} &   \bOm_{\by_0}
    \end{bmatrix} =   \begin{bmatrix}  \bOm_{-\by_0} &   \bOm_{\cdot \by_0 }  \\
     \bOm_{\by_0 \cdot} & \bOm_{\by_0} 
    \end{bmatrix},  \quad \begin{matrix} \bom_{\beps_0} = {\normalfont \mbox{diag}}(\bOm_{\beps_0})^{1/2}, & \bom_{\beps_1} = {\normalfont \mbox{diag}}(\bOm_{\beps_1})^{1/2}, & \bom_\bbeta = {\normalfont \mbox{diag}}(\bOm_\bbeta)^{1/2}, \\ \bom_{\by_0} = {\normalfont \mbox{diag}}(\bOm_{\by_0})^{1/2},
 &   \bom_{\by_1} = {\normalfont \mbox{diag}}(\bOm_{\by_1})^{1/2},
    \end{matrix}
    \end{split}
\end{eqnarray}
where $\bX_1 \in \R^{n_1 \times p}$ and $\bX_0 \in \R^{n_0 \times p}$ denote the two design matrices associated with the two sub-vectors $\bar{\by}_1$ and $\bar{\by}_0$  of $\bar{\by}=~(\bar{\by}^\top_1,\bar{\by}^\top_0)^\top$ in \eqref{tobreg}, which in turn correspond to the partition $\by=(\by^\top_1,\by^\top_0)^\top$. Moreover    
\begin{enumerate}[label=(\alph*)]
    \item {Prior distribution}. $\bbeta \sim {\normalfont \SUE}_{p,q}(\bxi_\bbeta, \bOm_\bbeta, \bDel_\bbeta, \btau, \bar\bGam, g^{(p+q)})$.
    \item {Likelihood}. Let $\beeta_{-\by_0}:=(\bbeta^\top, \by^\top_1)^\top$, then $(\by \mid \bbeta)$ is a multivariate random vector whose density is equal to 
     \begin{eqnarray*}
    \mathcal{p}(\by \mid \bbeta)=\mathcal{p}(\by_1, \by_0 \mid \bbeta)=\mathcal{p}(\bar{\by}_1=\by_1 \mid \bbeta) \cdot \Pr(\bar{\by}_0 \leq \0 \mid \bar{\by}_1=\by_1, \bbeta),
    \end{eqnarray*}         
     where  $\mathcal{p}(\bar{\by}_1=\by_1 \mid \bbeta) $ is the density of the  ${\normalfont \SUE}_{n_1,q}(\bxi_{\by_1 \mid \bbeta}, \bOm_{\by_1  \mid \bbeta}, \bDel_{\by_1  \mid \bbeta}, \btau_{\by_1  \mid \bbeta}, \bar\bGam_{\by_1 \mid \bbeta}, g^{(n_1+q)}_{Q_\bbeta(\bm{\bbeta})}),$ having parameters
 \begin{eqnarray*}
 \begin{split}
&\bxi_{\by_1 \mid \bbeta} = \bxi_{\by_1} + \bOm_{\by_1\bbeta} \bOm_{\bbeta}^{-1} (\bbeta - \bxi_\bbeta), \quad \bOm_{\by_1 \mid \bbeta} = \bOm_{\by_1} - \bOm_{\by_1 \bbeta} \bOm_{\bbeta}^{-1} \bOm_{\bbeta \by_1}, \quad  \bDel_{\by_1\mid \bbeta} = \bom_{\by_1 \mid \bbeta}^{-1} (\bom_{\by_1} \bDel_{\by_1} - \bOm_{\by_1 \bbeta} \bOm_{\bbeta}^{-1} \bom_\bbeta \bDel_\bbeta) \bgam_{\by_1 \mid \bbeta}^{-1}, \\
& \btau_{\by_1 \mid \bbeta} = \bgam_{\by_1 \mid \bbeta}^{-1}[\btau +\bDel_\bbeta^\top \bar\bOm_{\bbeta}^{-1} \bom_\bbeta^{-1} (\bbeta- \bxi_\bbeta)], \quad \bar{\bGam}_{\by_1 \mid \bbeta} =  \bgam_{\by_1 \mid \bbeta}^{-1}(\bar\bGam - \bDel_\bbeta^\top \bar\bOm_{\bbeta}^{-1} \bDel_\bbeta) \bgam_{\by_1 \mid \bbeta}^{-1}, \quad Q_\bbeta(\bbeta) = (\bbeta - \bxi_\bbeta)^\top \bOm_\bbeta^{-1} (\bbeta - \bxi_\bbeta),  
\end{split}
\end{eqnarray*}         
     with $\bom_{\by_1 \mid \bbeta} = {\normalfont \mbox{diag}}(\bOm_{\by_1 \mid \bbeta})^{1/2}$ and $\bgam_{\by_1 \mid \bbeta} = {\normalfont \mbox{diag}}(\bar\bGam - \bDel_\bbeta^\top \bar\bOm_{\bbeta}^{-1} \bDel_\bbeta)^{1/2}$, whereas $ \Pr(\bar{\by}_0 \leq \0 \mid \bar{\by}_1=\by_1, \bbeta)$ corresponds to the cumulative distribution function, evaluated at $\0$, of the {${\normalfont \SUE}_{n_0,q}(\bxi_{\by_0 \mid \cdot}, \bOm_{\by_0 \mid \cdot}, \bDel_{\by_0 \mid \cdot}, \btau_{\by_0 \mid \cdot}, \bar\bGam_{\by_0 \mid \cdot}, g^{(n_0+q)}_{Q_{\beeta_{-\by_0}}(\beeta_{-\by_0})})$} with
     \begin{eqnarray*}
 \begin{split}
&\bxi_{\by_0 \mid \cdot} {=} \ \bxi_{\by_0} + \bOm_{\by_0 \cdot} \bOm_{-\by_0}^{-1} (\beeta_{-\by_0} - \bxi_{-\by_0}), \ \ \ \bOm_{\by_0 \mid \cdot} {=} \ \bOm_{\by_0} - \bOm_{\by_0 \cdot} \bOm_{-\by_0}^{-1} \bOm_{\cdot\by_0}, \ \ \ \bDel_{\by_0 \mid \cdot} {=} \ \bom_{\by_0 \mid \cdot}^{-1} (\bom_{\by_0} \bDel_{\by_0} - \bOm_{\by_0 \cdot } \bOm_{-\by_0}^{-1} \bom_{-\by_0} \bDel_{-\by_0}) \bgam_{\by_0 \mid \cdot}^{-1}, \\
& \btau_{\by_0 \mid \cdot} {=} \ \bgam_{\by_0 \mid \cdot}^{-1}[\btau +\bDel_{-\by_0}^\top \bar\bOm_{-\by_0}^{-1} \bom_{-\by_0}^{-1} (\beeta_{-\by_0} - \bxi_{-\by_0})], \quad \bar{\bGam}_{\by_0 \mid \cdot} {=} \  \bgam_{\by_0 \mid \cdot}^{-1}(\bar\bGam - \bDel_{-\by_0}^\top \bar\bOm_{-\by_0}^{-1} \bDel_{-\by_0}) \bgam_{\by_0 \mid \cdot}^{-1}, 
\end{split}
\end{eqnarray*} 
with $Q_{\beeta_{-\by_0}}(\beeta_{-\by_0}) = (\beeta_{-\by_0} - \bxi_{-\by_0})^\top \bOm_{-\by_0}^{-1} (\beeta_{-\by_0} - \bxi_{-\by_0})$,      $\bom_{\by_0 \mid \cdot} = {\normalfont \mbox{diag}}(\bOm_{\by_0 \mid \cdot})^{1/2}$ and $\bgam_{\by_0 \mid \cdot} = {\normalfont \mbox{diag}}(\bar\bGam - \bDel_{-\by_0}^\top \bar\bOm_{-\by_0}^{-1} \bDel_{-\by_0})^{1/2}$.
     \item {Posterior distribution}.   $(\bbeta \mid \by) \sim {\normalfont \SUE}_{p,n_0+q}(\bxi_{\bbeta\mid \by}, \bOm_{\bbeta \mid \by}, \bDel_{\bbeta \mid \by}, \btau_{\bbeta \mid \by}, \bar\bGam_{\bbeta \mid \by}, g^{(p+n_0+q)}_{Q_{\by_1}(\by_1)}),$
     with the following parameters 
 \begin{eqnarray*}
 \begin{split}
&\bxi_{\bbeta\mid \by}=  \bxi_\bbeta+\bOm_{\bbeta \by_1} \bOm^{-1}_{\by_1}(\by_1-\bxi_{\by_1}), \qquad  \bOm_{\bbeta \mid \by} = \bOm_{\beta} -       \bOm_{\bbeta \by_1}\bOm^{-1}_{\by_1} \bOm_{\by_1 \bbeta}, \qquad \ \bom_{\bbeta \mid \by} = {\normalfont \mbox{diag}}(\bOm_{\bbeta \mid \by})^{1/2}, \\
&  \bDel_{\bbeta \mid \by} = \bom_{\bbeta \mid \by}^{-1} \left[-\bOm_{\bbeta \by_0}+ \bOm_{\bbeta \by_1}\bOm^{-1}_{\by_1}\bOm_{\by_1 \by_0} \  \ \ \ \bom_\bbeta \bDel_{\bbeta}- \bOm_{\bbeta \by_1} \bOm^{-1}_{\by_1} \bom_{\by_1} \bDel_{\by_1}\right] \bgam_{\bbeta \mid \by}^{-1}, \quad Q_{\by_1}(\by_1) = (\by_1 - \bxi_{\by_1})^\top \bOm_{\by_1}^{-1} (\by_1 - \bxi_{\by_1}),  \\
& \btau_{\bbeta \mid \by} =\bgam_{\bbeta \mid \by}^{-1} 
[(- \bxi_{\by_0}- \bOm_{\by_0 \by_1} \bOm^{-1}_{\by_1}(\by_1-\bxi_{\by_1}))^{\intercal} \  \ \ 
 (\btau+\bDel^\top_{\by_1}\bar{\bOm}^{-1}_{\by_1}\bom^{-1}_{\by_1}(\by_1-\bxi_{\by_1}))^{\intercal}]^{\intercal},\\
&  \bar{\bGam}_{\bbeta \mid \by} =\bgam_{\bbeta \mid \by}^{-1}\begin{bmatrix}
\bOm_{\by_0} -       \bOm_{\by_0 \by_1}\bOm^{-1}_{\by_1} \bOm_{\by_1 \by_0 }  & -\bom_{\by_0} \bDel_{\by_0}+ \bOm_{\by_0 \by_1} \bOm^{-1}_{\by_1} \bom_{\by_1} \bDel_{\by_1}\\
(-\bom_{\by_0} \bDel_{\by_0}+ \bOm_{\by_0 \by_1} \bOm^{-1}_{\by_1} \bom_{\by_1} \bDel_{\by_1})^\top &\bar{\bGam}-\bDel^\top_{\by_1}\bar{\bOm}_{\by_1}^{-1}\bDel_{\by_1}
 \end{bmatrix} \bgam_{\bbeta \mid \by}^{-1},
\end{split}
\end{eqnarray*}    
where $\bgam_{\bbeta \mid \by}$ is a block diagonal matrix with blocks $ {\normalfont \mbox{diag}}(\bOm_{\by_0} -       \bOm_{\by_0 \by_1}\bOm^{-1}_{\by_1} \bOm_{\by_1 \by_0 })^{1/2}$ and $ {\normalfont \mbox{diag}}(\bar{\bGam}-\bDel^\top_{\by_1}\bar{\bOm}_{\by_1}^{-1}\bDel_{\by_1})^{1/2}$.
  \end{enumerate} 
\end{prop}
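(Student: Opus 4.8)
The plan is to combine the linear-model machinery of Proposition~\ref{Prop1} with the truncation-conditioning of Lemma~\ref{suecond}, treating the fully-observed block $\by_1$ exactly as in the continuous case of Section~\ref{sec:3.1} and the censored block $\by_0$ as in the binary case of Section~\ref{sec:3.2}. First I would establish the joint law of $(\bbeta^\top, \bar\by^\top)^\top$: writing $(\bbeta^\top, \bar\by^\top)^\top = \bA (\bbeta^\top, \beps^\top)^\top$ with the same lower-block-triangular matrix $\bA$ as in Proposition~\ref{Prop1} (now with $\bX = (\bX_1^\top, \bX_0^\top)^\top$), the closure under linear combinations in Lemma~\ref{linSUE} immediately yields the SUE parameters $\bxi^\dag$, $\bOm^\dag$, $\bDel^\dag$ displayed in~\eqref{eqtobitsplit}. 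This is precisely Proposition~\ref{Prop1} with $\bar\by$ partitioned into its observed and censored sub-vectors. Point (a), the prior for $\bbeta$, then follows directly from the closure under marginalization in Lemma~\ref{linSUE}.

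For the likelihood in point (b), I would exploit the factorization $\mathcal{p}(\by \mid \bbeta) = \mathcal{p}(\bar\by_1 = \by_1 \mid \bbeta) \cdot \Pr(\bar\by_0 \leq \0 \mid \bar\by_1 = \by_1, \bbeta)$, which reflects that each observed entry contributes a density while each censored entry contributes the probability of falling below the threshold. The first factor is obtained by marginalizing $\bar\by_0$ out of the joint (Lemma~\ref{linSUE}) and then conditioning $\bar\by_1$ on $\bbeta$ (Lemma~\ref{condSUE}), reproducing the continuous-case parameters of Proposition~\ref{Prop1}(b) with $n$ replaced by $n_1$ and conditional generator $g^{(n_1+q)}_{Q_\bbeta(\bbeta)}$. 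For the second factor, I would set $\beeta_{-\by_0} = (\bbeta^\top, \by_1^\top)^\top$, condition $\bar\by_0$ on $\beeta_{-\by_0}$ via Lemma~\ref{condSUE} to obtain a $\SUE_{n_0, q}$ law with conditional generator $g^{(n_0+q)}_{Q_{\beeta_{-\by_0}}(\beeta_{-\by_0})}$, and then read off $\Pr(\bar\by_0 \leq \0 \mid \bar\by_1 = \by_1, \bbeta)$ as the value at $\0$ of the corresponding SUE cumulative distribution function via~\eqref{cdfSUE}.

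The posterior in point (c) is the technical core, since $(\bbeta \mid \by)$ equals $(\bbeta \mid \bar\by_1 = \by_1, \bar\by_0 \leq \0)$ and thus mixes a conditioning on a realization with a conditioning on a truncation event. My plan is a two-stage argument. First I would condition the sub-vector $(\bbeta^\top, \bar\by_0^\top)^\top$ on $\bar\by_1 = \by_1$ using Lemma~\ref{condSUE}, producing a $\SUE_{p+n_0, q}$ distribution with conditional density generator $g^{(p+n_0+q)}_{Q_{\by_1}(\by_1)}$. I would then condition $\bbeta$ on the truncation event $\bar\by_0 \leq \0$ using Lemma~\ref{suecond}, which raises the latent dimension from $q$ to $n_0 + q$ while leaving the conditional density generator unchanged; this accounts precisely for the $g^{(p+n_0+q)}_{Q_{\by_1}(\by_1)}$ and the latent dimension $n_0+q$ appearing in the stated posterior.

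The main obstacle I anticipate is this final composition of the two parameter maps. Lemma~\ref{suecond} stacks an $\bar\bOm$-type block alongside the inherited $\bDel$-type block in its shape matrix, and concatenates a correlation block with the inherited $\bar\bGam$; matching these against the already-conditioned quantities coming out of Lemma~\ref{condSUE} requires careful block-matrix bookkeeping. In particular, the event $\bar\by_0 \leq \0$ introduces sign changes that must be tracked so as to reproduce the off-diagonal terms $-\bom_{\by_0}\bDel_{\by_0} + \bOm_{\by_0\by_1}\bOm_{\by_1}^{-1}\bom_{\by_1}\bDel_{\by_1}$ in $\bar\bGam_{\bbeta \mid \by}$ and $-\bOm_{\bbeta\by_0} + \bOm_{\bbeta\by_1}\bOm_{\by_1}^{-1}\bOm_{\by_1\by_0}$ in $\bDel_{\bbeta \mid \by}$ with the correct signs. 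Everything else reduces to the routine algebra of Schur complements already used in Lemmas~\ref{condSUE}--\ref{suecond}.
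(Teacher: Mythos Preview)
Your plan is correct and matches the paper's own argument almost exactly: the joint law of $(\bbeta^\top,\bar\by^\top)^\top$ via Lemma~\ref{linSUE}, the prior via marginalization, and the likelihood factorization via Lemmas~\ref{linSUE}--\ref{condSUE} are all handled identically. For the posterior (c), the paper also first conditions $(\bbeta^\top,\bar\by_0^\top)^\top$ on $\bar\by_1=\by_1$ via Lemma~\ref{condSUE}, but then, rather than invoking Lemma~\ref{suecond} directly, it writes $\Pr(\bbeta\leq\bb\mid\by)$ as the ratio of two SUE cumulative distribution functions using~\eqref{cdfSUE} and recognizes the result as a SUE CDF --- this is precisely the content of Lemma~\ref{suecond} re-derived in place, so your route via Lemma~\ref{suecond} (together with Remark~1 to handle $\bar\by_0\leq\0$ rather than $>\0$) is the same argument in more compact form.
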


\begin{proof}
The proof of \eqref{eqtobitsplit} follows directly from the closure under linear combinations of SUE derived in Lemma~\ref{linSUE}, after noticing that  $(\bbeta^\top, \bar\by^\top)^\top =(\bbeta^\top, \bar\by_1^\top,\bar\by_0^\top)^\top=\bA(\bbeta^\top, \beps^\top)^\top= \bA(\bbeta^\top, \beps_1^\top,\beps_0^\top)^\top$ with $\bA$ a block matrix having row blocks $\bA_{1 \cdot}=[\bI_p \ \0 \ \ \0]$, $\bA_{2 \cdot}=[\bX_1 \ \ \bI_{n_1} \ \ \0]$ and $\bA_{3 \cdot}=[\bX_0 \ \ \0 \ \ \bI_{n_0}]$. Under~\eqref{eqtobitsplit}, the SUE prior distribution for $\bbeta$ in Proposition~\ref{Prop3} is a direct consequence of the closure under marginalization stated in Lemma~\ref{linSUE} for the SUE class. 

As for the likelihood, notice that  $\mathcal{p}(\by \mid \bbeta)=\mathcal{p}(\by_1, \by_0 \mid \bbeta)=\mathcal{p}(\by_1 \mid \bbeta)\mathcal{p}(\by_0 \mid \by_1, \bbeta)$. Under model \eqref{tobreg}, $\mathcal{p}(\by_1 \mid \bbeta)$ is equal to $\mathcal{p}(\bar{\by}_1=\by_1 \mid \bbeta)$, which in turn coincides with the density, evaluated at $\by_1$, of $(\bar{\by}_1 \mid \bbeta)$. Therefore, by applying to the SUE random vector $(\bbeta^\top, \bar\by^\top)^\top$ --- with parameters as in~\eqref{eqtobitsplit}  --- the closure under marginalization and conditioning presented in Lemmas \ref{linSUE}--\ref{condSUE},  it directly follows that $(\bar{\by}_1 \mid \bbeta)$ is  a SUE having parameters  as in Proposition~\ref{Prop3}. For what concerns the second term $\mathcal{p}(\by_0 \mid \by_1, \bbeta)$, notice that, under model  \eqref{tobreg}, $\mathcal{p}(\by_0 \mid \by_1, \bbeta)=\Pr(\bar{\by}_0 \leq \0 \mid \bar{\by}_1=\by_1, \bbeta)$, where $(\bar{\by}_0 \mid \bar{\by}_1=\by_1, \bbeta)$ is, again, a SUE whose parameters are defined in Proposition~\ref{Prop3}. Such a latter result follows directly from the closure under linear combinations and conditioning properties in Lemmas  \ref{linSUE}--\ref{condSUE}, applied to the SUE random vector $(\bbeta^\top, \bar\by^\top)^\top=(\bbeta^\top, \bar\by_1^\top, \bar\by_0^\top)^\top$ partitioned as $(\beeta^\top_{-\by_0}, \bar\by_0^\top)^\top$.

To conclude the proof, notice that
\begin{eqnarray*}
 \Pr(\bbeta \leq \bb \mid \by)= \Pr(\bbeta \leq \bb \mid \bar{\by}_1=\by_1, \bar{\by}_0 \leq \0)=\Pr(\bbeta \leq \bb,\bar{\by}_0 \leq \0 \mid \bar{\by}_1=\by_1)/\Pr(\bar{\by}_0 \leq \0 \mid \bar{\by}_1=\by_1).
 \end{eqnarray*}
By Lemma~\ref{condSUE}, the numerator in the above expression coincides with the cumulative distribution function, evaluated at $(\bb^\top,\0^\top)^\top$, of the random vector having  ${\normalfont  \SUE}_{p+n_0,q}(\bxi_{\tiny\mbox{nu}}, \bOm_{\tiny\mbox{nu}}, \bDel_{\tiny\mbox{nu}}, \btau_{\tiny\mbox{nu}}, \bar\bGam_{\tiny\mbox{nu}}, g_{Q_{\by_1}(\by_1)}^{(p+n_0+q)})$ distribution, with parameters
\begin{eqnarray*}
\begin{split}
&   \bxi_{\tiny\mbox{nu}} {=} \ \begin{bmatrix}
        \bxi_\bbeta+\bOm_{\bbeta \by_1} \bOm^{-1}_{\by_1}(\by_1-\bxi_{\by_1})\\ 
      \bxi_{\by_0}{+} \bOm_{\by_0 \by_1} \bOm^{-1}_{\by_1}(\by_1-\bxi_{\by_1})
        \end{bmatrix} =:\begin{bmatrix}   \bxi_{\tiny\mbox{nu} \bbeta} \\ \bxi_{\tiny\mbox{nu} \by_0}
        \end{bmatrix},\ \  
 \bOm_{\tiny\mbox{nu}} {=} \ \begin{bmatrix}
 \bOm_{\bbeta} -       \bOm_{\bbeta \by_1}\bOm^{-1}_{\by_1} \bOm_{\by_1 \bbeta} & \bOm_{\bbeta \by_0}- \bOm_{\bbeta \by_1}\bOm^{-1}_{\by_1}\bOm_{\by_1 \by_0}  \\ 
    \bOm_{ \by_0 \bbeta}-     \bOm_{\by_0 \by_1}\bOm^{-1}_{\by_1}\bOm_{\by_1 \bbeta}  &   \bOm_{\by_0} -       \bOm_{\by_0 \by_1}\bOm^{-1}_{\by_1} \bOm_{\by_1 \by_0 } 
        \end{bmatrix} =:\begin{bmatrix}  \bOm_{\tiny\mbox{nu}\bbeta} & \bOm_{\tiny\mbox{nu}\bbeta \by_0} \\
         \bOm_{\tiny\mbox{nu} \by_0 \bbeta} & \bOm_{\tiny\mbox{nu}\by_0}
        \end{bmatrix},\\         
    &\bDel_{\tiny\mbox{nu}} {=} \ \begin{bmatrix}
        \bom^{-1}_{\tiny\mbox{nu} \tiny \bbeta} (\bom_\bbeta \bDel_{\bbeta}- \bOm_{\bbeta \by_1} \bOm^{-1}_{\by_1} \bom_{\by_1} \bDel_{\by_1}) \bgam^{-1}_{\tiny\mbox{nu}} \\
        \bom^{-1}_{\tiny\mbox{nu} \by_0}    ( \bom_{\by_0} \bDel_{\by_0}- \bOm_{\by_0 \by_1} \bOm^{-1}_{\by_1} \bom_{\by_1} \bDel_{\by_1})  \bgam^{-1}_{\tiny\mbox{nu}}
        \end{bmatrix}=:\begin{bmatrix}   \bDel_{\tiny\mbox{nu} \bbeta} \\ \bDel_{\tiny\mbox{nu} \by_0}
        \end{bmatrix}, \quad \begin{matrix}  \qquad \btau_{\tiny\mbox{nu}}= \bgam^{-1}_{\tiny\mbox{nu}}[\btau+\bDel^\top_{\by_1}\bar{\bOm}^{-1}_{\by_1}\bom^{-1}_{\by_1}(\by_1-\bxi_{\by_1})], \\ \bar\bGam_{\tiny\mbox{nu}}= \bgam^{-1}_{\tiny\mbox{nu}}(\bar{\bGam}-\bDel^\top_{\by_1}\bar{\bOm}_{\by_1}^{-1}\bDel_{\by_1})  \bgam^{-1}_{\tiny\mbox{nu}}, \end{matrix} 
        \end{split} 
\end{eqnarray*}
where $Q_{\by_1}(\by_1)=(\by_1-\bxi_{\by_1})^\top \bOm^{-1}_{\by_1}(\by_1-\bxi_{\by_1})$, $\bom_{\tiny\mbox{nu} \bbeta}   = {\normalfont \mbox{diag}}(  \bOm_{\tiny\mbox{nu}\bbeta} )^{1/2}$,  $\bom_{\tiny\mbox{nu} \by_0}   = {\normalfont \mbox{diag}}(  \bOm_{\tiny\mbox{nu} \by_0} )^{1/2}$ and~$\bgam_{\tiny\mbox{nu}}  = {\normalfont \mbox{diag}}(\bar{\bGam}-\bDel^\top_{\by_1}\bar{\bOm}_{\by_1}^{-1}\bDel_{\by_1})^{1/2}$. Similarly, the denominator in the expression for $\Pr(\bbeta \leq \bb \mid \by)$ coincides with the cumulative distribution function, evaluated at $\0$, of a SUE random vector. By the closure under marginalization of SUE variables, the distribution of this vector can be directly derived from the one above to obtain a  {${\normalfont  \SUE}_{n_0,q}(\bxi_{\tiny\mbox{de}}, \bOm_{\tiny\mbox{de}}, \bDel_{\tiny\mbox{de}}, \btau_{\tiny\mbox{de}}, \bar\bGam_{\tiny\mbox{de}}, g_{Q_{\by_1}(\by_1)}^{(n_0+q)})$} with parameters
\begin{eqnarray*}
&   \bxi_{\tiny\mbox{de}} =   \bxi_{\tiny\mbox{nu} \by_0},\quad 
 \bOm_{\tiny\mbox{de}} =    \bOm_{\tiny\mbox{nu}\by_0},\quad 
    \bDel_{\tiny\mbox{de}} =    \bDel_{\tiny\mbox{nu} \by_0}, \quad 
     \btau_{\tiny\mbox{de}}= \btau_{\tiny\mbox{nu}}, \quad \bar\bGam_{\tiny\mbox{de}}= \bar\bGam_{\tiny\mbox{nu}}, \quad Q_{\by_1}(\by_1)=(\by_1-\bxi_{\by_1})^\top \bOm^{-1}_{\by_1}(\by_1-\bxi_{\by_1}).
\end{eqnarray*}
Combining the above results and recalling the expression for the SUE cumulative distribution function in \eqref{cdfSUE}, we have
\begin{eqnarray*}
\begin{split}
 \Pr(\bbeta \leq \bb \mid \by)&=  \frac{F_{p+n_0+q}\left(
 \begin{bmatrix}
     \bb - \bxi_{\tiny\mbox{nu} \bbeta}\\
      - \bxi_{\tiny\mbox{nu} \by_0} \\
     \btau_{\tiny\mbox{nu}}
 \end{bmatrix};
 \begin{bmatrix}
     \bOm_{\tiny\mbox{nu}\bbeta} &   \bOm_{\tiny\mbox{nu}\bbeta \by_0}&  - \bom_{\tiny\mbox{nu}\bbeta}  \bDel_{\tiny\mbox{nu}\bbeta} \\
      \bOm_{\tiny\mbox{nu} \by_0\bbeta} & \bOm_{\tiny\mbox{nu}\by_0}  &- \bom_{\tiny\mbox{nu}\by_0}  \bDel_{\tiny\mbox{nu}\by_0}\\
       -\bDel_{\tiny\mbox{nu}\bbeta} ^\top \bom_{\tiny\mbox{nu}\bbeta}  &  -\bDel_{\tiny\mbox{nu}\by_0} ^\top \bom_{\tiny\mbox{nu}\by_0}    &\bar\bGam_{\tiny\mbox{nu}}
 \end{bmatrix}
 , g_{Q_{\by_1}(\by_1)}^{(p+n_0+q)}\right)F_q\left(\btau_{\tiny\mbox{nu}}; \bar\bGam_{\tiny\mbox{nu}}, g_{Q_{\by_1}(\by_1)}^{(q)}\right)}{F_q\left(\btau_{\tiny\mbox{nu}}; \bar\bGam_{\tiny\mbox{nu}}, g_{Q_{\by_1}(\by_1)}^{(q)}\right)F_{n_0+q}\left(
 \begin{bmatrix}
      - \bxi_{\tiny\mbox{nu} \by_0}\\
      \btau_{\tiny\mbox{nu}}\\
 \end{bmatrix};
 \begin{bmatrix}
     \bOm_{\tiny\mbox{nu}\by_0}  & - \bom_{\tiny\mbox{nu}\by_0}  \bDel_{\tiny\mbox{nu}\by_0}\\
     -\bDel_{\tiny\mbox{nu}\by_0} ^\top \bom_{\tiny\mbox{nu}\by_0}    & \bar\bGam_{\tiny\mbox{nu}} \\
 \end{bmatrix}
 , g_{Q_{\by_1}(\by_1)}^{(n_0+q)}\right)}\\
 &=\frac{F_{p+n_0+q}\left(
 \begin{bmatrix}
     \bb - \bxi_{\bbeta \mid \by}\\
     \btau_{\bbeta \mid \by}
 \end{bmatrix};
 \begin{bmatrix}
     \bOm_{\bbeta \mid \by} &   - \bom_{\bbeta \mid \by}  \bDel_{\bbeta \mid \by} \\
       -\bDel_{\bbeta \mid \by}^\top \bom_{\bbeta \mid \by}    &\bar\bGam_{\bbeta \mid \by}
 \end{bmatrix}
 , g_{Q_{\by_1}(\by_1)}^{(p+n_0+q)}\right)}{F_{n_0+q}\left(
\btau_{\bbeta \mid \by};\bar\bGam_{\bbeta \mid \by}
 , g_{Q_{\by_1}(\by_1)}^{(n_0+q)}\right)},
 \end{split}
\end{eqnarray*}
which coincides with the cumulative distribution functions of the {${\normalfont \SUE}_{p,n_0+q}(\bxi_{\bbeta\mid \by}, \bOm_{\bbeta \mid \by}, \bDel_{\bbeta \mid \by}, \btau_{\bbeta \mid \by}, \bar\bGam_{\bbeta \mid \by}, g^{(p+n_0+q)}_{Q_{\by_1}(\by_1)})$} posterior for $\bbeta$ whose parameters, after suitable standardizations based on Lemma~\ref{stand}, are defined as in  Proposition~\ref{Prop3}.
\end{proof}

Proposition~\ref{Prop3} states a general result that establishes SUE conjugacy for a broad class of models whose likelihood factorizes as the product of multivariate elliptical densities and cumulative distribution functions. These likelihoods substantially extend classical tobit representations to multivariate and skewed contexts while covering a broader family of noise terms beyond the Gaussian ones. As clarified in Examples \ref{exsun5}--\ref{exsut3}, albeit general, such a result allows the recovery of Bayesian formulations of potential interest in practice while ensuring conjugacy under these representations.

\begin{exa}[\bf Multivariate unified skew-normal (SUN) conjugacy] \label{exsun5}
\normalfont Classical tobit models consider $y_i=\bar{y}_i 1(\bar{y}_i>0)$ with $(\bar{y}_i \mid \bbeta) \sim \mbox{N}(\bx_i^\top \bbeta, \sigma^2)$, independently for $i \in \{1, \ldots, n\}$. A natural extension that incorporates skewness within these representations replaces $\mbox{N}(\bx_i^\top \bbeta, \sigma^2)$ with $\SN(\bx_i^\top \bbeta, \sigma^2, \alpha)$ \citep{hutton11}. Under this setting, which includes the classical and routinely-implemented tobit formulation when $\alpha=0$, \citet{anceschi23} have shown that SUN priors for $\bbeta$, i.e., $\bbeta \sim \mbox{SUN}_{p, q}(\bxi_{\bbeta}, \bOm_{\bbeta}, \bDel_{\bbeta}, \btau_{\bbeta}, \bar{\bGam}_{\bbeta})$, yield posterior distributions within the same class. Such a result can be derived as a very special case of Proposition~\ref{Prop3} under a Gaussian density generator and suitable constraints on the parameters.

To clarify the above point, assume again the case $(\bbeta^\top, \beps^\top)^\top \sim  {\normalfont \SUN}_{p+n,q+n} (\bxi, \bOm, \bDel, \btau, \bar\bGam)$ with parameters partitioned as in Example~\ref{exsun}. Moreover, consider the  partitioning  {$(\by^\top_1,\by^\top_0)^\top$} defined in Proposition \ref{Prop3} for a generic realization $\by$ from model \eqref{tobreg}.  This construction, combined with the results in  \eqref{eqtobitsplit} and the proof of Proposition \ref{Prop3}, implies that {$(\bbeta^\top, \bar\by^\top)^\top \sim {\normalfont  \SUN}_{p+n_1+n_0,q+n_1+n_0}(\bxi^\dag, \bOm^\dag, \bDel^\dag, \btau, \bar\bGam),$} with
\begin{eqnarray*}
 \begin{split}
 \bxi^\dag = \begin{bmatrix}
        \bxi_\bbeta\\
        \bX_1 \bxi_\bbeta    \\
        \bX_0 \bxi_\bbeta   
    \end{bmatrix},  \ \
    \bDel^\dag = \begin{bmatrix}
        \bDel_\bbeta & \0 & \0\\
        \bom_{\by_1}^{-1} \bX_1 \bom_\bbeta \bDel_\bbeta &   \bom_{\by_1}^{-1} \sigma \bar{\alpha} \bI_{n_1} & \0\\\
           \bom_{\by_0}^{-1} \bX_0 \bom_\bbeta \bDel_\bbeta & \0 &  \bom_{\by_0}^{-1} \sigma \bar{\alpha} \bI_{n_0}
    \end{bmatrix}, \ \
     \bOm^\dag = \begin{bmatrix}
        \bOm_\bbeta & \bOm_\bbeta \bX_1^\top &  \bOm_\bbeta \bX_0^\top\\
         \bX_1 \bOm_\bbeta& \bX_1 \bOm_\bbeta \bX_1^\top  + \sigma^2 \bI_{n_1}& \bX_1 \bOm_\bbeta \bX_0^\top\\
 \bX_0 \bOm_\bbeta & \bX_0 \bOm_\bbeta \bX_1^\top & \bX_0 \bOm_\bbeta \bX_0^\top + \sigma^2 \bI_{n_0}     
    \end{bmatrix},       \end{split}
\end{eqnarray*}
where  $\bar{\alpha}=\alpha/(1+\alpha^2)^{1/2}$,  $\bom_\bbeta=\mbox{diag}(\bOm_\bbeta)^{1/2}$, $\bom_{\by_1}=\mbox{diag}( \bX_1 \bOm_\bbeta \bX_1^\top  + \sigma^2 \bI_{n_1})^{1/2}$ and $\bom_{\by_0}=\mbox{diag}( \bX_0 \bOm_\bbeta \bX_0^\top  + \sigma^2 \bI_{n_0})^{1/2}$. These results, combined with the closure properties of SUNs and point (ii) in Lemma~\ref{lemRedundant}, yield
 \begin{eqnarray*}
\bbeta \sim \mbox{SUN}_{p, q}(\bxi_{\bbeta}, \bOm_{\bbeta}, \bDel_{\bbeta}, \btau_{\bbeta}, \bar{\bGam}_{\bbeta}), \qquad \beps \sim \mbox{SUN}_{n, n}(\0, \sigma^2 \bI_n, \bar{\alpha}\bI_n, \0, \bI_n),
 \end{eqnarray*}
that coincide with the SUN prior and skew-normal noise vector for the extension of the tobit model analyzed in the supplementary materials of  \citet{anceschi23}. In addition, leveraging Lemma~\ref{condSUE} and Remark~\ref{rem_sun_ii}, we have $(\bar{\by}_{1} \mid \bbeta) \sim  \mbox{SUN}_{n_1,n_1}(\bX_1 \bbeta, \sigma^2 \bI_{n_1}, \bar{\alpha} \bI_{n_1}, \0, \bI_{n_1})$. Similarly, by the properties of Gaussian density generators, it follows that under the above constraints for the SUN parameters, $(\bar{\by}_{0} \perp \bar{\by}_{1}  \mid \bbeta)$. Therefore, $\mathcal{p}(\bar{\by}_{0} \mid \bar{\by}_{1}=\by_1,  \bbeta)=\mathcal{p}(\bar{\by}_{0} \mid  \bbeta)$, and hence, by the same derivations that led to the  SUN for $(\bar{\by}_{1} \mid \bbeta) $, we obtain $(\bar{\by}_{0} {\mid} \bbeta) \sim  \mbox{SUN}_{n_0,n_0}(\bX_0 \bbeta, \sigma^2 \bI_{n_0}, \bar{\alpha} \bI_{n_0}, \0, \bI_{n_0})$. Combining these results with Proposition \ref{Prop3}, Lemma~\ref{condSUE} and Remark~\ref{rem_sun_ii}, and recalling the expression for the SUN density and cumulative distribution function (see e.g., \citet{arellano22}), leads to
 \begin{eqnarray*}
\mathcal{p}(\by_1, \by_0 \mid \bbeta) \propto \phi_{n_1}(\by_1-\bX_1 \bbeta; \sigma^2 \bI_{n_1})\Phi_{n_1+2n_0}\left(
    \begin{bmatrix}
    \alpha(\by_1-\bX_1 \bbeta)\\
  -      \bX_0 \bbeta\\
        \0
    \end{bmatrix};
    \begin{bmatrix}
            \sigma^2 \bI_{n_1} &  \0 & \0\\
      \0 & \sigma^2 \bI_{n_0} &-   \bar{\alpha}\sigma \bI_{n_0}\\
        \0 &- \bar{\alpha}\sigma \bI_{n_0} & \bI_{n_0}
    \end{bmatrix}\right),
 \end{eqnarray*}
which coincides again with the likelihood in  \citet{anceschi23}. As for the models explored in Examples \ref{exsun} and \ref{exsun3}, the above representation also includes several formulations of direct interest in practice. In particular, setting $\alpha= 0$ yields classical tobit regression, whereas replacing $\sigma^2 \bI_n$ with a full covariance matrix allows to recover multivariate extensions of tobit models, including those based on skewed link functions.  

Example~\ref{exsut3} concludes our analysis by clarifying that similar, but yet-unexplored, conjugacy properties can be established also when the focus is on models for Student's $t$ or skew-$t$ censored observations.
\end{exa}

\begin{exa}[\bf Multivariate unified skew-{\boldsymbol{$t$}} (SUT) conjugacy] \label{exsut3}
\normalfont Conjugacy properties for generalizations of tobit models relying on Student's $t$ or skew-$t$ censored observations are currently lacking. As stated within Corollary~\ref{cor3},  these properties can be derived as special cases of Proposition \ref{Prop3} under  Student's $t$ density generators. 
\begin{corollary}
  Consider model \eqref{tobreg}, with  $ (\bbeta^\top, \beps^\top)^\top \sim {\normalfont \SUT}_{p+n,q} (\bxi, \bOm, \bDel, \btau, \bar\bGam, \nu),$ and  parameters $\bxi, \bOm, \bDel$ partitioned as in \eqref{beSUEpart1}. Then, the induced prior distribution is $\bbeta \sim {\normalfont \SUT}_{p,q}(\bxi_\bbeta, \bOm_\bbeta, \bDel_\bbeta, \btau, \bar\bGam, \nu)$, whereas the likelihood is equal to
 \begin{eqnarray*}
\mathcal{p}(\by \mid \bbeta)=\mathcal{p}(\by_1, \by_0 \mid \bbeta)=\mathcal{p}(\bar{\by}_1=\by_1 \mid \bbeta) \cdot \Pr(\bar{\by}_0 \leq \0 \mid \bar{\by}_1=\by_1, \bbeta),
   \end{eqnarray*}
where $\mathcal{p}(\bar{\by}_1=\by_1 \mid \bbeta) $ denotes the density of the  ${\normalfont \SUT}_{n_1,q}(\bxi_{\by_1 \mid \bbeta}, \alpha_{\bbeta}\bOm_{\by_1  \mid \bbeta}, \bDel_{\by_1  \mid \bbeta}, \alpha^{-1/2}_{\bbeta}\btau_{\by_1  \mid \bbeta}, \bar\bGam_{\by_1 \mid \bbeta}, \nu+p),$ with parameters as in Proposition \ref{Prop3}, while  $\Pr(\bar{\by}_0 \leq \0 \mid \bar{\by}_1=\by_1, \bbeta)$ corresponds to the cumulative distribution function, evaluated at $\0$, of the {${\normalfont \SUT}_{n_0,q}(\bxi_{\by_0 \mid \cdot}, \alpha_{\beeta_{-\by_0}}\bOm_{\by_0 \mid \cdot}, \bDel_{\by_0 \mid \cdot}, \alpha^{-1/2}_{\beeta_{-\by_0}}\btau_{\by_0 \mid \cdot}, \bar\bGam_{\by_0 \mid \cdot}, \nu+p+n_1)$} having parameters as in Proposition \ref{Prop3}. In these expressions  $\alpha_{\bbeta}=[\nu+Q_{\bbeta}(\bbeta)]/(\nu+p)$ and $\alpha_{\beeta_{-\by_0}}=[\nu+Q_{\beeta_{-\by_0}}(\beeta_{-\by_0})]/(\nu+n_1+p)$, with $Q_{\bbeta}(\bbeta)$ and $Q_{\beeta_{-\by_0}}(\beeta_{-\by_0})$ defined again in Proposition \ref{Prop3}.  Finally,  the resulting posterior distribution for $\bbeta$ is {$(\bbeta \mid \by) \sim {\normalfont \SUT}_{p,n_0+q} (\bxi_{\bbeta \mid \by}, \alpha_{\by_1} \bOm_{\bbeta \mid \by}, \bDel_{\bbeta \mid \by}, \alpha_{\by_1}^{-1/2} \btau_{\bbeta \mid \by}, {\bar \bGam}_{\bbeta \mid \by}, \nu + n_1)$} with $\alpha_{\by_1}=[\nu + Q_{\by_1}(\by_1)]/(\nu+n_1)$, and the remaining quantities defined as in  Proposition \ref{Prop3}.
  \label{cor3}
\end{corollary}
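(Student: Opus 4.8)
The plan is to specialize Proposition~\ref{Prop3} to the SUT family in exactly the manner used for Corollaries~\ref{cor1} and~\ref{cor2}: substitute the Student's $t$ density generators $g^{(m)}=t_\nu^{(m)}$ of Section~\ref{sec:2.22} into every SUE object appearing in Proposition~\ref{Prop3}, and then read off the resulting degrees of freedom together with the induced scale and truncation rescalings. The only non-routine ingredient is the behaviour of the \emph{conditional} density generator under the Student's $t$ choice, so I would isolate this first.

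First I would record the standard identity that, for $g^{(m)}=t_\nu^{(m)}$, the conditional generator of Lemma~\ref{condSUE},
\[
g^{(m_i+q)}_{Q_j(\bz_j)}(u)=\frac{t_\nu^{(m+q)}[Q_j(\bz_j)+u]}{t_\nu^{(m_j)}[Q_j(\bz_j)]},
\]
is again a Student's $t$ generator, but with degrees of freedom inflated by the dimension $m_j$ of the conditioning block and with its argument rescaled by $\alpha=[\nu+Q_j(\bz_j)]/(\nu+m_j)$; concretely $g^{(m_i+q)}_{Q_j(\bz_j)}(u)\propto t_{\nu+m_j}^{(m_i+q)}(u/\alpha)$. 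This follows at once from the factorization $1+(Q_j+u)/\nu=(1+Q_j/\nu)[1+u/(\nu+Q_j)]$ together with the definition of $c(\nu,m)$. Since rescaling the argument of an elliptical generator by $1/\alpha$ is equivalent to scaling the dispersion matrix by $\alpha$ (up to the constant $\alpha^{-m/2}$ absorbed into normalization), this is precisely the mechanism that converts each conditional SUE of Proposition~\ref{Prop3} into a SUT with inflated degrees of freedom, an $\alpha$-scaled scale matrix on the leading $t$ factor, and an $\alpha^{-1/2}$-scaled truncation vector in the latent factor, matching the form of the SUT density in~\eqref{pdfSUT}.

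With this identity in hand, the three statements follow by tracking which block is conditioned upon at each step. The prior for $\bbeta$ is a marginal of $(\bbeta^\top,\bar\by^\top)^\top$, and marginalization (Lemma~\ref{linSUE}) leaves $t_\nu$ unchanged, giving $\SUT_{p,q}(\ldots,\nu)$. For the likelihood factor $\mathcal{p}(\bar{\by}_1=\by_1\mid\bbeta)$ the conditioning block is $\bbeta$, of dimension $p$, so $\nu\to\nu+p$ with $\alpha_\bbeta=[\nu+Q_\bbeta(\bbeta)]/(\nu+p)$; for the censoring factor $\Pr(\bar{\by}_0\leq\0\mid\bar{\by}_1=\by_1,\bbeta)$ the conditioning block is $\beeta_{-\by_0}=(\bbeta^\top,\by_1^\top)^\top$, of dimension $p+n_1$, giving $\nu\to\nu+p+n_1$ with $\alpha_{\beeta_{-\by_0}}$. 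Finally, the posterior $(\bbeta\mid\by)$ conditions $\bbeta$ on the fully-observed $\by_1$ (dimension $n_1$, hence $\nu\to\nu+n_1$ and $\alpha_{\by_1}$), while the censoring event on $\by_0$ supplies the additional $n_0$ latent dimensions, yielding $\SUT_{p,n_0+q}(\ldots,\nu+n_1)$. An equivalent route, paralleling the alternative proof of Corollary~\ref{cor1}, is to invoke directly the SUT conditioning and selection properties established in \citet{wang24}.

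The main obstacle is purely bookkeeping: one must verify that the degrees-of-freedom increments and the scalars $\alpha_\bbeta$, $\alpha_{\beeta_{-\by_0}}$, $\alpha_{\by_1}$ attach to the correct factor of the factorized likelihood and posterior, and that the powers $\alpha$ and $\alpha^{-1/2}$ land consistently on the scale matrices and truncation vectors, respectively, while leaving the shape matrices $\bDel_{\cdot\mid\cdot}$ and the correlation matrices $\bar\bGam_{\cdot\mid\cdot}$ untouched. No analytic difficulty arises beyond the conditional-generator identity above.
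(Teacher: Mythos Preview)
Your proposal is correct and follows essentially the same approach as the paper: specialize Proposition~\ref{Prop3} to the SUT sub-family by replacing the generic density generators with the Student's $t$ ones and invoking the conditional-generator identity from Section~\ref{sec:2.22}. The paper's proof is terser (a single sentence pointing to this substitution), whereas you spell out the conditional-generator rescaling and the dimension-tracking explicitly, but the underlying strategy is identical.
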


\begin{proof}
The proof of Corollary~\ref{cor3} requires replacing the generic density generators in Proposition \ref{Prop3} with those of the Student's $t$, and then leveraging the properties of such generators described in  Section~\ref{sec:2.22}.
\end{proof}

Let us conclude by highlighting some special cases of Corollary~\ref{cor3} that yield priors and likelihoods of potential interest in practice. To this end, similarly to Examples~\ref{exsut} and \ref{exsut1}, consider again $ (\bbeta^\top, \beps^\top)^\top \sim {\normalfont \SUT}_{p+n,q} (\bxi, \bOm, \bDel, \btau, \bar\bGam, \nu),$ with $n=n_1+n_0$ and parameters partitioned as
\begin{eqnarray*}
    \bxi = \begin{bmatrix}
        \bxi_\bbeta\\ 
        \0
        \end{bmatrix}=\begin{bmatrix}
        \bxi_\bbeta\\ 
        \0\\
        \0
        \end{bmatrix},\qquad
    \bOm = \begin{bmatrix}
        \bOm_\bbeta & \0 \\ 
        \0 & \bOm_\beps 
        \end{bmatrix}=\begin{bmatrix}
        \bOm_\bbeta & \0 &\0 \\ 
        \0 & \bOm_{\beps_1} &\0\\
        \0 & \0 &\bOm_{\beps_0}
        \end{bmatrix},\qquad
    \bDel = \begin{bmatrix}
        \0\\
       \bDel_\beps
        \end{bmatrix}=\begin{bmatrix}
        \0\\
       \bDel_{\beps_1}\\
        \bDel_{\beps_0}
        \end{bmatrix},
\end{eqnarray*}
and $\btau=\0$. Recalling Examples~\ref{exsut} and \ref{exsut1}, such a construction implies
\begin{eqnarray*}
\bbeta \sim \mathcal{T}_{p}(\bxi_\bbeta, \bOm_\bbeta, \nu), \qquad \beps \sim {\normalfont \SUT}_{n,q} (\0,\bOm_{\beps}, \bDel_{\beps}, \0, {\bar \bGam}, \nu),
\end{eqnarray*}
and therefore, under \eqref{tobreg}, the model underlying a generic observation $\by=(\bar{\by}^\top_1,\bar{\by}^\top_0)^\top$ coincides with a multivariate extension of tobit regression having unified skew-$t$ error terms, and Student's $t$ prior for $\bbeta$ uncorrelated with the noise vector $\beps$. Applying Corollary~\ref{cor3} to such a formulation yields the likelihood
\begin{eqnarray*}
\mathcal{p}(\by \mid \bbeta)=\mathcal{p}(\by_1, \by_0 \mid \bbeta)=\mathcal{p}(\bar{\by}_1=\by_1 \mid \bbeta) \cdot \Pr(\bar{\by}_0 \leq \0 \mid \bar{\by}_1=\by_1, \bbeta),
 \end{eqnarray*}
 where $\mathcal{p}(\bar{\by}_1=\by_1 \mid \bbeta)$ coincides with the density function, computed at $\by_1$, of the ${\normalfont \SUT}_{n_1,q}(\bX_1 \bbeta, \alpha_{\bbeta}\bOm_{\beps_1}, \bDel_{\beps_1},\0, \bar\bGam, \nu+p)$ variable,  whereas the quantity $\Pr(\bar{\by}_0 \leq \0 \mid \bar{\by}_1=\by_1, \bbeta)$ corresponds to the cumulative distribution function, evaluated at $\0$, of the  {${\normalfont \SUT}_{n_0,q}(\bX_0 \bbeta, \alpha_{\beeta_{-\by_0}}\bOm_{\beps_0}, \bDel_{\beps_0} \bgam^{-1}_0, \alpha^{-1/2}_{\beeta_{-\by_0}} \bgam^{-1}_0\bDel_{\beps_1}^\top\bar{\bOm}_{\beps_1}^{-1} \bom_{\beps_1}^{-1}(\by_1-\bX_1\bbeta), \bgam_0^{-1}(\bar\bGam- \bDel_{\beps_1}^\top\bar{\bOm}_{\beps_1}^{-1} \bDel_{\beps_1})\bgam_0^{-1}, \nu+p+n_1)$}, with $\bgam_0$ defined as $\bgam_0=\mbox{diag}(\bar\bGam- \bDel_{\beps_1}^\top\bar{\bOm}_{\beps_1}^{-1} \bDel_{\beps_1})^{1/2}$. This result clarifies that classical multivariate tobit representations admit extensions to suitable skew-$t$ formulations while preserving conjugacy. Imposing additional constraints within such a formulation further highlights the practical potential of our contribution. For example, setting $ \bDel_{\beps}=\0$ in the above formulation, and recalling again Examples~\ref{exsut} and \ref{exsut1}, yields 
 \begin{eqnarray*}
 \bbeta \sim \mathcal{T}_{p}(\bxi_\bbeta, \bOm_\bbeta, \nu), \qquad  \mathcal{p}(\by \mid \bbeta)=\mathcal{p}(\bar{\by}_1=\by_1 \mid \bbeta) \cdot \Pr(\bar{\by}_0 \leq \0 \mid \bar{\by}_1=\by_1, \bbeta),
 \end{eqnarray*}
 with $\mathcal{p}(\bar{\by}_1=\by_1 \mid \bbeta)$ denoting the density of  $\mathcal{T}_{n_1}(\bX_1 \bbeta, \alpha_{\bbeta}\bOm_{\beps_1}, \nu+p)$  evaluated at $\by_1$, whereas  $\Pr(\bar{\by}_0 \leq \0 \mid \bar{\by}_1=\by_1, \bbeta)$ is a {$\mathcal{T}_{n_0}(\bX_0 \bbeta, \alpha_{\beeta_{-\by_0}}\bOm_{\beps_0}, \nu+p+n_1)$} cumulative distribution function computed at $\0$. As a consequence of Corollary~\ref{cor3}, the induced posterior distribution for $\bbeta$ is still within the SUT family.
 \end{exa}
 

\section{Conclusions\label{sec:4}}

This article proves that SUE distributions have important conjugacy properties when combined with broad classes of likelihoods that generalize classical probit, tobit, multinomial probit, and linear models in several directions. These generalizations include multivariate representations based on general elliptical noise terms and allow for asymmetric representation relying on unified skew-elliptical extensions. Our results leverage available and newly-derived closure properties of the SUE family to prove that priors within this class yield again SUE posterior distributions when combined with the likelihood of the models mentioned above, under the classical Bayes rule. Recalling Propositions~\ref{Prop1}--\ref{Prop3}, these results are technically derived by starting from a joint SUE distribution for the parameters and the observed data. Such a proof technique is not meant to provide a different perspective on the standard specification of a prior and a likelihood in Bayesian statistics. Rather, it provides a convenient strategy that facilitates the derivation, within the SUE class, of meaningful priors and likelihoods yielding closed-form SUE posterior distributions. 

More specifically, Examples \ref{exsun}--\ref{exsut3} clarify that our results include Bayesian models of direct interest in practice, such as those based on multivariate Gaussian or Student's $t$ formulations, along with the corresponding skewed extensions. In this respect, an interesting  direction would be to specialize  Propositions~\ref{Prop1}--\ref{Prop3} to other  SUE sub-families, e.g., those based on Cauchy or logistic density generators. This  can be  accomplished by replacing the generic density generators in  Propositions~\ref{Prop1}--\ref{Prop3}, with those yielding the sub-family investigated. These extensions further motivate advancements in the study of other relevant  SUE sub-families to derive results and properties similar to those characterizing SUN \citep[e.g.,][]{arellano22} and SUT \citep[e.g.,][]{wang24} distributions. Particularly impactful, within our context, would be the derivation of additive stochastic representations as those obtained for  SUNs and  SUTs. Advancements along these lines would facilitate i.i.d.\ sampling under any SUE posterior, thus enlarging the class of models and priors that allow for tractable Bayesian inference. The recent additive stochastic representations derived by \citet{yin2024stochastic} for general skew-elliptical distributions provide a promising advancement in this direction, which also suggests that related results could be derived even for the wider SUE family. Similarly, expanding the available strategies for the efficient evaluation of the moments of SUE distributions in, e.g., \eqref{meanSUE}--\eqref{varSUE}, would further facilitate Bayesian inference leveraging the conjugacy results derived in the present article. Current contributions \citep[e.g.,][]{arismendi2017multivariate,moran2021new,galarza22,morales2022moments,valeriano2023moments} provide important results along these lines which motivate future research to showcase the computational advantages and the practical impact of these solutions when the focus is on Bayesian inference under the newly-derived SUE posterior distributions.

Finally, we shall emphasize that the SUE family can be itself rephrased as a particular case of selection elliptical distributions \citep{arellano06b} arising from even  more general conditioning mechanisms. As such, it would be interesting to expand the conjugacy properties derived in this article for SUE distributions to the broader selection elliptical  family. Let us conclude by highlighting that not all the priors and likelihoods implied by the general results in Propositions~\ref{Prop1}--\ref{Prop3} have direct practical applicability. Nonetheless, from a theoretical perspective, also these instances are of interest in expanding the analysis of the probabilistic properties of the SUE family. Moreover, although conjugacy is a desirable property, it is important to emphasize that those priors and likelihoods in the SUE family that do not yield SUE posteriors, can still allow for Bayesian inference leveraging, e.g., MCMC methods and deterministic approximations.
 

\section*{Acknowledgments}
Maicon J. Karling and Marc G. Genton  acknowledge the support by the King Abdullah University of Science and Technology (KAUST) in Saudi Arabia.


\end{document}